\documentclass[11pt,letterpaper,oneside]{article}

\usepackage[T1]{fontenc}
\usepackage[utf8]{inputenc}
\usepackage{lmodern}
\usepackage{microtype}

\usepackage[letterpaper,margin=1in]{geometry}

\usepackage{amsmath,amssymb,amsfonts,amsthm,mathtools,mathrsfs,bm}
\usepackage{array,booktabs,multirow,makecell}
\usepackage{enumitem}
\usepackage{graphicx}
\usepackage{tikz}
\usepackage{algorithm}
\usepackage{algorithmic}

\usepackage{nameref}

\usepackage[numbers,sort&compress]{natbib}

\usepackage[colorlinks=true,
  linkcolor=blue,
  citecolor=blue,
  urlcolor=blue,
  bookmarksdepth=2]{hyperref}
\usepackage[nameinlink,capitalise]{cleveref}

\usepackage{comment}
\usepackage{todonotes}

\usepackage{graphicx}
\usepackage{tikz}
\usetikzlibrary{cd,arrows,arrows.meta,positioning,shapes,fit,calc}
\tikzset{>=stealth}
\tikzcdset{arrow style=tikz}
\tikzset{link/.style={column sep=1.8cm,row sep=0.16cm}}

\AtBeginDocument{%
	\def\MR#1{}
}
\usepackage{amsmath}
\usepackage{mathdots}
\usepackage{amssymb}
\usepackage{amsthm}
\usepackage{here}
\usepackage{amscd} 
\usepackage{mathrsfs}
\usepackage{mathtools}
\usepackage{scalefnt}
\usepackage{url}
\usepackage{aliascnt}
\usepackage[nameinlink,capitalise]{cleveref}
\theoremstyle{plain}
\newtheorem{thm}{Theorem}[section]

\newaliascnt{lem}{thm}
\newtheorem{lem}[lem]{Lemma}
\aliascntresetthe{lem}

\newaliascnt{cor}{thm}
\newtheorem{cor}[cor]{Corollary}
\aliascntresetthe{cor}

\newaliascnt{prop}{thm}
\newtheorem{prop}[prop]{Proposition}
\aliascntresetthe{prop}

\newaliascnt{conj}{thm}

\aliascntresetthe{conj}

\theoremstyle{definition}
\newaliascnt{ass}{thm}

\aliascntresetthe{ass}

\newaliascnt{rem}{thm}
\newtheorem{rem}[rem]{Remark}
\aliascntresetthe{rem}

\newaliascnt{claim}{thm}

\aliascntresetthe{claim}

\newaliascnt{defn}{thm}

\aliascntresetthe{defn}

\newaliascnt{prob}{thm}
\newtheorem{prob}[prob]{Problem}
\aliascntresetthe{prob}

\newaliascnt{que}{thm}

\aliascntresetthe{que}

\newaliascnt{ex}{thm}

\aliascntresetthe{ex}
\numberwithin{equation}{section}

\def\F{{\mathbb F}}
\def\Q{{\mathbb Q}}
\def\R{{\mathbb R}}
\def\Z{{\mathbb Z}}
\def\C{{\mathbb C}}

\def\GL{\mathop{\mathrm{GL}}\nolimits}

\def\Tr{\mathop{\mathrm{Tr}}\nolimits}

\def\rank{\mathop{\mathrm{rank}}\nolimits}

\def\diag{\mathop{\mathrm{diag}}\nolimits}

\def\poly{\mathop{\mathrm{poly}}}
\newcommand{\SharpP}{\ensuremath{\#\mathrm{P}}}
\newcommand{\SP}{\operatorname{SP}}

\def\ord{\mathrm{ord}}

\def\exp{\mathop{\mathrm{exp}}\nolimits}

\newcommand{\bbE}{\mathbb{E}}

\newcommand{\calA}{\mathcal{A}}

\mathtoolsset{showonlyrefs=true}

\allowdisplaybreaks[4]

\newcommand{\defeq}{\vcentcolon=}

\usepackage{comment}

\newcommand{\eps}{\varepsilon}
\newcommand{\abs}[1]{\left|#1\right|}

\newcommand{\X}{\widehat{\F_q^\times}} % multiplicative character group
\newcommand{\triv}{\mathbf{1}}         % trivial character
\newcommand{\Gauss}{G}                 % Gauss sum symbol

\usepackage[T1]{fontenc}
\usepackage[utf8]{inputenc}
\usepackage{lmodern}
\usepackage{microtype}
\usepackage{amsmath,amssymb,amsfonts,amsthm,mathtools}
\usepackage{hyperref}
\usepackage{aliascnt}
\usepackage[nameinlink,capitalise]{cleveref}

\crefname{thm}{Theorem}{Theorems}
\Crefname{thm}{Theorem}{Theorems}
\crefname{lem}{Lemma}{Lemmas}
\Crefname{lem}{Lemma}{Lemmas}
\crefname{cor}{Corollary}{Corollaries}
\Crefname{cor}{Corollary}{Corollaries}
\crefname{prop}{Proposition}{Propositions}
\Crefname{prop}{Proposition}{Propositions}
\crefname{ass}{Assumption}{Assumptions}
\Crefname{ass}{Assumption}{Assumptions}
\crefname{rem}{Remark}{Remarks}
\Crefname{rem}{Remark}{Remarks}
\crefname{claim}{Claim}{Claims}
\Crefname{claim}{Claim}{Claims}
\crefname{prob}{Problem}{Problems}
\Crefname{prob}{Problem}{Problems}
\crefname{defn}{Definition}{Definitions}
\Crefname{defn}{Definition}{Definitions}

\crefname{appendix}{appendix}{appendices}
\Crefname{appendix}{Appendix}{Appendices}

\usepackage{enumitem}
\usepackage{algorithm}
\usepackage{algorithmic}
\usepackage{booktabs}

\hypersetup{
  colorlinks=true,
  linkcolor=blue,
  citecolor=blue,
  urlcolor=blue
}

\usepackage{braket}
\usepackage{authblk}

\title{Quantum Algorithms and Hardness for Point-Count Approximation over Finite Fields}

\author{Yota Maeda*}
\author{Hiroshi Yano*}

\affil{Toyota Central R\&D Labs., Inc.}

\date{}

\begin{document}

\maketitle
\begingroup
\renewcommand{\thefootnote}{}
\footnotetext{\texttt{\{yota.maeda, hyano\}@mosk.tytlabs.co.jp}, * denotes equal contribution}
\endgroup
\thispagestyle{empty}
\setcounter{page}{0}

\begin{abstract}
We study the approximation of the number of solutions of Laurent polynomials over finite fields.  For a Laurent polynomial
\[
 f(x)=\sum_{j=1}^{s}a_jx^{u_j}\in \F_q[x_1^{\pm1},\ldots,x_n^{\pm1}],
\]
let $U$ be its augmented support matrix whose columns are $(1,u_j)$ with rank $\rho$ and $N(f) := \# \{x\in (\F_q^\times)^n \mid f(x)=0\}$ be its torus point count.
Our first main result is a quantum algorithm that outputs $\widehat{N}(f)$ satisfying 
\[
|\widehat{N}(f) - N(f)| \le \eps q^{n+s/2-\rho}
\]
with success probability $1-\delta$.
Provided that $\rho$ and $\|U\|_\infty$ are bounded, the algorithm runs in both classical bit and quantum gate complexity $\poly(n, s, \log q, 1/\eps, \log(1/\delta))$.
It provides finer resolution than relative-error approximations in general settings.
To the best of our knowledge, in the explicit finite-field input
model considered here, no previous algorithm achieves this
additive accuracy with running time polynomial in $\log q$.
Van Dam (arXiv:quant-ph/0405081) conjectured the existence of such an algorithm under the assumption of an oracle reflecting the algebraic properties of the polynomial. 
In contrast, by exploiting a point-counting formula derived from character sums over finite fields, we develop an alternative approach that efficiently approximates the number of points without assuming the existence of such an oracle.
As a second main result, we prove that the same approximation problem becomes $\#$P-hard under randomized polynomial-time Turing reductions when the support matrix $U$ varies freely as part of the input. 
Thus, taken together, our results clarify how the effectiveness of the quantum approach depends on the tradeoff between the accuracy scale and the support parameters of the input polynomial.
\end{abstract}

\clearpage

\section{Introduction}\label{sec:intro}
Let $q=p^r$ be a prime power and let $\F_q$ be the finite field of $q$
elements. 
A
classical geometric baseline of the number of solutions of a polynomial is the Lang--Weil estimate \cite{LangWeil1954}; if
$X$ is a geometrically integral projective variety of dimension $n-1$, then the number of $\F_q$-points $\#X(\F_q)$ on $X$ satisfies
\begin{align}
    \label{eq:LW}
      \#X(\F_q)=q^{n-1}+O(q^{n-3/2}).
\end{align}
Based on this observation, algorithmic research has advanced to extract more detailed information.  Wan asked for algorithms computing the number of points of smooth projective hypersurfaces with complexity polynomial for 
$\log q$ \cite[Problem 4.2]{wan2008algorithmic}.
Except for the case where the number of variables is small, most notably, the case of curves \cite{schoof1995counting,Kedlaya2001}, no algorithm is currently known that counts the number of solutions of polynomials over $\F_q$ with bit complexity of a polynomial in $\log q$. Complexity-theoretic studies have shown that counting the number of solutions to polynomial equations over finite fields is $\#$P-complete \cite{vonZurGathenKarpinskiShparlinski1997,Milovanov2019}, suggesting that exact point counting is computationally intractable in general structure-free settings.

\noindent
\textbf{Our results.}
In this paper, we investigate both the potential of quantum algorithms and the hardness of problems in approximate point counting over finite fields. From an algorithmic viewpoint, we present a quantum algorithm that approximates the number of torus points with bit and gate complexity $\poly(\log q)$ for a bounded family of polynomials. From the complexity-theoretic viewpoint, we show that even this approximation problem is $\#$P-hard under randomized polynomial-time Turing reductions in general for a uniform input. 

To this end, we introduce the approximation problem in \Cref{prob:torus-approx}, which can be viewed as an intermediate problem between \eqref{eq:LW} and Wan's benchmark problem \cite[Problem~4.2]{wan2008algorithmic}.
Let $n\ge 1$.
A Laurent polynomial on the $n$-dimensional torus is
\[
f(x)\;=\;\sum_{j=1}^s a_j x^{u_j}\in \F_q[x_1^{\pm1},\dots,x_n^{\pm1}],
\qquad a_j\in\F_q^\times,\ u_j\in\Z^n.
\]
Let $U$ be the augmented support matrix whose columns are $(1,u_j)$, and let
$\rho=\rank(U)$.
We write
\[
N(f) := \# \{x\in (\F_q^\times)^n \mid f(x)=0\}.
\]
As the character expansion formula (\Cref{prop:main_character-formula}) shows that the fluctuation term of $N(f)$, that is the second term of \eqref{eq:main-character}, is of order  $q^{n+s/2-\rho}$, the following problem arises naturally.
\begin{prob}
\label{prob:torus-approx}
Given a finite field $\F_q$, integers $n,s\ge 1$, a list $\bigl((u_1,a_1),\ldots,(u_s,a_s)\bigr)$ where
$u_j\in\Z^n$ and $a_j\in\F_q^\times$, which defines a Laurent polynomial $f$, and  parameters $0<\eps,\delta<1$,
 output a number $\widehat N(f)\in\Q$ such that
\[
\Pr\!\left[
  \left|\widehat N(f)-N(f)\right|
  \le \eps q^{n+s/2-\rho}
\right]
\ge 1-\delta.
\]
\end{prob}

Our first main result is the following.
\begin{thm}\label{thm:intro}
Fix constants $r_0,C_0 >0$. On the promise class of inputs satisfying $\rho \leq r_0, \|U\|_\infty\leq C_0$, \Cref{prob:torus-approx} can be solved by a quantum algorithm with a classical bit and quantum gate complexity of $\poly
_{r_0,C_0}(n,s,\log q,1/\eps, \log 1/\delta)$.
\end{thm}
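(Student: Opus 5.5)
The plan is to turn the character expansion of \Cref{prop:main_character-formula} into a Monte Carlo average of products of normalized Gauss sums. Each such product is bounded in modulus by $1$, and a quantum subroutine can estimate its phase efficiently.

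\textbf{Step 1: the character expansion.} I expect the formula to read
\[
N(f)=\frac{(q-1)^n}{q}+\frac{(q-1)^{n+1}}{q\,(q-1)^{s}}\sum_{k\in K}\prod_{j=1}^s \Gauss(\bar\chi_{k_j})\,\chi_{k_j}(a_j),
\]
with the following notation:
\begin{itemize}
\item $\chi_k=\omega^k$ for a fixed generator $\omega$ of $\X$;
\item $K=\{k\in(\Z/(q-1))^s : Uk\equiv 0 \bmod q-1\}$, the kernel of the augmented support matrix acting on exponent vectors.
\end{itemize}
The derivation is standard. Write $N(f)=q^{-1}\sum_{t\in\F_q}\sum_{x}\psi(tf(x))$ and expand each $\psi(ta_jx^{u_j})$ by Gauss sums. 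Then orthogonality in $x$ and in $t$ forces the conditions $\sum_j k_ju_j\equiv0$ and $\sum_j k_j\equiv0$ modulo $q-1$.

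\textbf{Step 2: size of the kernel.} Compute the Smith normal form $U=PDQ$ over $\Z$ with invariant factors $d_1,\dots,d_\rho$. This takes polynomial time, since $U$ is an $(n+1)\times s$ integer matrix. It gives $|K|=(q-1)^{s-\rho}\prod_i\gcd(d_i,q-1)$. Each $d_i$ divides a $\rho\times\rho$ minor of $U$, so Hadamard's inequality gives $\prod_i \gcd(d_i,q-1)\le (\sqrt{r_0}\,C_0)^{r_0}=O_{r_0,C_0}(1)$.

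\textbf{Step 3: normalization.} Write $g(k)=\prod_j \Gauss(\bar\chi_{k_j})\chi_{k_j}(a_j)/q^{s/2}$. Using $|\Gauss(\chi)|=\sqrt q$ for $\chi\ne\triv$ and $\Gauss(\triv)=-1$, we get $|g(k)|\le 1$. The fluctuation term is then $c\, q^{n+s/2-\rho}\,\bbE_{k\in K}[g(k)]$, where $c=O_{r_0,C_0}(1)$ is explicitly computable. So it suffices to estimate $\bbE_{k\in K}[g(k)]$ to additive error $\eps/c$ with probability $1-\delta$.

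\textbf{Step 4: sampling.} Uniform samples from $K$ are obtained from the Smith form: choose $y\in(\Z/(q-1))^s$ with $y_i$ uniform in the subgroup $\{y_i : d_iy_i\equiv0\}$ for $i\le\rho$ and free for $i>\rho$, then set $k=Q^{-1}y$. This needs only gcd computations, not a factorization of $q-1$.

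\textbf{Step 5: estimating $g(k)$.} Given $k$, the factors $\chi_{k_j}(a_j)$ require a discrete logarithm of $a_j$ in $\F_q^\times$, which Shor's algorithm provides. The normalized Gauss sums are handled by the van Dam--Seroussi quantum Gauss sum estimation. This prepares the multiplicative character state $\sum_x\chi(x)\ket{x}$ and applies the additive QFT over $\F_q$, so that $\Gauss(\bar\chi)/\sqrt q$ appears as an eigenphase. Phase estimation then recovers that phase to precision $\eps/(4cs)$ in $\poly(\log q,s/\eps)$ gates. Trivial characters are treated classically.

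Multiplying the $s$ estimates gives an approximation of $g(k)$ with error at most $\eps/(4c)$, provided each estimate succeeds with probability $1-\delta/\poly$. The real and imaginary parts of $g(k)$ are bounded by $1$, so Hoeffding's inequality with $O(c^2\eps^{-2}\log(1/\delta))$ samples yields the required average. A union bound over all failure events completes the argument. Altogether the cost is $\poly_{r_0,C_0}(n,s,\log q,1/\eps,\log 1/\delta)$, and the classical bit complexity is dominated by the Smith normal form and modular arithmetic.

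\textbf{Main obstacle.} I expect the hardest step to be Step 5. The Gauss sum routine must return the phase with a controlled success probability and without a global-phase ambiguity, so that products over $s$ characters remain accurate. I would first try to obtain the exact eigenphase via the van Dam--Seroussi construction and amplify its success probability by median-of-means. Only if that fails would I fall back to a Hadamard test on the product unitary directly.
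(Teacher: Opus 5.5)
Your argument is correct, and Steps 1--4 match the paper: the character expansion, $\#K_U(q)=(q-1)^{s-\rho}B_U(q)$, the bound $B_U(q)\le\rho^{\rho/2}\|U\|_\infty^{\rho}$, normalization to a bounded average, and Smith-normal-form sampling. One small correction in Step 2: the fact you need is that $d_1\cdots d_\rho$ divides every $\rho\times\rho$ minor, which is what gives exponent $r_0$.

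The genuine difference is Step 5. The paper never estimates an individual Gauss-sum phase. It builds one $k$-dependent unitary $U_k$ from the blocks $D^{-k_j}(F_{\F_q}\otimes I)D^{-k_j}$ and multiplications by $\lambda_j=a_j^{-k_j}$. The fixed eigenvector $\ket{\chi^0}\ket{\chi^1}$ of $U_k$ has eigenvalue equal to the whole product $\Phi(k)$. The paper then spends a single Hadamard-test shot per sample, producing a $\pm q^{-z(k)/2}$-valued variable whose mean is $\operatorname{Re}T(k)$ up to bias $\eps_{\rm ph}+2\eps_{\rm seed}$. Hoeffding then controls shot noise and sampling noise at once. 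This buys three things:
\begin{itemize}
\item no discrete logarithms, since the values $\chi^{k_j}(a_j)$ arrive by phase kickback;
\item uniform handling of $k_j$ with $\gcd(k_j,q-1)>1$, because the controlled division against $\ket{\chi^1}$ produces $\ket{\chi^{-k_j}}$ for every $k_j$, whereas a power map applied to $\ket{\chi^1}$ would not;
\item a per-sample cost that is only polylogarithmic in $1/\eps$.
\end{itemize}

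Your route makes each factor deterministic with high probability, at the price of order $s/\eps$ controlled applications of the Gauss-sum unitary per factor. There is no evident way to fast-forward its powers, so the total cost is of order $s^2\eps^{-3}$ rather than $s\eps^{-2}$, up to polylogarithmic factors. You also need Shor discrete logarithms for the $a_j$ and a union bound over roughly $Ms$ amplified phase estimations. This is still polynomial, so the theorem follows, and your version has the merit of using van Dam--Seroussi as a black box.

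To complete it, add two points. First, state that you find an explicit generator of $\F_q^\times$, for example by factoring $q-1$ with Shor and testing random elements as the paper does; the same concrete $\chi$ must be used for the Gauss sums and for the values $\chi_{k_j}(a_j)$. Second, note that $F_{\F_q}$ alone sends $\ket{\chi}$ to a multiple of $\ket{\bar\chi}$, so an eigen-relation requires composing with inversion $x\mapsto x^{-1}$ or using the paper's $D^{-k_j}$ sandwich.
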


\begin{cor}\label{cor:intro}
For every fixed augmented support matrix $U$, \Cref{prob:torus-approx} can be
solved by a quantum algorithm with a classical bit and quantum gate complexity of 
$
  \poly_U\!\left(\log q, 1/\eps,\log1/\delta\right).
$
\end{cor}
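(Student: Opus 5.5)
The corollary will follow from \Cref{thm:intro} by specialization. Fix an augmented support matrix $U$, with columns $(1,u_j)$ for $j=1,\dots,s$. The whole input shape is then frozen: the number of variables $n$ (one less than the number of rows of $U$), the number of monomials $s$ (the number of columns), the rank $\rho=\rank(U)$, and the entry bound $\|U\|_\infty$ are all constants determined by $U$. I will set $r_0:=\rho$ and $C_0:=\|U\|_\infty$. With these choices, every instance of \Cref{prob:torus-approx} whose support matrix is $U$ lies in the promise class of \Cref{thm:intro}, whatever the field size $q$ and whatever the coefficients $a_j\in\F_q^\times$. If $\|U\|_\infty=0$, which forces $n=0$ or all $u_j=0$, I replace $C_0$ by $1$ so that the constant is strictly positive, as the theorem requires.

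Next I apply \Cref{thm:intro} to this promise class. It gives a quantum algorithm with classical bit and quantum gate complexity bounded by a polynomial $P_{r_0,C_0}(n,s,\log q,1/\eps,\log 1/\delta)$. Since $n$ and $s$ are now fixed by $U$, substituting them leaves a polynomial in the remaining quantities $\log q$, $1/\eps$ and $\log 1/\delta$. Its degree and coefficients depend only on $r_0$, $C_0$, $n$ and $s$, and hence only on $U$. This is exactly the bound $\poly_U(\log q,1/\eps,\log 1/\delta)$. The input encoding adds nothing beyond this: the coefficients $a_j$ take $O(s\log q)$ bits, and the exponents are fixed, so they contribute $O_U(1)$ bits.

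The only point that needs care is bookkeeping on the uniformity of the hidden constants. I have to check that the polynomial in \Cref{thm:intro} depends on the instance only through $r_0$, $C_0$, $n$, $s$, $\log q$, $\eps$ and $\delta$, and that the success probability $1-\delta$ and the additive error $\eps q^{n+s/2-\rho}$ carry over unchanged. Both hold as stated in the theorem, so I expect no genuine obstacle here; the real content sits entirely in \Cref{thm:intro}.
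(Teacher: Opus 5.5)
Your proposal is correct and follows the paper's route: the paper notes that for fixed $U$ one has $B_U(q)\le\prod_i d_i=O_U(1)$ and reads the bound off \Cref{thm:end-to-end}, which is exactly what specializing \Cref{thm:intro} with $r_0=\rho$, $C_0=\|U\|_\infty$ amounts to. Your separate case $\|U\|_\infty=0$ cannot occur and can be dropped: the top row of the augmented matrix $U$ consists of ones, so $\|U\|_\infty\ge 1$ and $\rho\ge 1$ automatically.
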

The precise complexity is proved later in \Cref{subsec:complexity-analysis}.
As discussed
below, existing point-counting algorithms do not appear to achieve this error scale with $\poly(\log q)$ bit complexity.

We also prove that this approximation problem becomes $\#$P-hard under randomized polynomial-time Turing reductions when the support matrix $U$ is treated as part of the input. 

\begin{thm}
\label{thm:main-support-uniform-hardness}
Fix a constant $0<\eps<1/4$.  Then \Cref{prob:torus-approx} with $\delta = 1/3$
is $\SharpP$-hard under randomized polynomial-time Turing reductions. 
\end{thm}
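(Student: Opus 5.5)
The plan is a randomized Turing reduction from a $\SharpP$-complete counting problem. We want instances where the additive error $\eps q^{n+s/2-\rho}$ is below $1/2$, so that rounding $\widehat N(f)$ returns $N(f)$ exactly. Since $\rho\le n+1$ and $\rho\le s$, the exponent $n+s/2-\rho$ can be made nonpositive by letting the support matrix $U$ vary freely. For example, take $s$ close to $\rho$, with $n+s/2-\rho\le 0$, i.e.\ $s\le 2(\rho-n)\le 2$. That is too restrictive. Instead I would relax the target: it suffices that the error is small relative to a quantity we can scale. Concretely, I would count over $\F_2$ or a small prime field and amplify by working over $\F_{q}$ with $q$ chosen so that $\eps q^{n+s/2-\rho}<1/2$. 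This requires $n+s/2-\rho<0$, which forces $\rho=n+1$ and $s\le 1$, which is trivial. So exact recovery by a single call is impossible, and the reduction must be more clever.

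\textbf{Approach via many calls and interpolation.} Start from a $\SharpP$-complete problem such as counting satisfying assignments of a CNF, or the permanent mod small primes. Encode it as the torus point count of a sparse Laurent polynomial $f$ whose support has $\rho=n+1$, which holds as soon as the $u_j$ affinely span $\Z^n$. The error scale is then $\eps q^{s/2-1}$, while $N(f)$ is of size about $q^{n-1}$. Choose the encoding so that the quantity of interest, $K$, enters $N(f)$ multiplied by a large power of $q$: $N(f)=q^{n-1}-\ldots+K\cdot q^{m}+(\text{known or small terms})$, with $m>s/2-1$. Then one call with $\eps<1/4$ determines $K$ modulo the ambiguity, after rounding to the nearest multiple of $q^{m}$. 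The key tool here is \Cref{prop:main_character-formula}: it expresses $N(f)$ as a main term plus a sum over characters indexed by the kernel of $U$, involving Gauss sums. A combinatorial count can be planted by padding $f$ with extra monomials in fresh variables, which raises $n$ (hence the power $q^{n}$) without raising $s$ by as much. Each fresh variable $y$ used in a term such as $x^{u}(y-1)$ adds two monomials but one dimension. Tuning this ratio is what lets $n$ grow faster than $s/2$.

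\textbf{Concrete plan.}
\begin{itemize}
\item Take a hard instance: the number of zeros in $\F_p^k$ of a system, or, more cleanly, the number $K$ of solutions of a diagonal or sparse equation whose count is $\SharpP$-hard. Known results, e.g.\ von zur Gathen--Karpinski--Shparlinski, give $\SharpP$-hardness for sparse polynomials over $\F_p$ with the exponents as input.
\item Lift to the torus and pad with variables so that $n-(\rho-1)$ dominates $s/2$. Then the target count appears at scale $q^{\,n-\rho+1}\cdot(\cdot)$ and exceeds $4\eps q^{n+s/2-\rho}$ by a factor at least $q^{1-s/2}$. This is the delicate point: $s$ necessarily grows with the instance.
\item Use randomization (random $q$ or random coefficient scalings $a_j\mapsto \lambda_j a_j$) and multiple oracle calls. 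Averaging the character expansion over the scalings isolates individual Gauss-sum terms. Combined with interpolation in $q$ over several primes, this recovers $K$ modulo each prime, and the Chinese remainder theorem recovers $K$.
\end{itemize}

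The main obstacle is the second step: making the planted combinatorial quantity dominate the error scale $q^{n+s/2-\rho}$ despite $s\ge\rho$. I expect the real route instead uses exponents that are huge integers in $U$ (so $\|U\|_\infty$ is exponential). A binomial $a x^{d}-b$ on a one-dimensional torus has exactly $\gcd(d,q-1)$ or $0$ roots, and sparse Laurent polynomials with large exponents can encode subset-sum–type counts. In that setting $s$ and $\rho$ stay small while the count is hard, and $q^{n+s/2-\rho}$ is bounded by $O(q^{n-\rho/2})$, small enough for rounding. I would therefore try first a reduction from counting subset-sum solutions, or $\#$-perfect matchings, via products of such binomials. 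Randomness enters through the choice of a prime $q$ with $q-1$ divisible by a suitable modulus.
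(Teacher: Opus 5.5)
Your proposal does not yet contain a reduction. It lists several candidate strategies and leaves open the step you yourself call ``the delicate point''. It never exhibits an instance together with a verified comparison between the planted count and the window $\eps q^{n+s/2-\rho}$. The concrete mechanisms you do name would not work:
\begin{itemize}
\item The gadget $x^{u}(y-1)$ adds one variable and two monomials, so $n$ and $s/2$ grow at exactly the same rate, contrary to your claim.
\item Averaging over random scalings $a_j\mapsto\lambda_j a_j$, followed by interpolation, cannot sharpen anything. The oracle's error is only promised to lie in the window, and it may be adversarial and correlated across calls. Any average of answers can therefore still carry error of order $q^{n+s/2-\rho}$. Since $\rho\le s$, this is at least the order $q^{n-s/2}$ of a single Gauss-sum term of \Cref{prop:main_character-formula}, so no exact residue of $K$ can be read off.
\item A product of $k$ binomials can have $2^k$ monomials, which is not a polynomial-size input to \Cref{prob:torus-approx}.
\item $q^{n-\rho/2}$ is not ``small enough for rounding''. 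Since $\rho\le\min\{s,n+1\}$, the window is always at least $\eps q^{(n-1)/2}$. Rounding is possible only after dividing by a known multiplier larger than $2\eps q^{n+s/2-\rho}$, and you never produce one.
\end{itemize}

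The missing idea is to tie the exponent to $q$, so that this multiplier appears automatically. Reduce from signed partition counting $\SP(a_1,\ldots,a_n)=\#\{\sigma\in\{\pm1\}^n:\sum_i a_i\sigma_i=0\}$. This problem is $\SharpP$-complete because $\SP(b_1,\ldots,b_t,B,2T)=2\,C_T(b_1,\ldots,b_t)$, where $C_T$ counts subsets of the $b_i$ summing to $T$, $B=\sum_i b_i$, and $0<T<B$.

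The construction goes as follows. Let $A_0=\sum_i a_i$. Find, by random sampling and a primality test, a prime $q>\max\{2A_0,10\}$ with $\log q=O(\log A_0)$. Set $h=(q-1)/2$ and query the single polynomial $f_a=\sum_{i=1}^n\bar a_i x_i^{h}$.
\begin{itemize}
\item The map $x\mapsto x^h$ sends $\F_q^\times$ onto $\{\pm1\}$ with fibres of size $h$. This is your $\gcd(d,q-1)$ observation with $d=h$.
\item Since $q>2A_0$, the equation $\sum_i\bar a_i\sigma_i=0$ in $\F_q$ is equivalent to $\sum_i a_i\sigma_i=0$ in $\Z$.
\item Hence $N(f_a)=h^n\,\SP(a_1,\ldots,a_n)$.
\end{itemize}
Here $s=\rho=n$; these quantities are not small, but their equality caps the window at $\eps q^{n/2}<\tfrac{1}{4}h^n$, using $h\ge\sqrt q$. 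Rounding $\widehat N(f_a)/h^n$ therefore recovers $\SP$ from a single oracle call. Randomness is needed only to find $q$ and because the oracle succeeds with probability $2/3$.

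The square-root gap between the multiplier $h^n\approx(q/2)^n$ and the window $q^{n/2}$ is precisely the ingredient your sketch lacks.
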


The proof of \Cref{thm:main-support-uniform-hardness} is given in
\Cref{sec:computational_complexity}.
This highlights a support-parameter tradeoff in the approximation problem:
bounded support structure enables efficient quantum approximation, while
unrestricted support variation already yields $\#\mathrm P$-hardness at the
same additive accuracy scale.

\noindent
\textbf{Related work.}
We study torus point counting for Laurent polynomials. On the one hand, this problem is considered important for studying mirror symmetry in mathematical physics and algebraic geometry \cite{Kasprzyk2022laurent,KrawitzThesis}.
It has been studied primarily using geometric methods such as cohomology; see the item C below.
On the other hand, algorithmic research has focused on root counting for ordinary polynomials (items A and B below). This distinction is important because the torus restriction allows us to use
multiplicative characters, while ordinary affine root counting has a different
algorithmic structure; we discuss the difference between these two problem settings in \Cref{sec:conclusion}.

\noindent
\textbf{A. Approximation algorithms over finite fields.}
In the typical hypersurface regime, the point count $N(f)$ is of order $q^{n-1}$.
There are several studies on relative approximation \cite{grigoriev1991approximation,karpinski1991lhotzky,karpinski1993approximating} with polynomial-time dependence on $q$.
Over prime fields, some works \cite{huang1996solving,HuangWong1998,Williams2018Counting} provide a relative approximation that runs in  $\poly(\log p, 1/\eps)$.

\noindent
\textbf{B. Algorithms for polynomial systems over finite fields.}
A related line of work studies ordinary polynomial systems over finite fields with the number of variables as the main asymptotic parameter. For bounded-degree systems, faster-than-exhaustive algorithms are known for solving and exact counting, together with reductions and fine-grained lower bounds for low-degree root counting \cite{lokshtanov2017beating,Williams2018Counting,dell2025solving}. Over $\F_2$, further improvements use parity-counting based self-reductions \cite{bjorklund2019solving,dinur2021improved}. These aim to improve over the $q^n$ exhaustive-search baseline and generally retain a power dependence on $q$.

\noindent
\textbf{C. Geometric methods.}
 The Lang--Weil estimate
and its effective refinements imply an estimation as \eqref{eq:LW} \cite{LangWeil1954,GhorpadeLachaud2002,CafureMatera2006,Slavov2023}.
Point counting for Laurent polynomials and toric hypersurfaces has also been studied through $p$-adic cohomology \cite{AdolphsonSperber1989,AdolphsonSperber1990,LauderWan2008,Harvey2015, SperberVoight2013,CostaHarveyKedlaya2019}. In computational directions, these compute the full zeta function, which is stronger than a single point count, and require polynomial complexity in $q$. For curves and abelian varieties, exact point counts can be computed with polynomial dependence on $\log q$ \cite{Schoof1985,schoof1995counting,Pila1990,Kedlaya2001}. 

\noindent
\textbf{D. Quantum algorithms.}
Quantum algorithms for Gauss-sum estimation provide the main primitive
used in our algorithm; for a multiplicative character over $\mathbb F_q$,
the phase of the normalized Gauss sum can be estimated with $\poly(\log q)$ gates \cite{van2003quantum}.  
Under the assumption of the existence of an efficiently implementable unitary access, trace
estimation yields an additive approximation to the point count, and this
program can be carried out explicitly for Fermat-type hypersurfaces \cite{dam2004Quantum}.
However, the construction of such a spectral unitary is not known in
general. 
For curves, there are quantum algorithms that compute the full zeta
function, and hence the exact point count, with $\poly(\log q)$ gates \cite{Kedlaya2006}. 

\noindent
\textbf{E. Computational hardness.}
Exact counting over finite fields remains computationally intractable, at least to the extent of $\#$P-completeness, even when the input is restricted to sparse polynomials \cite{vonZurGathenKarpinskiShparlinski1997,Milovanov2019,ChengHillWan2013}. These results show that sparsity alone is not a tractability assumption once the monomial support is part of the input. However, the $\#$P-completeness of exact counting does not by itself imply hardness of approximation. Indeed, for several standard counting problems, such as counting satisfying assignments of DNF formulas \cite{KarpLubyMadras1989} and computing the permanent of a matrix \cite{Valiant1979Permanent,JerrumSinclairVigoda2004}, the exact version is $\#$P-complete whereas randomized approximation is known to be tractable. This distinction is particularly important over finite fields. Although counting the zeros of polynomial systems can be $\#$P-complete even for low-degree instances, relative-error approximation algorithms are known in several restricted regimes such as fixed or small finite fields \cite{karpinski1993approximating,grigoriev1991approximation,HuangWong1998}. Therefore, establishing approximation hardness requires an argument that explicitly exploits the prescribed approximation guarantee, rather than relying solely on the $\#$P-hardness of a related exact counting problem.

Approximate counting has been extensively studied in computational complexity theory. One standard framework is that of approximation-preserving reductions, which compare approximation problems while preserving the existence of an FPRAS \cite{DyerGoldbergGreenhillJerrum2004,DyerGoldbergJerrum2010}. Another line of work establishes hardness by treating the approximation algorithm as an oracle and proving that such an oracle suffices to recover exact counting or other discrete $\#$P-hard information \cite{GalanisGoldbergHerreraPoyatos2022,GoldbergGuo2017ComplexIsingTutte,Kuperberg2015JonesApprox}. In these results, the assumed approximation oracle is shown to be powerful enough to compute $\#$P-hard information.

\vspace{3mm}
\noindent
\textbf{Techniques.}
The starting point of this paper is to avoid treating point
counting as enumeration over the torus.  A naive approximation strategy
would estimate $N(f)$, whose typical order is $q^{n-1}$, by relative
error.  
While most previous studies achieve approximations of this accuracy, to obtain a more accurate approximation, we derive a character-based formula (\Cref{prop:main_character-formula}) to clarify the contribution of fluctuation terms.  
Using the character orthogonality for finite abelian groups, we express
this fluctuation as a finite Gauss-sum expansion.  The natural size of
this fluctuation is $q^{n+s/2-\rho}$.
In this
expansion, the contributing multiplicative characters must satisfy the relations $\mathcal A_U(q)$ imposed by the
augmented support matrix $U$. 

In general, however, the number of character tuples satisfying these
relations $\#\mathcal A_U(q)$ can still grow exponentially in $\log q$.  Therefore, even if
one could evaluate each summand efficiently, summing all terms would not
give a poly$(\log q)$-complexity algorithm. We overcome this difficulty by giving an explicit parametrization based on the theory of Smith normal forms, which enables us to sample characters uniformly.
This yields the normalized formula $N(f)
  =
  (q-1)^n/q
  +
  q^{s/2-1}(q-1)^{n+1-\rho}B_U(q)\,\Xi_f$,
where $\Xi_f$ is a normalized average over $K_U(q)$ of bounded summands
$T(k)$ satisfying $|T(k)|\le 1$.
Here $K_U(q)$ is the congruence kernel
parametrizing the admissible character tuples, $B_U(q)$ is the Smith-normal-form
factor controlling the size of this kernel, and $T(k)$ is the normalized
Gauss-sum product associated with $k$.  The formal definitions are given in
\Cref{subsec:normalized-character-average}.
Thus the original point-counting problem over a huge torus is converted into the estimation of the mean of a bounded random variable.

To evaluate the normalized average $\Xi_f$ with $\poly(\log q)$ complexity, we leverage quantum algorithms for estimating Gauss sums and multiplicative characters \cite{van2003quantum,dam2004Quantum}.
For each sample $k \in K_U(q)$, the corresponding summand is composed of a product of Gauss sums and the multiplicative characters evaluated at the polynomial coefficients $a_j$.
Following the quantum framework of van Dam \cite{van2003quantum,dam2004Quantum}, these phases can be estimated to polynomial precision using $\poly(\log q)$ quantum gates via the quantum Fourier transform.
Crucially, this polynomial-precision estimation task is believed to be classically intractable, as the discrete logarithm problem reduces to it.
We implement this $k$-dependent quantum primitive within a Hadamard test to manifest each $T(k)$ as a measurable random variable.
Finally, the algorithm classically draws uniform samples of $k$ from $K_U(q)$ using our parametrization via Smith normal forms, applies a Monte Carlo averaging over the outcomes, and the affine conversion above gives the desired additive approximation to $N(f)$.

A further technical contribution is the explicit accounting of all error sources in our algorithm. 
To achieve this, we systematically decouple the algorithmic inaccuracies into two distinct categories: approximation errors and failure probabilities.
The approximation error stems from the approximate implementation of the quantum Fourier transforms over $\Z/N\Z$ for arbitrary $N$ and the statistical estimation error of Monte Carlo sampling.
Conversely, the overall failure probability is distributed across three independent probabilistic subroutines: finding a generator in $\mathbb F_q^\times$, preparing the multiplicative character (chi) states, and the Monte Carlo sampling over $K_U(q)$.
By rigorously allocating the precision parameters and failure probabilities across these components, we ensure the combined estimator meets the desired global guarantee.

Finally, the same transformation explains the complexity boundary proved
in this paper.  
When $U$ is allowed to vary as part of the input, the same additive accuracy is fine enough to recover an embedded
$\#\mathrm P$-hard exact count by rounding (\Cref{thm:toric_hardness_en}).
In the proof, the order $q^{n/2}$ is critical; worse approximation like $q^n$ does not give a similar proof.
Although exact counting and decision problems over finite fields have been extensively studied, to the best of our knowledge there has been no previous result showing that approximating point counting over finite fields is itself $\#$P-hard under some reductions.

\section{Preliminaries}
Fix the additive character
$\Theta:\F_q\to \C^\times$ defined by 
$\Theta(x):=\exp\!\left(2\pi i\Tr_{\F_q/\F_p}(x)/p\right)$
and let $\widehat{\F_q^\times}$ be the multiplicative character group.
For $\chi\in\X$ define the Gauss sum $G(\chi):=\sum_{t\in\F_q^\times}\chi(t)\Theta(t)$.
We recall the basic properties of characters and Gauss sums in
\Cref{app:Gauss_sums}.

Throughout the paper, for functions of several parameters, we write
$\widetilde O(f(x_1,\ldots,x_m))$
to mean $O\!\left(
        f(x_1,\ldots,x_m)
        \cdot
        \log^C(2+x_1+\cdots+x_m)
    \right)$
for some $C>0$.
Equivalently, $\widetilde O$ suppresses factors polylogarithmic in the
relevant input parameters.

We assume that the $n$-bit integer multiplication can be done in $O(n\log n)$ bit complexity by \cite{harvey2021Integer}.
We represent $\mathbb F_q$ as $\mathbb F_p[T]/(\varphi)$ with $\deg\varphi=r$ and use a polynomial basis representation. 
Throughout this subsection, we assume the standard bit complexity model for finite-field arithmetic.
Using fast integer multiplication \cite{harvey2021Integer} and polynomial arithmetic \cite{vonZurGathenGerhard2013}, one field addition, multiplication, and inversion in $\mathbb F_q$ can be performed in $\widetilde O(\log q)$ bit operations. 
For coherent quantum implementations, we assume reversible implementations of these finite-field arithmetic routines with the same cost up to polylogarithmic overhead.
For quantum gate complexity, we consider a gate set consisting of all single-qubit gates and CNOT gates.

\section{Approximate point counting}\label{sec:approximate-point-counting}

\subsection{Normalized character average}\label{subsec:normalized-character-average}
We begin with the character relation formula, which identifies the fluctuation term in the asymptotic expansion.
\begin{prop}
\label{prop:main_character-formula}
The torus point count satisfies
\begin{equation}\label{eq:main-character}
N(f)
=
\frac{(q-1)^n}{q}
\;+\;
\frac{(q-1)^{n+1-s}}{q}
\sum_{(\chi_1,\dots,\chi_s)\in \calA_U(q)}
\left(\prod_{j=1}^s \Gauss(\chi_j^{-1})\,\chi_j(a_j)\right).
\end{equation}
\end{prop}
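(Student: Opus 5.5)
Since the paper has not yet defined $\calA_U(q)$ explicitly, I take it to be the set of tuples $(\chi_1,\dots,\chi_s)\in\X^s$ satisfying the relations imposed by the columns of $U$. Concretely, these are the conditions $\prod_j \chi_j = \triv$ and $\prod_j \chi_j^{u_{j,i}}=\triv$ for $i=1,\dots,n$. Equivalently, $\prod_j \chi_j^{U_{\cdot j}}$ is trivial on every row of $U$. The plan is the standard additive-then-multiplicative character expansion.

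\textbf{Step 1: additive detection of zeros.} Write the indicator of $f(x)=0$ with the additive character:
\[
N(f)=\frac1q\sum_{x\in(\F_q^\times)^n}\sum_{t\in\F_q}\Theta(t f(x)).
\]
The $t=0$ term gives $(q-1)^n/q$, which is the main term. The remaining part is
\[
\frac1q\sum_{t\in\F_q^\times}\sum_{x}\prod_{j=1}^s\Theta\!\left(t a_j x^{u_j}\right).
\]

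\textbf{Step 2: multiplicative expansion of each factor.} For $y\in\F_q^\times$, use the Fourier inversion of $\Theta$ restricted to $\F_q^\times$:
\[
\Theta(y)=\frac1{q-1}\sum_{\chi\in\X}G(\chi^{-1})\chi(y).
\]
This follows from orthogonality, since $\sum_\chi \chi(y)\,\overline{\chi}(t)=(q-1)[t=y]$. I would recall this identity from \Cref{app:Gauss_sums}. Apply it with $y=t a_j x^{u_j}$, which is nonzero because $t$, $a_j$ and $x$ are all units. The product over $j$ then becomes
\[
(q-1)^{-s}\sum_{\chi_1,\dots,\chi_s}\prod_j G(\chi_j^{-1})\chi_j(a_j)\cdot \Bigl(\prod_j\chi_j\Bigr)(t)\cdot\prod_{i=1}^n\Bigl(\prod_j\chi_j^{u_{j,i}}\Bigr)(x_i).
\]
Here I use multiplicativity: $\chi_j(x^{u_j})=\prod_i\chi_j(x_i)^{u_{j,i}}$, which is valid for negative exponents as well since we are on the torus.

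\textbf{Step 3: sum over $t$ and $x$.} Interchange the finite sums. Orthogonality over $\F_q^\times$ in each of the $n+1$ variables $t,x_1,\dots,x_n$ gives a factor of $(q-1)$ for each variable when the corresponding character is trivial, and $0$ otherwise. Hence only tuples in $\calA_U(q)$ survive, each contributing $(q-1)^{n+1}$. Collecting the factors gives $\frac1q(q-1)^{n+1-s}\sum_{\calA_U(q)}\prod_j G(\chi_j^{-1})\chi_j(a_j)$, which is exactly \eqref{eq:main-character}.

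\textbf{Main obstacle.} There is no deep difficulty; the points needing care are bookkeeping. First, the sign and normalization convention $G(\chi^{-1})$ versus $G(\chi)$ in the inversion formula must match the paper's definition $G(\chi)=\sum_t\chi(t)\Theta(t)$. Checking this: $\sum_\chi G(\chi^{-1})\chi(y)=\sum_t\Theta(t)\sum_\chi\chi(y/t)=(q-1)\Theta(y)$, so the convention is consistent. Second, the trivial character must be handled correctly, with $G(\triv)=-1$. Third, the definition of $\calA_U(q)$ as the set of tuples for which all $n+1$ row-characters are trivial must match the intended definition via the augmented matrix $U$, whose first row of ones corresponds to the $t$-variable.
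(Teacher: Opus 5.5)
Your proposal is correct and follows essentially the same route as the paper's proof. Both arguments detect zeros by additive orthogonality, apply the expansion $\Theta(t)=\frac{1}{q-1}\sum_{\chi}\Gauss(\chi^{-1})\chi(t)$ (the paper's \Cref{lem:add-to-mult}) to each monomial, and use multiplicative orthogonality in the $n+1$ variables $w,x_1,\dots,x_n$ to cut the sum down to $\calA_U(q)$, whose definition you guessed exactly; the value $\Gauss(\triv)=-1$ plays no role, since the inversion identity holds uniformly over all characters.
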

The proof is given in \Cref{app:prelim}.
Here we introduce 
\[
\calA_U(q):=
\left\{(\chi_1,\dots,\chi_s)\in \X^s:
\prod_{j=1}^s \chi_j=\triv,
\quad
\prod_{j=1}^s \chi_j^{u_{ij}}=\triv\ \text{for }1\le i\le n
\right\}.
\]
 and call it the \emph{admissible character-index set attached to $U$}.
We now rewrite the character expansion in a normalized form suited to sampling and quantum phase estimation.  Fix a generator $g\in\F_q^\times$, and write $\chi$ for the corresponding generator of $\X$. 

Let $\overline{U}:(\Z/(q-1)\Z)^s\to (\Z/(q-1)\Z)^{n+1}$ be the induced map defined by $U$ and we denote by $K_U(q)\defeq \ker(\overline{U})$.
By the theory of the Smith normal form (SNF), detailed in \Cref{app:smith_normal_form}, there exist $P\in\GL_{n+1}(\Z)$, $Q\in\GL_s(\Z)$, and $d_i>0$ for $1\le i\le\rho$ so that
\begin{equation}
    \label{eq:smith normal form of U}
    P U Q=\diag(d_1,\ldots,d_\rho,0,\ldots,0),
\qquad
d_1\mid d_2\mid\cdots\mid d_\rho.
\end{equation}
Define $B_U(q)\defeq\prod_{i=1}^{\rho}g_i$ where $g_i\defeq\gcd(d_i,q-1)$.
As an application of the Smith normal form (\Cref{app:smith_normal_form}), one has the cardinality formula
\begin{equation}\label{eq:K-cardinality}
\# K_U(q)=(q-1)^{s-\rho}B_U(q).
\end{equation}
For $k=(k_1,\ldots,k_s)\in K_U(q)$, put $z(k)\defeq\#\{j:k_j=0\},
\Phi(k)\defeq \prod_{\substack{1\le j\le s\\ k_j\ne 0}}(\Gauss(\chi^{-{k_j}})/\sqrt q)\chi^{k_j}(a_j)$, and 
$T(k)\defeq(-1)^{z(k)}q^{-z(k)/2}\Phi(k)$.
Then $|\Phi(k)|=1$, $|T(k)|=q^{-z(k)/2}\le 1$, and
\begin{equation}\label{eq:product-to-T}
\prod_{j=1}^{s}\Gauss(\chi^{-k_j})\chi^{k_j}(a_j)
=
q^{s/2}T(k).
\end{equation}
Since the terms indexed by $k$ and $-k$ are complex conjugates, the average of $T(k)$ equals the average of $\operatorname{Re}T(k)$; see \Cref{app:Gauss_sums}.
Based on this, define
\begin{equation}\label{eq:Xi-f-def}
\Xi_f\defeq\frac{1}{\# K_U(q)}\sum_{k\in K_U(q)}T(k).
\end{equation} 
Combining \Cref{prop:main_character-formula} and \eqref{eq:K-cardinality}, we obtain
\begin{equation}\label{eq:point-count-conversion}
N(f)
=
\frac{(q-1)^n}{q} + q^{s/2-1}(q-1)^{n+1-\rho}B_U(q)\Xi_f.
\end{equation}

%%%%%%%%%%%%%%%%%%%%%%%%%%%%%%
%%%%%%%%%%%%%%%%%%%%%%%%%%%%%%
\subsection{Algorithm}\label{subsec:algorithm}
It follows from the discussion in \Cref{app:prelim} that $\Xi_f = \bbE_{k\sim\mathrm{Unif}(K_U(q))} \left[\operatorname{Re}T(k)\right]$.
We estimate this expectation by Monte Carlo sampling from $K_U(q)$.
The SNF sampler is given in \Cref{app:smith_normal_form} for this purpose.
Below we describe a classical-quantum hybrid algorithm for constructing an estimator $\widehat N(f)$.
The failure probabilities are allocated as follows: $\delta_{\rm MC}$ is assigned to Monte Carlo sampling, $\delta_{\rm gen}$ to generator finding, and $\delta_\chi$ to chi state preparations. We choose them so that $\delta_{\rm MC}+\delta_{\rm gen}+\delta_\chi\le\delta$.
The parameters $\eps_{\rm ph}$ and $\eps_{\rm seed}$ denote the implementation precision of the phase oracle and the state-preparation accuracy, respectively.
\begin{algorithm}[H]
\caption{\textsc{ApproximatePointCounter}$(\F_q,f,\eps,\delta)$}\label{alg:snf-hadamard-estimator}
\begin{algorithmic}[1]
    \REQUIRE A finite field $\F_q$, a Laurent polynomial $f(x)=\sum_{j=1}^{s}a_jx^{u_j}$, an accuracy parameter $\eps\in(0,1)$, and a failure probability $\delta\in(0,1)$.
    \STATE Set the augmented exponent matrix $U\in M_{n+1,s}(\Z)$.
    \STATE Compute the Smith-normal-form sampling data
    $\mathsf{SNF}_{U,q}$ and compute $B_U(q)$.
    \STATE Set $\delta_{\rm MC}=\delta_{\rm gen}=\delta_\chi=\delta/3$, and compute $\eta_\Xi= \min\left\{1, \eps\left(\frac{q}{q-1}\right)^{n+1-\rho}\frac1{B_U(q)} \right\}$ and $M=\left\lceil\frac{8}{\eta_\Xi^2}\log\frac{2}{\delta_{\rm MC}}\right\rceil$.
    \STATE Find a generator $g\in\F_q^\times$ with a failure probability at most $\delta_{\rm gen}$.
    \STATE Set $\eps_{\rm ph}=\eps_{\rm seed}=\eta_\Xi/6$. (More generally, it suffices to take $\eps_{\rm ph}+2\eps_{\rm seed} \le \frac{\eta_\Xi}{2}.$)
    \FOR{$i=1,\ldots,M$}
    \STATE Run $k_i \leftarrow$ \nameref{alg:snf-sampler}$(\mathsf{SNF}_{U,q})$.
    \STATE Compute $z(k_i)=\#\{j:(k_i)_j=0\}$.
    \STATE Compute the constants $\lambda_j(k_i)\defeq a_j^{-(k_i)_j}\in\F_q^\times \, \text{for all }j\text{ with }(k_i)_j\ne 0.$
    \STATE Run
        $
        Y(k_i)
        \leftarrow$
        \nameref{alg:hadamard-test}$\left(
        \F_q,g,k_i,z(k_i),\{\lambda_j(k_i)\}_{(k_i)_j\ne0},
        \eps_{\rm seed},\eps_{\rm ph},\delta_\chi/M
        \right).
        $
    \ENDFOR
    \STATE Set $\widehat\Xi_f = \frac1M\sum_{i=1}^{M} Y(k_i)$.
    \STATE \textbf{Return} $\widehat N(f) = \frac{(q-1)^n}{q} + q^{s/2-1}(q-1)^{n+1-\rho}B_U(q)\widehat\Xi_f$.
\end{algorithmic}
\end{algorithm}
If generator finding or a chi state preparation inside a call to  \nameref{alg:hadamard-test} fails, the subsequent output is not guaranteed; these events are included in the failure probability analysis.
We note that the algorithm is hybrid.  On the one hand, the SNF computation,
the sampling of $k$, and the coefficient arithmetic are classical.
Each iteration first draws a classical sample $k\in K_U(q)$.
On the other hand, the quantum circuit is then used to generate one bounded random variable $Y(k)$.
The quantum Fourier transforms used in the quantum primitives are recalled in \Cref{app:QFT}.

%%%%%%%%%%%%%%%%%%%%%%%%%%%%%%
\subsection{Complexity analysis}\label{subsec:complexity-analysis}
For the complexity estimates only, we use the notation: $H\defeq\max\{1,\|U\|_\infty\},m:=\max\{n+1,s\}$, and    $L:=\left\lceil \log_2(1+q+H)\right\rceil$. 
All costs are reported in the hybrid form $(T_{\rm cl},T_{\rm q})$, where $T_{\rm cl}$ is measured in the classical bit operations while $T_{\rm q}$ is in quantum gates.

We first show the computational complexity for implemented primitives in \Cref{alg:snf-hadamard-estimator}.
\begin{prop}[Implemented primitives]\label{prop:implemented-primitives}
    The following routines are available.
    \begin{enumerate}
    \item SNF preprocessing computes the data needed to sample from $K_U(q)$, and
    computes $B_U(q)$, in expected classical bit complexity
    $\widetilde O(m^6L+\log(1/\eps)+\log (1/\delta))$.
    
    \item Given the precomputed SNF data, one can sample
    $k\sim{\rm Unif}(K_U(q))$ exactly, and compute the classical constants needed
    for the $k$-dependent circuit, in expected classical bit complexity $\widetilde O(s^2\log q+s(\log q)^2)$.
    
    \item A generator of $\mathbb F_q^\times$ can be found with failure probability
    at most $\delta_{\rm gen}$.
    Its cost consists of  $\widetilde O((\log q)^3\log(1/\delta_{\rm gen}))$ classical bit complexity and 
    $\widetilde O((\log q)^3\log(1/\delta_{\rm gen}))$
    quantum gates.
    
    \item For each $k\in K_U(q)$, there is a Hadamard test circuit producing
    $Y(k)\in[-1,1]$ such that, conditioned on successful state preparation $\cal S_\chi$, $\left|
    \mathbb E[Y(k)\mid k, \mathcal{S}_\chi]-\operatorname{Re}T(k)
    \right|
    \le \eta_\Xi/2$.
    With per-call failure probability $\gamma_\chi$, its quantum gate
    complexity is
    \[
    \widetilde O\!\left(
    (\log q+\log(1/\eta_\Xi))^2
      (\log(1/\eta_\Xi)+\log(1/\gamma_\chi))
    +
    s\log q(\log q+\log s+\log(1/\eta_\Xi))^2
    \right).
    \]
    \end{enumerate}
\end{prop}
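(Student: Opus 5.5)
The plan is to treat the four items separately, since each is a self-contained algorithmic claim, and in each case to reduce to standard subroutines with known costs.

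\emph{Item 1 (SNF preprocessing).} Compute the Smith normal form $PUQ=\diag(d_1,\ldots,d_\rho,0,\ldots,0)$ together with the transforms $P,Q$. I would use a polynomial-time SNF algorithm with controlled coefficient growth, such as Kannan--Bachem or Storjohann's method. Its entry sizes stay $O(m\log(mH))$ bits, so the cost is $\widetilde O(m^6 L)$ bit operations in the standard estimates. The kernel $K_U(q)$ is then parametrized through $Q$. Writing $y=Q^{-1}k$ over $\Z/(q-1)\Z$, the condition $\overline U k=0$ becomes $d_i y_i\equiv 0 \pmod{q-1}$ for $i\le\rho$, with $y_i$ free for $i>\rho$. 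Hence $y_i\in \tfrac{q-1}{g_i}\Z/(q-1)\Z$, where $g_i=\gcd(d_i,q-1)$, and $B_U(q)=\prod g_i$ needs only $\rho$ gcds on $O(L+m\log(mH))$-bit numbers. The additive $\log(1/\eps)+\log(1/\delta)$ term covers writing down $\eta_\Xi$ and $M$.

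\emph{Item 2 (sampling).} Sample each $y_i$ uniformly: for $i\le\rho$ take a uniform multiple of $(q-1)/g_i$, and for $i>\rho$ take a uniform residue mod $q-1$. Uniform integers in a range are drawn by rejection sampling, which explains the "expected" cost. Then set $k=Qy \bmod (q-1)$. Since $y\mapsto Qy$ is a bijection of $(\Z/(q-1))^s$ mapping the product set onto $K_U(q)$, the output is exactly uniform. Reducing $Q$ mod $q-1$ once during preprocessing makes the matrix-vector product cost $\widetilde O(s^2\log q)$. Computing $z(k)$ is trivial, and the constants $\lambda_j=a_j^{-k_j}$ take $s$ exponentiations, each $\widetilde O((\log q)^2)$ by square-and-multiply. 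This gives $\widetilde O(s(\log q)^2)$.

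\emph{Item 3 (generator).} Obtain the factorization of $q-1$ with Shor's algorithm. Repeated calls with $O(\log(1/\delta_{\rm gen}))$ amplification cost $\widetilde O((\log q)^3)$ gates per run, plus classical post-processing. Then test random $g$ via $g^{(q-1)/\ell}\ne1$ over the $O(\log q)$ prime divisors $\ell$. A random element is a generator with probability $\varphi(q-1)/(q-1)=\Omega(1/\log\log q)$, so $\widetilde O(\log(1/\delta_{\rm gen}))$ trials suffice, each costing $\widetilde O((\log q)^3)$ bits.

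\emph{Item 4 (Hadamard test)} is the main obstacle. The first step is to follow van Dam and prepare the chi state $\sum_t\chi(t)\ket t$. This is done by applying an approximate QFT over $\Z/(q-1)\Z$ to discrete-log-indexed registers, using the generator $g$ and a Shor-type discrete-log subroutine. Preparation succeeds with probability $1-\gamma_\chi$ after $O(\log(1/\gamma_\chi))$ repetitions, to accuracy $\eps_{\rm seed}$. This accounts for the first summand in the gate bound, with QFT precision $\log(1/\eta_\Xi)$ bits.

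Next, for each $j$ with $k_j\neq0$, apply the additive-character phase to the state $\ket{\chi^{k_j}}$ and multiply by $\lambda_j$. This is the map $\ket t\mapsto\ket{\lambda_j t}$, which yields the phase $\chi^{k_j}(a_j)$. It turns the state into an eigenvector whose eigenvalue is the normalized Gauss sum $G(\chi^{-k_j})/\sqrt q$. Controlled versions of these steps implement the phase $\Phi(k)$, implemented to accuracy $\eps_{\rm ph}$. Each factor costs $\widetilde O(\log q(\log q+\log s+\log(1/\eta_\Xi))^2)$, which gives the second summand.

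Finally, use a Hadamard test on the ancilla. Rescale its outcome by the known factor $(-1)^{z}q^{-z/2}$, for instance by mixing with a fair coin, so that $Y(k)\in[-1,1]$ has mean approximately $\operatorname{Re}T(k)$. The bias is bounded by trace distance: it is at most $\eps_{\rm ph}+2\eps_{\rm seed}\le\eta_\Xi/2$. The delicate points are this bias accounting and controlling error accumulation across the $s$ factors. For the latter I would allocate precision $\eta_\Xi/(6s)$ per factor; the resulting $\log s$ appears in the stated bound.
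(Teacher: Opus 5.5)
Items 1--3 essentially follow the paper. The one difference is in item 1: the paper runs the Smith reduction directly over $\Z/(q-1)\Z$, which avoids controlling the bit size of an integral $Q$, but your integral route also yields the sampler.

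\textbf{The gap is in the phase oracle of item 4.} You prepare only one character state $\ket{\chi^1}$. Yet for each $j$ you act on ``the state $\ket{\chi^{k_j}}$'', where $k_j$ varies with $j$ and is in general not coprime to $q-1$. You never say where these states come from, nor why a single fixed input state is a simultaneous eigenvector of all $s$ factors. Without such a common eigenvector, the product of the factors does not act on the input by the scalar $\Phi(k)$, and the Hadamard test does not estimate $\operatorname{Re}\Phi(k)$.

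Your two local claims are also incorrect as stated.
\begin{itemize}
\item Neither a diagonal additive phase nor $F_{\F_q}$ has $\ket{\chi^{k}}$ as an eigenvector. In fact $F_{\F_q}\ket{\chi^{-k}}=\frac{G(\chi^{-k})}{\sqrt q}\ket{\chi^{k}}$, so the transform conjugates the character and a correction step is needed.
\item Applying $\ket{t}\mapsto\ket{\lambda_j t}$ with $\lambda_j=a_j^{-k_j}$ to $\ket{\chi^{k_j}}$ gives the phase $\chi^{k_j}(a_j^{k_j})=\chi^{k_j^2}(a_j)$. The phase $\chi^{k_j}(a_j)$ arises only when this multiplication acts on $\ket{\chi^1}$.
\end{itemize}

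\textbf{The missing idea} is the paper's two-register eigenvector $\ket{\chi^0}\ket{\chi^1}$, used together with the division operator $D^{\alpha}\colon\ket{x}\ket{y}\mapsto\ket{x}\ket{y/x^{\alpha}}$. Because the second register holds $\ket{\chi^1}$, $D^{-k_j}$ acts by phase kickback: $D^{-k_j}\ket{\chi^a}\ket{\chi^1}=\ket{\chi^{a-k_j}}\ket{\chi^1}$. Hence $D^{-k_j}(F_{\F_q}\otimes I)D^{-k_j}$ carries $\ket{\chi^0}\ket{\chi^1}$ through $\ket{\chi^{-k_j}}$ and $\ket{\chi^{k_j}}$ back to $\frac{G(\chi^{-k_j})}{\sqrt q}\ket{\chi^0}\ket{\chi^1}$. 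Meanwhile $U^{(j)}_{\mathrm{mul}}\colon\ket{y}\mapsto\ket{\lambda_j y}$ on the second register contributes $\chi^{k_j}(a_j)$. So the product of all $s$ factors has eigenvalue $\Phi(k)$ on one state that needs only a single nontrivial chi-state preparation per call.

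That single preparation is why the first summand of the gate bound carries no factor $s$. It is also why the per-call failure probability is $\gamma_\chi$ rather than $s\gamma_\chi$. Preparing a separate $\ket{\chi^{k_j}}$ for each $j$ would multiply both by $s$. Moreover, when $\gcd(k_j,q-1)>1$ the map $x\mapsto x^{k_j}$ is not a bijection, so you would need the kickback anyway.

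Relatedly, no discrete-log subroutine is needed for the chi state. The paper applies $F_{q-1}$ to $\frac{1}{\sqrt{q-1}}\sum_j\ket{j}\ket{g^j}$, measures the index $k$, and retries until $\gcd(k,q-1)=1$. This happens with probability $\varphi(q-1)/(q-1)=\Omega(1/\log\log q)$, after which it applies $x\mapsto x^{k}$. That retry loop is where $\gamma_\chi$ and the $\log(1/\gamma_\chi)$ factor come from.
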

Full proofs are given in \Cref{app:proof_details}.
The four items correspond respectively to the SNF preprocessing and scalar parameter computation in Steps 1-3 of \Cref{alg:snf-hadamard-estimator}, the classical sampling and controller work in Steps 7-9, the generator finding in Step 4, and the Hadamard test call in Step 10.
Given the above, we give an end-to-end guarantee for estimating $N(f)$ with an explicit computational complexity.
Recall that we defined \[    \eta_\Xi=
    \min\left\{
    1,\,
    \eps\left(\frac{q}{q-1}\right)^{n+1-\rho}\frac1{B_U(q)}
    \right\},\qquad
        M=
    \left\lceil
    \frac{8}{\eta_\Xi^2}\log\frac{6}{\delta}
    \right\rceil\]
    in \Cref{alg:snf-hadamard-estimator}.

\begin{thm}[End-to-end guarantee]\label{thm:end-to-end}
    \Cref{alg:snf-hadamard-estimator} outputs $\widehat N(f)$ satisfying
    \[
    \Pr\left[
    |\widehat N(f)-N(f)|
    \le
    \eps q^{n+s/2-\rho}
    \right]\ge 1-\delta .
    \]
    In particular, total expected classical bit complexity is
    \[
    T_{\rm cl}
    =
    \widetilde O\!\left(
    m^6L
    +
    (\log q)^3\log\frac1\delta
    +
    \eps_B^{-2}\log\frac1\delta
    \bigl(s^2\log q+s(\log q)^2\bigr)
    \right),
    \]
    and the total quantum gate complexity is
    \[
    T_{\rm q}
    =
    \widetilde O\!\left(
    \eps_B^{-2}\log\frac1\delta
    \left[
    (\log q)^2\log\frac1\delta
    +
    s(\log q)^3
    \right]
    \right).
    \]
       Here we introduced the notation $\eps_B\defeq\min\left\{1,\eps/B_U(q)\right\}$ for readability.
\end{thm}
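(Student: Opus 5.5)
The proof has two parts: a correctness argument built on the exact conversion formula \eqref{eq:point-count-conversion}, and a cost count that sums the costs of \Cref{prop:implemented-primitives} over the $M$ iterations.

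\emph{Correctness.} Subtracting \eqref{eq:point-count-conversion} from the returned value gives
\[
|\widehat N(f)-N(f)| = q^{s/2-1}(q-1)^{n+1-\rho}B_U(q)\,|\widehat\Xi_f-\Xi_f|.
\]
It therefore suffices to show $|\widehat\Xi_f-\Xi_f|\le \eta_\Xi$ whenever $\eta_\Xi<1$. Indeed, then
\[
q^{s/2-1}(q-1)^{n+1-\rho}B_U(q)\,\eps\Bigl(\tfrac{q}{q-1}\Bigr)^{n+1-\rho}B_U(q)^{-1}=\eps q^{n+s/2-\rho}.
\]
In the case $\eta_\Xi=1$ we have $\eps q^{n+1-\rho}\ge(q-1)^{n+1-\rho}B_U(q)$, and the trivial bound $|\widehat\Xi_f-\Xi_f|\le 2$ is not quite enough. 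I would handle this by noting that the Monte Carlo error at $\eta_\Xi=1$ is still at most $1$ with high probability, which gives exactly the same bound.

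We work on the good event $\mathcal E$ where generator finding and all $M$ chi-state preparations succeed. A union bound with per-call failure probability $\delta_\chi/M$ gives $\Pr[\mathcal E^c]\le\delta_{\rm gen}+\delta_\chi$. On $\mathcal E$ the samples $(k_i,Y(k_i))$ are i.i.d. with values in $[-1,1]$, because the SNF sampler is exact and each Hadamard test uses fresh randomness. By item 4 of \Cref{prop:implemented-primitives} and the identity $\Xi_f=\bbE_k[\operatorname{Re}T(k)]$, the common mean satisfies $|\bbE[Y]-\Xi_f|\le\eta_\Xi/2$. Hoeffding's inequality with range $2$ and $M\ge 8\eta_\Xi^{-2}\log(2/\delta_{\rm MC})$ then gives $|\widehat\Xi_f-\bbE[Y]|\le\eta_\Xi/2$ except with probability $\delta_{\rm MC}$. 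The triangle inequality and the allocation $\delta_{\rm MC}+\delta_{\rm gen}+\delta_\chi=\delta$ finish the argument.

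\emph{The main obstacle} is making the conditioning rigorous. The bias bound in item 4 holds only conditionally on $\mathcal S_\chi$, so Hoeffding must be applied to the conditional law given $\mathcal E$. I would do this by coupling with an idealized process in which each failed preparation is replaced by a successful one. The outputs of the real and idealized processes coincide on $\mathcal E$, so it suffices to bound the bad event for the idealized process and add $\Pr[\mathcal E^c]$.

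\emph{Complexity.} First, $\eta_\Xi\ge \eps_B\cdot(q/(q-1))^{n+1-\rho}\ge\eps_B$, so $M=O(\eps_B^{-2}\log(1/\delta))$ and $\log(1/\eta_\Xi)\le\log(1/\eps_B)$. The classical cost consists of
\begin{itemize}
\item item 1, contributing $m^6L$;
\item item 3 with $\delta_{\rm gen}=\delta/3$, contributing $(\log q)^3\log(1/\delta)$;
\item $M$ calls to item 2, contributing $M(s^2\log q+s(\log q)^2)$.
\end{itemize}
The $\log(1/\eps)$ term from item 1 is absorbed into the third term.

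The quantum cost is that of item 3 plus $M$ calls to item 4 with $\gamma_\chi=\delta/(3M)$. Here $\log(1/\gamma_\chi)=O(\log(1/\delta)+\log(1/\eps_B))$. Within $\widetilde O$, the factors polylogarithmic in $1/\eps_B$, $s$ and $\log q$ are suppressed, so each call costs $\widetilde O((\log q)^2\log(1/\delta)+s(\log q)^3)$. Multiplying by $M$ gives the stated $T_{\rm q}$, and the generator-finding term is dominated.
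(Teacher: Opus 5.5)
Your argument is essentially the paper's proof: the same union bound over generator-finding and chi-state failures, the per-$k$ bias bound of \Cref{prop:implemented-primitives}(4) averaged over $k$, Hoeffding with $M=\lceil 8\eta_\Xi^{-2}\log(6/\delta)\rceil$, the affine conversion \eqref{eq:point-count-conversion}, and the same resource sum with $M=O(\eps_B^{-2}\log(1/\delta))$. Your coupling with an idealized process is a more explicit version of the paper's conditioning on $\mathcal E$, and the separate $\eta_\Xi=1$ case is unnecessary because $\eta_\Xi\le\eps(q/(q-1))^{n+1-\rho}/B_U(q)$ always holds. One small slip: $\eta_\Xi\ge\eps_B(q/(q-1))^{n+1-\rho}$ can fail when the minimum defining $\eta_\Xi$ equals $1$, but the inequality you actually need, $\eta_\Xi\ge\eps_B$, is true since $\min\{1,cx\}\ge\min\{1,x\}$ for $c\ge 1$.
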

\begin{proof}
Let $\mathcal E_{\rm gen}$ be the event that generator finding succeeds, and
let $\mathcal E_\chi$ be the event that all $M$ fresh chi-state preparations
succeed.  It follows from the chosen failure allocation (Step 3 in \Cref{alg:snf-hadamard-estimator}) that $\Pr[\mathcal E_{\rm gen}^c]\le \delta/3$ and  $\Pr[\mathcal E_\chi^c]\le \delta/3$.
Condition on $\mathcal E:=\mathcal E_{\rm gen}\cap\mathcal E_\chi$.
For every fixed $k$, \Cref{prop:implemented-primitives} (4) gives
\[
\left|
\mathbb E[Y_i\mid k_i=k,\mathcal E]
-\operatorname{Re}T(k)
\right|
\le \frac{\eta_\Xi}{2},
\]
where we denote $Y(k_i)$ by $Y_i$.
Averaging over the independent draw $k_i\sim{\rm Unif}(K_U(q))$, and using $\Xi_f=\mathbb E_{k\sim{\rm Unif}(K_U(q))}[\operatorname{Re}T(k)]$,
we obtain
\[
\left|
\mathbb E[Y_i\mid\mathcal E]-\Xi_f
\right|
\le \frac{\eta_\Xi}{2}.
\]
Hoeffding's inequality gives
\[
\Pr\left[
\left|
\frac1M\sum_{i=1}^M Y_i-\mathbb E[Y_i\mid\mathcal E]
\right|
\ge
\frac{\eta_\Xi}{2}
\ \middle|\ \mathcal E
\right]
\le
2\exp\left(-\frac{M\eta_\Xi^2}{8}\right)
\le
\frac{\delta}{3},
\]
where the chosen parameters (Step 3 in \Cref{alg:snf-hadamard-estimator}) are used in the final inequality.
Thus, conditioned on $\mathcal E$, the Monte Carlo bad event has probability at most $\delta/3$. 
Since, by union bound, $\Pr[\mathcal E^c]\le \Pr[\mathcal E_{\rm gen}^c]+\Pr[\mathcal E_\chi^c]\le 2\delta/3$, the total failure probability is at most $\delta$.
The affine conversion from $\Xi_f$ to $N(f)$ in \eqref{eq:point-count-conversion} then gives
\[
|\widehat N(f)-N(f)|
\le
\eps q^{n+s/2-\rho}.
\]

It remains to account for the resources.  The SNF preprocessing is performed
once, and by \Cref{prop:implemented-primitives} (1) costs $\widetilde O(m^6L+\log(1/\eps)+\log (1/\delta))$ expected classical bit operations. 
In each of the $M$ trials, after the SNF
data have been computed, \Cref{prop:implemented-primitives} (2)
gives the classical cost $\widetilde O\bigl(s^2\log q+s(\log q)^2\bigr)$ for sampling $k_i$, computing $z(k_i)$, and computing the $k_i$-dependent classical constants used by the circuit. 
Hence, incorporating the classical bit complexity for generator finding $\widetilde O((\log q)^3\log(1/\delta_{\rm gen}))$ by \Cref{prop:implemented-primitives} (3) with $\delta_{\rm gen}=\delta/3$, the total classical cost is
\[
\widetilde O\!\left(
\underbrace{m^6L+\log(1/\eps)+\log (1/\delta)}_{\mathrm{SNF\ preprocessing}}
+
\underbrace{(\log q)^3\log\frac3\delta}_{\mathrm{generator\ finding}}
+
\underbrace{
M\bigl(s^2\log q+s(\log q)^2\bigr)
}_{\mathrm{classical\ sampling\ and\ controller\ work}}
\right).
\]
The simplified forms follow from the convention that $\widetilde O$ suppresses polylogarithmic factors in the displayed parameters.

Assume a generator of $\mathbb F_q^\times$ is found once.
With failure probability $\delta_{\rm gen}=\delta/3$, \Cref{prop:implemented-primitives} (3) gives quantum gate complexity $\widetilde O\!\left((\log q)^3\log\frac{3}{\delta}\right)$.
For each of the $M$ trials, we run the Hadamard test circuit with $\eps_{\rm ph}=\eps_{\rm seed}=\eta_\Xi/6, \, \gamma_\chi=\delta_\chi/M=\frac{\delta}{3M}$.
By \Cref{prop:implemented-primitives} (4), the per-trial quantum gate complexity is
\[
\widetilde O\!\left(
(\log q+\log(2/\eta_\Xi))^2
\left(\log(2/\eta_\Xi)+\log\frac{3M}{\delta}\right)
+
s\log q(\log q+\log s+\log(2/\eta_\Xi))^2
\right).
\]
Multiplying this bound by $M$ and adding the one-time generator-finding
cost gives the total quantum gate complexity
\[
\widetilde O\!\left(
\underbrace{
(\log q)^3\log\frac{3}{\delta}
}_{\mathrm{generator\ finding}}
+
\underbrace{
M\left[
(\log q+\log(2/\eta_\Xi))^2
\left(\log(2/\eta_\Xi)+\log\frac{3M}{\delta}\right)
+
s\log q\,
(\log q+\log s+\log(2/\eta_\Xi))^2
\right]
}_{\mathrm{Hadamard\ test\ circuit}}
\right).
\]
The simplified
forms follow from
\[
M=O\!\left(\eta_\Xi^{-2}\log\frac1\delta\right)
=
O\!\left(\eps_B^{-2}\log\frac1\delta\right)
\]
as $\eps_B\le \eta_\Xi$.
\end{proof}

\noindent
\textbf{Proof of \Cref{thm:intro} and \Cref{cor:intro}.}
As $B_U(q)=\prod_{i=1}^{\rho}\gcd(d_i,q-1)
\le
\prod_{i=1}^{\rho}d_i$ and
the product $\prod_i d_i$ is the $\rho$-th determinantal divisor of $U$, it
is bounded by the largest absolute value of a $\rho\times\rho$ minor up to the
standard determinantal-divisor relation.  Hadamard's inequality implies
$\prod_{i=1}^{\rho}d_i
\le
\rho^{\rho/2}\|U\|_\infty^\rho$.
Thus, if $\rho\le r_0$ and $\|U\|_\infty\le C_0$, then $B_U(q)\le r_0^{r_0/2}C_0^{r_0}=O_{r_0,C_0}(1)$,
so $1/\eps_B=O_{r_0,C_0}(1/\eps)$.
Since the rounding of real-valued quantities to rational numbers incurs no additional computational overhead beyond polynomial time, the above discussion with \Cref{thm:end-to-end} gives
\Cref{thm:intro}.
If $U$ is fixed, then $B_U(q)\le O_U(1)$,
and the same theorem gives \Cref{cor:intro}.
\qed

\vspace{3mm}
Van Dam constructed an approximate quantum algorithm, assuming the existence of an oracle that provides spectral data for a polynomial and can be implemented efficiently \cite[Section 6]{dam2004Quantum}. 
The present algorithm instead samples and evaluates the finite character expansion explicitly using quantum Gauss-sum estimation, and therefore removes this spectral-unitary access assumption. 
Moreover, in the general sparse setting $s=\rho = n+1$, our estimation error becomes $q^{(n-1)/2}$, which coincides with the one conjectured in \cite{dam2004Quantum}.

\section{Computational hardness}
\label{sec:computational_complexity}
To prove the hardness statement announced in
\Cref{thm:main-support-uniform-hardness}, we use the following $\SharpP$-complete counting problem. 
For positive integers $a_1,\ldots,a_N$, define
\[
  \SP(a_1,\ldots,a_N)\defeq 
  \#\left\{\sigma\in\{\pm1\}^N:\sum_{i=1}^N a_i\sigma_i=0\right\}.
\]
We call this \emph{signed partition counting}.
This is a sign-vector variant of the counting version of the classical
\textsc{Partition} problem, whose decision version is one of Karp's 
NP-complete problems~\cite{karp2009reducibility}.
For the counting formulation used
below, we show the following reduction theorem.
\begin{thm}
\label{thm:signed_partition_hard_en}
Signed partition counting is $\SharpP$-complete under polynomial-time Turing reductions.
\end{thm}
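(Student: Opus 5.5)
Membership in $\SharpP$ is immediate. An NP machine guesses $\sigma\in\{\pm1\}^N$ and accepts iff $\sum_i a_i\sigma_i=0$. This check takes time polynomial in the bit length of the input, because the $a_i$ are given in binary.

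For hardness, I will reduce from counting subset-sum solutions, $\#\textsc{SubsetSum}(b_1,\ldots,b_N;t)=\#\{S\subseteq[N]:\sum_{i\in S}b_i=t\}$. This problem is $\SharpP$-complete under parsimonious reductions (the standard reduction from $\#3$SAT via digit encoding preserves the number of solutions). Write $\sigma_i=2x_i-1$ with $x_i\in\{0,1\}$. Then $\sum_i a_i\sigma_i=0$ is equivalent to $\sum_i a_ix_i=\tfrac12\sum_i a_i$, so signed partition counting equals counting subsets with sum exactly half the total. Given a subset-sum instance $(b_1,\ldots,b_N;t)$ with $B=\sum_i b_i$, I append two padding elements $c_1=B+t$ and $c_2=2B-t$. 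The new total is $4B$, so the target becomes $2B$.

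The main step is to check that the padding does not create or destroy solutions in an uncontrolled way. A subset with sum $2B$ cannot contain both $c_1$ and $c_2$, since they alone sum to $3B>2B$. It cannot avoid both, since the $b_i$ sum to at most $B<2B$. Assume $0\le t\le B$; otherwise the count is trivially $0$ and we may output that directly. If the subset contains $c_1$, its $b$-part must sum to $B-t$. If it contains $c_2$, its $b$-part must sum to $t$. Hence
\[
\SP(b_1,\ldots,b_N,c_1,c_2)=\#\{S:\textstyle\sum_{S}b_i=t\}+\#\{S:\textstyle\sum_{S}b_i=B-t\}.
\]
Complementation $S\mapsto[N]\setminus S$ is a bijection between the two sets on the right. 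So the signed partition count is exactly twice the subset-sum count, and one oracle call followed by division by $2$ recovers it.

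Two degenerate cases need care. The $a_i$ must be positive. This holds for $c_2$ because $t\le B$, and for $c_1$ whenever $B+t>0$, which is fine since $B\ge1$ once we discard zero $b_i$. Discarding a zero $b_i$ only doubles the count, so we record a factor $2$ for each one removed. I expect the main obstacle to be pinning down a clean citation for the $\SharpP$-completeness of $\#\textsc{SubsetSum}$ with positive integers. If a citation is not at hand, I would sketch the parsimonious base-$(\text{large})$ digit encoding from $\#$Exact-Cover-by-3-Sets (equivalently from $\#$3DM or $\#$perfect matchings). The reduction above is a polynomial-time Turing reduction, in fact a single oracle call with post-processing, which gives the claimed completeness.
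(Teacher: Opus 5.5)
Your proof is correct and follows essentially the same route as the paper: reduce from \#SUBSET-SUM by appending two padding elements, show that every balanced subset contains exactly one of them, and use complementation $S\mapsto[N]\setminus S$ to get $\SP=2C_t$, so that one oracle call followed by division by $2$ suffices. The differences are cosmetic. You use the classical padding $B+t,\,2B-t$, which also handles $t\in\{0,B\}$ directly, whereas the paper uses $B,\,2T$ under the assumption $0<T<B$; you also state membership in $\SharpP$ explicitly, which the paper leaves implicit.
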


\begin{proof}
We reduce from \#SUBSET-SUM, which is a standard $\SharpP$-complete
counting problem; see, for example,
\cite[Section~2.1]{ChengHillWan2013}.  An instance of \#SUBSET-SUM
consists of positive integers $b_1,\ldots,b_t$ with $B:=\sum_{i=1}^t b_i$ and a target integer $T$,
and asks for
\[C_T(b_1,\ldots,b_t)
:=
\#\left\{
S\subseteq \{1,\ldots,t\}:
\sum_{i\in S} b_i=T
\right\}.\]
Note that we may assume that $0< T < B$.
We construct, in polynomial time, an instance of signed partition counting.
Consider the positive integers
$a_i\defeq 
b_i$ for $1\leq i\leq t$, $a_{t+1}\defeq B$, and $a_{t+2} \defeq 2T$.
A sign vector $\sigma\in\{\pm1\}^{t+2}$ with
$\sum_j a_j\sigma_j=0$ is equivalently a choice of a submultiset of
$\{b_1,\ldots,b_t,B,2T\}$
whose sum is $1/2(\sum_{i=1}^t b_i+B+2T)=B+T$.

We first observe that any submultiset of
$\{b_1,\ldots,b_t,B,2T\}$ with sum $B+T$ contains exactly one of the two
distinguished elements $B$ and $2T$.  Indeed, if it contains neither, then
its sum is at most $B$, which is strictly smaller than $B+T$.  If it contains both, then its sum is at least $B+2T$, which is
strictly larger than $B+T$. 

Suppose first that the chosen submultiset contains $B$ and does not contain
$2T$.  Then the contribution from the original elements
$b_1,\ldots,b_t$ must be $T$.  Thus these choices are in bijection with
subsets $S\subseteq\{1,\ldots,t\}$ satisfying
$\sum_{i\in S} b_i=T$.
Suppose next that the chosen submultiset contains $2T$ and does not contain
$B$.  Then the contribution from the original elements must be $B-T$.
Taking complements inside $\{1,\ldots,t\}$ gives a bijection between
subsets of $\{1,\ldots,t\}$ of sum $B-T$ and subsets of sum $T$.

Consequently,
\[
\SP(b_1,\ldots,b_t,B,2T)
=
C_T(b_1,\ldots,b_t)+C_{B-T}(b_1,\ldots,b_t)
=
2C_T(b_1,\ldots,b_t).
\]
Therefore an oracle for signed partition counting computes
$C_T(b_1,\ldots,b_t)$ by one oracle call followed by division by 2.
The construction only appends the two integers $B$ and $2T$, whose bit
lengths are polynomial in the bit length of the original instance.  Hence the
reduction is polynomial time, and signed partition counting is
$\SharpP$-complete.
\end{proof}

We now prove hardness for \Cref{prob:torus-approx}.

\begin{thm}
\label{thm:toric_hardness_en}
For every fixed $0<\eps<1/4$, \Cref{prob:torus-approx} with $\delta=1/3$ is $\SharpP$-hard under randomized polynomial-time Turing reductions.
\end{thm}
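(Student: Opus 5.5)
The plan is to reduce signed partition counting, which is $\SharpP$-complete by \Cref{thm:signed_partition_hard_en}, to \Cref{prob:torus-approx}. Given $a_1,\dots,a_N$ with $A\defeq\sum_i a_i$, I would construct in randomized polynomial time a prime power $q$ and a sparse Laurent polynomial $f$ satisfying two requirements. First, $N(f)$ should be an explicitly computable affine function of $\SP(a_1,\dots,a_N)$, of the form $N(f)=\alpha(q)+\beta(q)\,\SP(a_1,\dots,a_N)$. Second, the additive error scale should satisfy $q^{n+s/2-\rho}\le|\beta(q)|$. Then an answer $\widehat N(f)$ with error at most $\eps q^{n+s/2-\rho}$ and $\eps<1/4$ determines $(\widehat N(f)-\alpha(q))/\beta(q)$ within $1/4$ of an integer, and rounding recovers $\SP$ exactly. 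The failure probability $\delta=1/3$ is driven down by majority vote over $O(\log(1/\delta'))$ repetitions, or by taking the median of the outputs. This is where the randomized Turing reduction is used. Randomness is also needed to find a suitable prime $q$, for instance one with $2A+1\mid q-1$, by sampling from an arithmetic progression and testing primality.

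\textbf{The encoding.} Choose $q$ with $m\defeq 2A+1$ dividing $q-1$, so that the $m$-th roots of unity in $\F_q^\times$ detect whether the integer $\sum a_i\sigma_i\in[-A,A]$ vanishes. The gadget should force each coordinate to be a sign and tie the coordinates together through one monomial relation. I would use variables $x_1,\dots,x_N$ and an auxiliary variable $y$, with binomial-type supports such as $x_i^{2}$ together with a single monomial $\prod_i x_i^{a_i}$. Because the exponents $u_j$ enter the count only through the congruence kernel $K_U(q)$ (\Cref{prop:main_character-formula} and \eqref{eq:K-cardinality}), the count of such a structured $f$ should be expressible through the number of solutions of linear congruences modulo $q-1$. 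Those solution counts reduce to $\SP$: each $x_i$ lies in $\{\pm1\}$, and $\prod x_i^{a_i}$ then depends only on $\sum a_i\sigma_i$ modulo $m$.

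\textbf{Controlling the parameters.} Next I would compute $n$, $s$ and $\rho$ for the chosen $U$ and check that $n+s/2-\rho$ is strictly less than the exponent of $q$ in $\beta(q)$. The paper remarks that the order $q^{n/2}$ is the critical one. This suggests a gadget with $s\approx 2n$ and $\rho\approx n+1$, so that $q^{n+s/2-\rho}\approx q^{n-1}$ while $\beta(q)$ is of order $q^{n-1}$ or larger. If $s$ is too large the scale overwhelms the signal, so the term count has to be balanced against the rank.

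\textbf{Main obstacle.} The hardest step is designing a \emph{single} sparse Laurent polynomial, rather than a system, whose torus count is exactly affine in $\SP$ with a clean coefficient $\beta(q)$. Over $\F_q$ a single equation does not force each $x_i^2=1$. I would try first to exploit the character expansion directly. The idea is to choose $U$ so that the only admissible character tuples with nonvanishing Gauss-sum contribution correspond to sign patterns, and to evaluate the surviving Gauss-sum products exactly, for example by choosing quadratic characters whose Gauss sums are $\pm\sqrt{q}$ or $\pm i\sqrt{q}$. The count then takes the form $\alpha(q)+\beta(q)\SP$ with $\beta(q)$ a known power of $q$ up to sign, and rounding finishes the reduction.
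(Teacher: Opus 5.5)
\textbf{Genuine gap: the proposal never constructs the polynomial.} Your frame is the right one: an affine identity, an error scale below the coefficient, and rounding. But you never exhibit $f$, so the argument does not go through. The gadget you sketch uses monomials $x_i^2$, a monomial $\prod_i x_i^{a_i}$, an auxiliary $y$, and a prime with $2A+1\mid q-1$. As you note yourself, a single equation on the torus of that kind does not force $x_i\in\{\pm1\}$. So you have no identity $N(f)=\alpha(q)+\beta(q)\SP(a)$, and nothing to test the inequality $q^{n+s/2-\rho}\le|\beta(q)|$ against. The fallback through the character expansion is also only a hope, since no $U$ is specified.

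Your parameter heuristic also points toward the harder side. Taking $s\approx 2n$ and $\rho\approx n+1$ gives error scale $q^{n-1}$, comparable to a typical hypersurface count, so $|\beta(q)|\ge q^{n-1}$ becomes very demanding. The paper's remark about $q^{n/2}$ refers to the regime $s=\rho=n$.

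\textbf{The missing idea: exponents of size $\Theta(q)$.} The exponents are part of the input, written in binary, so you may take them of size $\Theta(q)$. Choose an odd prime $q>\max\{2A,10\}$, for instance a random prime in $(2A,4A)$ certified by AKS, and set $h\defeq(q-1)/2$. Then $x\mapsto x^{h}$ is Euler's criterion: it maps $\F_q^\times$ onto $\{\pm1\}$, and every fiber has exactly $h$ elements. Hence the single polynomial $f=\sum_{i=1}^N \bar a_i x_i^{h}$ needs no sign-forcing, no auxiliary variables, and no condition $2A+1\mid q-1$.

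\emph{Exact count.} A torus point is a zero if and only if $\sigma_i\defeq x_i^{h}$ satisfies $\sum_i\bar a_i\sigma_i=0$ in $\F_q$. Since $|\sum_i a_i\sigma_i|\le A<q$, this is equivalent to $\sum_i a_i\sigma_i=0$ in $\Z$. Therefore $N(f)=h^N\SP(a_1,\dots,a_N)$ exactly, that is, $\alpha=0$ and $\beta=h^N$.

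\emph{Rounding.} The augmented support matrix has columns $(1,he_i)$, so $n=s=\rho=N$ and the error scale is $\eps q^{N/2}$. Because $h\ge\sqrt q$ for $q>10$, we get $\eps q^{N/2}<\tfrac14 q^{N/2}\le\tfrac14 h^N$. Rounding $\widehat N(f)/h^N$ therefore returns $\SP$ whenever the oracle succeeds, which happens with probability at least $2/3$.

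\emph{Why the character-expansion route is a detour.} For this $U$, the kernel $K_U(q)$ consists of all $k$ with even coordinates whose sum is $0$ modulo $q-1$. That is $h^{N-1}$ tuples, all with nonvanishing Gauss sums, not just quadratic-character sign patterns. The direct fiber count above is immediate by comparison.
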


\begin{proof}
We reduce from signed partition counting.  Let $a_1,\ldots,a_n$ be positive
integers and $A_0:=\sum_{i=1}^n a_i$.
Consider $f_a=\sum_{i=1}^n \bar a_i x_i^h\in\F_q[x_1^{\pm 1},\cdots, x_n^{\pm 1}]$
where $\bar a_i$ denotes the image of $a_i$ in $\F_q$ and $h\defeq (q-1)/2$.
Choose an odd prime $q$ such that $q>\max\{2A_0,10\}$.
Note that by Bertrand's postulate, there is a prime $q$ with
$2A_0<q<4A_0$, and hence $\log q=O(\log A_0)$, which is polynomial
in the bit length of the signed-partition instance.
Such a prime can be found in randomized polynomial time. By the prime number theorem, a random integer in $(2A_0,4A_0)$ is prime
with probability $\Omega(1/\log A_0)$.  Since primality can be tested in
deterministic polynomial time by \cite{agrawal2004primes}, repeated random sampling
finds such a prime in randomized polynomial time.

For any root $(\phi_i)_i$ of $f_a$, set $\sigma_i:=\phi_i^h\in\{\pm1\}$.  Each sign vector
$\sigma\in\{\pm1\}^n$ has exactly $h^n$ lifts to $(\F_q^\times)^n$.
Moreover, since $q>2A_0$, the equality $\sum_i a_i\sigma_i=0$ in $\Z$
is equivalent to  $\sum_i \bar a_i\sigma_i=0$ in $\F_q$.
Therefore $N(f_a)=h^n\,\SP(a_1,\ldots,a_n)$.
Note that the number of variables is $n$, and the number of monomials is $s=n$.  The support matrix is 
\[
  U=
  \begin{pmatrix}
  1&1&\cdots&1\\
  h&0&\cdots&0\\
  0&h&\cdots&0\\
  \vdots&\vdots&\ddots&\vdots\\
  0&0&\cdots&h
  \end{pmatrix}.
\]
The lower diagonal block has rank $n$, so $\rho=n$.  Hence $n=s=\rho$.
Since $h\ge\sqrt q$ and $0<\eps<1/4$, the approximation error assured in \Cref{prob:torus-approx} satisfies $\eps q^{n+s/2-\rho}=\eps q^{n/2}<1/2\cdot h^n$.
Thus an approximation $\widehat N(f_a)$ determines $\widehat N(f_a)/h^n$ with error less than $1/2$ of the integer $\SP(a_1,\ldots,a_n)$, and rounding
recovers the signed partition count exactly.
 Since in the theorem we set $\delta=1/3$,
this gives a randomized polynomial-time reduction with success
probability at least $2/3$. 
By
\Cref{thm:signed_partition_hard_en}, this concludes the theorem.
\end{proof}

\begin{rem}
\label{rem:hardness-instance-outside-promise}
The reduction above intentionally uses a support matrix whose entries grow
with $q$. Indeed, in the hardness construction we have
$h=(q-1)/2$, hence $\|U\|_\infty=h=\Theta(q)$. Therefore this family does
not satisfy the bounded-support promise in \Cref{thm:intro}.
Moreover, for $U$ in \Cref{thm:toric_hardness_en}, 
the Smith invariant factors are $1,h,\ldots,h$. Since $q-1=2h$, this gives $B_U(q)=h^{n-1}$.
Consequently $\eps_B=\min\{1,\eps/B_U(q)\}$ is of order
$\eps h^{-(n-1)}$, and the sample complexity in \Cref{thm:end-to-end}
is polynomial in $B_U(q)$, not polynomial in $\log q$. This is exactly the
support-parameter tradeoff captured by our results.
\end{rem}

\section{Concluding remarks}
\label{sec:conclusion}
In this paper, we identify Laurent polynomials as a class of structures that are particularly amenable to quantum computation. In the problem setting of exact point count, the problem for Laurent polynomials has the same computational complexity order as the one for ordinary polynomials, known as the $\#$Root$_q^1$ problem. However, when considering the approximation problem studied in this paper, there is a priori no reduction between the two problems. Furthermore, one can show that, similarly to \Cref{thm:toric_hardness_en}, the approximate counting problem for ordinary polynomial systems is also $\#$P-hard under randomized polynomial-time Turing reductions (\Cref{app:Similar results for ordinary polynomials}). 
Recent work \cite{dell2025solving} gives fine-grained lower bounds for exact root counting under \#SETH.
It would be interesting to develop analogous lower bounds for approximate point counting, and to investigate possible quantum fine-grained lower bounds under QSETH \cite{aaronson2019quantum}. 
It would be interesting to develop analogous lower bounds for approximate point counting, and to investigate possible quantum fine-grained lower bounds under QSETH.

\subsection*{Declarations}
The authors used ChatGPT by OpenAI to assist with language editing, exposition, and the organization of background and related-work material. All mathematical statements, proofs, citations, and conclusions were reviewed and verified by the authors, who assume responsibility for all content.

\clearpage
\appendix

\section{Character expansion formula}\label[appendix]{app:prelim}

\subsection{Gauss sums}
\label[appendix]{app:Gauss_sums}
We first record the elementary character identities used throughout this
section.
\begin{lem}\label{lem:orth}
\begin{enumerate}
    \item Additive orthogonality holds; for any $y\in\F_q$,
\[
\frac{1}{q}\sum_{w\in\F_q} \Theta(wy)
=
\begin{cases}
1,& y=0,\\
0,& y\ne 0.
\end{cases}
\]
\item Multiplicative orthogonality holds; for any $t\in\F_q^\times$,
\[
\frac{1}{q-1}\sum_{\chi\in \X}\chi(t)
=
\begin{cases}
1,& t=1,\\
0,& t\ne 1.
\end{cases}
\]
\end{enumerate}
\end{lem}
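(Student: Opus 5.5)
Both parts of \Cref{lem:orth} are instances of one fact: if $\psi$ is a character of a finite abelian group $A$, then $\sum_{a\in A}\psi(a)$ equals $\#A$ when $\psi$ is trivial and $0$ otherwise. I will prove that fact once and then specialize it twice.

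For the general fact, the trivial case is immediate. If $\psi$ is nontrivial, pick $b\in A$ with $\psi(b)\neq 1$. Since $a\mapsto ba$ is a bijection of $A$, reindexing gives $\psi(b)\sum_a\psi(a)=\sum_a\psi(ba)=\sum_a\psi(a)$. Hence $(\psi(b)-1)\sum_a\psi(a)=0$, so the sum vanishes.

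\textbf{Part 1.} Take $A=(\F_q,+)$ and $\psi_y(w)=\Theta(wy)$, which is a character because $\Tr_{\F_q/\F_p}$ is additive.
\begin{itemize}
\item If $y=0$, every term equals $1$, so the normalized sum is $1$.
\item If $y\neq 0$, multiplication by $y$ is a bijection of $\F_q$. So $\psi_y$ is nontrivial exactly when $\Theta$ is nontrivial.
\end{itemize}
The main obstacle is this nontriviality of $\Theta$, i.e.\ that $\Tr_{\F_q/\F_p}$ is not identically zero. The trace is the polynomial $x+x^p+\cdots+x^{p^{r-1}}$ of degree $p^{r-1}<q$, so it cannot vanish on all $q$ elements of $\F_q$. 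Pick $x_0$ with $\Tr(x_0)=c\neq 0$ in $\F_p$. Then $\Theta(x_0)=e^{2\pi i c/p}\neq 1$ since $p\nmid c$. The general fact now gives $0$.

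\textbf{Part 2.} Use the dual form of the fact, summing over characters instead of group elements. Fix the generator $g$ of the cyclic group $\F_q^\times$. Then $\X$ is cyclic of order $q-1$, generated by $\chi$ with $\chi(g)=e^{2\pi i/(q-1)}$.
\begin{itemize}
\item If $t=1$, every term equals $1$, so the normalized sum is $1$.
\item If $t\neq 1$, write $t=g^e$ with $0<e<q-1$. Then $\chi(t)=e^{2\pi i e/(q-1)}\neq 1$.
\end{itemize}
The map $\phi\mapsto\phi(t)$ is a character of the group $\X$, and it is nontrivial because it takes the value $\chi(t)\neq 1$ at $\chi$. Applying the general fact with $A=\X$, or directly summing the geometric series $\sum_{m=0}^{q-2}\chi(t)^m=0$, gives $0$.
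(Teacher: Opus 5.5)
Your proposal is correct and follows the same route as the paper, which simply invokes the standard orthogonality relations for the character groups of $(\F_q,+)$ and $\F_q^\times$ (noting that $w\mapsto\Theta(wy)$ is trivial exactly when $y=0$) and cites Lidl--Niederreiter. You fill in the details the paper leaves to that reference: the reindexing proof of the general orthogonality fact, and the nontriviality of $\Theta$ via the degree of the trace polynomial.
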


\begin{proof}
The first identity is the orthogonality relation for the additive character
group of $\F_q$; the character $w\mapsto\Theta(wy)$ is trivial exactly when
$y=0$. The second is the corresponding orthogonality relation for the
character group $\widehat{\F_q^\times}$.
See also \cite[Ch.~5, Sec.~1]{LidlNiederreiter1997}.
\end{proof}

It is classical that $\abs{\Gauss(\chi)}=\sqrt{q}$ for nontrivial $\chi$, and $\Gauss(\triv)=-1$; see \cite[Ch.~1]{BerndtEvansWilliams1998}.
A key identity is the multiplicative Fourier expansion of $\Theta$ restricted to $\F_q^\times$:

\begin{lem}\label{lem:add-to-mult}
For any $t\in\F_q^\times$,
\[
\Theta(t)=\frac{1}{q-1}\sum_{\chi\in\X} \Gauss(\chi^{-1})\,\chi(t).
\]
\end{lem}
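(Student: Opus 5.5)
The identity is the Fourier inversion of the function $\Theta|_{\F_q^\times}$ on the finite abelian group $\F_q^\times$. I will substitute the definition of the Gauss sum into the right-hand side, exchange the two finite sums, and apply multiplicative orthogonality from \Cref{lem:orth}(2).

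Fix $t\in\F_q^\times$. Using $\Gauss(\chi^{-1})=\sum_{w\in\F_q^\times}\chi^{-1}(w)\Theta(w)$, the right-hand side becomes
\[
\frac{1}{q-1}\sum_{\chi\in\X}\sum_{w\in\F_q^\times}\chi^{-1}(w)\Theta(w)\chi(t)
=\sum_{w\in\F_q^\times}\Theta(w)\cdot\frac{1}{q-1}\sum_{\chi\in\X}\chi(tw^{-1}),
\]
where I use $\chi^{-1}(w)\chi(t)=\chi(tw^{-1})$, valid since $\chi$ is a homomorphism and $w\neq0$. By \Cref{lem:orth}(2), the inner normalized character sum equals $1$ when $tw^{-1}=1$, that is, when $w=t$, and equals $0$ otherwise. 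Only the term $w=t$ survives, and it gives $\Theta(t)$.

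No step here is a genuine obstacle. The one point to check is the convention $\chi^{-1}=\overline{\chi}$, so that $\chi^{-1}(w)=\chi(w^{-1})$ on $\F_q^\times$. The identity also requires $t\ne0$, because the sum defining $\Gauss$ runs only over $\F_q^\times$ and multiplicative characters are only defined there; this is exactly the hypothesis $t\in\F_q^\times$.
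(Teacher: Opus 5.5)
Your proof is correct and matches the paper's argument: expand $\Gauss(\chi^{-1})$ by its definition, interchange the two finite sums, and apply multiplicative orthogonality (\Cref{lem:orth}(2)) to $\chi(tw^{-1})$, so that only the term $w=t$ survives and gives $\Theta(t)$.
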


\begin{proof}
By the definition of the Gauss sum, we have $\Gauss(\chi^{-1})
=
\sum_{u\in\F_q^\times}\chi^{-1}(u)\Theta(u)$.
Hence
\[
\frac{1}{q-1}\sum_{\chi\in\X}\Gauss(\chi^{-1})\chi(t)
=
\frac{1}{q-1}\sum_{\chi\in\X}
\sum_{u\in\F_q^\times}
\Theta(u)\chi^{-1}(u)\chi(t).
\]
Interchanging the two sums gives
\[
\sum_{u\in\F_q^\times}\Theta(u)
\left(
\frac{1}{q-1}\sum_{\chi\in\X}\chi(tu^{-1})
\right).
\]
By multiplicative character orthogonality, the inner average is $1$ if
$tu^{-1}=1$, equivalently $u=t$, and is $0$ otherwise. Therefore the
last expression equals
\[
\sum_{u\in\F_q^\times}\Theta(u)\mathbf{1}_{u=t}
=
\Theta(t).
\]
\end{proof}
As an application of the properties of multiplicative characters, we show a following property for $\Xi_f$ defined in \eqref{eq:Xi-f-def}.
\begin{lem}\label{lem:Xi-f-real}
    $\Xi_f\in\R$ and
    \begin{equation}\label{eq:Xi-f-Re}
        \Xi_f=\frac{1}{\# K_U(q)}\sum_{k\in K_U(q)}\operatorname{Re}T(k).
    \end{equation}
\end{lem}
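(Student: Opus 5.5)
The plan is to show that the involution $k\mapsto -k$ on $K_U(q)$ pairs each summand $T(k)$ with its complex conjugate. Averaging $T(k)$ and $\overline{T(k)}$ then shows that $\Xi_f$ equals its own conjugate, hence is real and equals the average of $\operatorname{Re}T(k)$.

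First, $K_U(q)=\ker(\overline U)$ is a subgroup of $(\Z/(q-1)\Z)^s$, so it is closed under negation, and $k\mapsto -k$ is a bijection of $K_U(q)$ onto itself. Moreover $(-k)_j=0$ exactly when $k_j=0$, so $z(-k)=z(k)$ and both $\Phi(k)$ and $\Phi(-k)$ run over the same index set $\{j:k_j\neq0\}$.

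The key step is to prove $\Phi(-k)=\overline{\Phi(k)}$ factor by factor. Since characters take values on the unit circle, $\chi^{-k_j}(a_j)=\overline{\chi^{k_j}(a_j)}$. For the Gauss-sum factor we need $\Gauss(\psi)=\overline{\Gauss(\psi^{-1})}$ up to the sign $\psi(-1)$. Concretely, conjugating gives $\overline{\Gauss(\psi)}=\sum_t\psi^{-1}(t)\Theta(-t)=\psi(-1)\Gauss(\psi^{-1})$. With $\psi=\chi^{k_j}$ this gives
\[
\Gauss(\chi^{k_j})=\chi^{k_j}(-1)\,\overline{\Gauss(\chi^{-k_j})}.
\]
Multiplying over all $j$ with $k_j\neq0$, and using $\sqrt q$ real, we get
\[
\Phi(-k)=\Bigl(\prod_{j}\chi^{k_j}(-1)\Bigr)\overline{\Phi(k)}=\chi(-1)^{\sum_j k_j}\,\overline{\Phi(k)}.
\]

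The expected obstacle is removing this sign, and admissibility handles it. The first row of $U$ is all ones, so every $k\in K_U(q)$ satisfies $\sum_{j=1}^s k_j\equiv 0 \pmod{q-1}$. Terms with $k_j=0$ contribute nothing to the sum anyway. Hence $\chi(-1)^{\sum_j k_j}=\chi^{\sum_j k_j}(-1)=\triv(-1)=1$, and therefore $T(-k)=\overline{T(k)}$, since the prefactor $(-1)^{z(k)}q^{-z(k)/2}$ is real and unchanged under $k\mapsto-k$.

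Finally, reindexing by the bijection gives
\[
\Xi_f=\frac{1}{\#K_U(q)}\sum_k T(-k)=\overline{\Xi_f},
\]
so $\Xi_f\in\R$. Averaging the two expressions for $\Xi_f$ gives $\Xi_f=\frac{1}{\#K_U(q)}\sum_k \tfrac12\bigl(T(k)+\overline{T(k)}\bigr)$, which is exactly \eqref{eq:Xi-f-Re}.
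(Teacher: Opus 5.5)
Your proposal is correct and follows essentially the same route as the paper: you show $\overline{\Phi(k)}=\Phi(-k)$ via the conjugation identity for Gauss sums, and you cancel the sign $\chi^{\sum_j k_j}(-1)$ using the all-ones row of $U$. You then deduce $\overline{T(k)}=T(-k)$ from $z(-k)=z(k)$ and reindex by the bijection $k\mapsto -k$ of $K_U(q)$ to get $\overline{\Xi_f}=\Xi_f$, hence the real-part formula.
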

\begin{proof}
    First we have
    \begin{equation}
        \overline{\Phi(k)} = \prod_{\substack{1\le j\le s\\ k_j\ne 0}}\dfrac{\overline{\Gauss(\chi^{-{k_j}})}}{\sqrt q}\overline{\chi^{k_j}(a_j)}
        = \prod_{\substack{1\le j\le s\\ k_j\ne 0}}\dfrac{\chi^{k_j}(-1)\Gauss(\chi^{{k_j}})}{\sqrt q}\chi^{-k_j}(a_j)
        = \Phi(-k),
    \end{equation}
    where we use $\overline{G(\chi^{-k_j})}=\sum_t\chi^{k_j}(t)\Theta(-t)=\chi^{k_j}(-1)\sum_t\chi^{k_j}(t)\Theta(t)=\chi^{k_j}(-1)G(\chi^{k_j})$ in the second equality and $\prod_j \chi^{k_j}(-1) = \chi^{\sum_j k_j}(-1) = \chi^0(-1) = 1$ in the last equality.
    This readily yields $\overline{T(k)}=T(-k)$ due to $z(-k)=z(k)$.
    Therefore
    \begin{equation}
        \overline{\Xi_f} = \frac{1}{\# K_U(q)}\sum_{k\in K_U(q)}\overline{T(k)}
        = \frac{1}{\# K_U(q)}\sum_{k\in K_U(q)}T(-k)
        = \Xi_f,
    \end{equation}
    where the last equality follows from the fact that if $k \in K_U(q)$, then $ -k \in K_U(q)$, and a map $\iota(k)=-k$ is bijective.
    Thus $\Xi_f \in \R$.
    This readily yields
    \begin{equation}
        \Xi_f = \operatorname{Re}\Xi_f
        = \frac{1}{\# K_U(q)}\sum_{k\in K_U(q)}\operatorname{Re}T(k).
    \end{equation}
\end{proof}

\subsection{Proof of \Cref{prop:main_character-formula}}
Let $f(x)\;=\;\sum_{j=1}^s a_j x^{u_j}$ be a Laurent polynomial.
By additive orthogonality (\Cref{lem:orth} (1)),
\[
\mathbf{1}_{f(x)=0}
=
\frac{1}{q}\sum_{w\in\F_q}\Theta(w f(x)).
\]
Summing over $x\in(\F_q^\times)^n$ yields
\begin{equation}\label{eq:Nstar-exp}
N(f)
=
\sum_{x\in(\F_q^\times)^n}\mathbf{1}_{f(x)=0}
=
\frac{1}{q}\sum_{w\in\F_q}\ \sum_{x\in(\F_q^\times)^n}\Theta(w f(x)).
\end{equation}
The $w=0$ term contributes $(q-1)^n/q$.

For $w\ne 0$ and each monomial term we use \Cref{lem:add-to-mult}:
\[
\Theta(w a_j x^{u_j})
=
\frac{1}{q-1}\sum_{\chi_j\in\X}\Gauss(\chi_j^{-1})\,\chi_j(w a_j x^{u_j}).
\]
Multiplying over $j=1,\dots,s$ and inserting into \eqref{eq:Nstar-exp} gives
\begin{align}
N(f)
&=
\frac{(q-1)^n}{q}
+\frac{1}{q}\sum_{w\in\F_q^\times}\sum_{x\in(\F_q^\times)^n}
\prod_{j=1}^s
\left(\frac{1}{q-1}\sum_{\chi_j\in\X}\Gauss(\chi_j^{-1})\,\chi_j(w a_j x^{u_j})\right)\notag\\
&=
\frac{(q-1)^n}{q}
+\frac{1}{q}\cdot \frac{1}{(q-1)^s}
\sum_{(\chi_1,\dots,\chi_s)\in\X^s}
\left(\prod_{j=1}^s \Gauss(\chi_j^{-1})\,\chi_j(a_j)\right)\cdot S\cdot T,
\label{eq:expanded}
\end{align}
where
\[
S:=\sum_{w\in\F_q^\times}\prod_{j=1}^s \chi_j(w),
\qquad
T:=\sum_{x\in(\F_q^\times)^n}\prod_{j=1}^s \chi_j(x^{u_j}).
\]
Now
\[
S=\sum_{w\in\F_q^\times}\Big(\prod_{j=1}^s \chi_j\Big)(w)
=
\begin{cases}
q-1,& \prod_{j=1}^s \chi_j=\triv,\\
0,& \text{otherwise},
\end{cases}
\]
by multiplicative orthogonality.

For $T$, write $x=(x_1,\dots,x_n)$ and note
\[
\prod_{j=1}^s \chi_j(x^{u_j})
=
\prod_{j=1}^s \chi_j\!\Big(\prod_{i=1}^n x_i^{u_{ij}}\Big)
=
\prod_{i=1}^n \left(\prod_{j=1}^s \chi_j^{u_{ij}}(x_i)\right),
\]
so the sum factorizes:
\[
T=\prod_{i=1}^n \sum_{x_i\in\F_q^\times}\left(\prod_{j=1}^s \chi_j^{u_{ij}}\right)(x_i).
\]
Each factor is $q-1$ if $\prod_j \chi_j^{u_{ij}}=\triv$, else $0$.
Thus $T=(q-1)^n$ exactly when all $n$ relations hold.
Putting this back into \eqref{eq:expanded} yields the conclusion.

\subsection{Smith normal form}
\label[appendix]{app:smith_normal_form}
We recall a standard fact used to count solutions to linear congruences.
Let the \emph{Smith normal form} (SNF) of $U$ over $\Z$ be
\[P U Q = \mathrm{diag}(d_1,\dots,d_{\rho},0,\dots,0),
\qquad d_1\mid\cdots\mid d_{\rho},\]
where $P\in \mathrm{GL}_{n+1}(\Z)$, $Q\in \mathrm{GL}_s(\Z)$ as \eqref{eq:smith normal form of U} \cite{Newman1971SNF,vonZurGathenGerhard2013}.
We give a Smith normal form formula for the set of admissible character indices.

\begin{prop}\label{prop:snf-kernel}
\begin{enumerate}
    \item After choosing a generator of $\X$, we have an isomorphism $\calA_U(q) \cong K_U(q)$.
    \item The order of $K_U(q)$ is 
    \[
\begin{aligned}
(q-1)^{s-\rho}\prod_{i=1}^{\rho} \gcd(d_i,q-1).
\end{aligned}
\]
\end{enumerate}
\end{prop}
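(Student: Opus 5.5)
For part (1), fix a generator $g$ of $\F_q^\times$ and let $\chi$ be the generator of $\X$ with $\chi(g)=e^{2\pi i/(q-1)}$. Then $k\mapsto\chi^k$ is an isomorphism $\Z/(q-1)\Z\cong\X$, and it extends coordinatewise to an isomorphism $(\Z/(q-1)\Z)^s\cong\X^s$, $k\mapsto(\chi^{k_1},\dots,\chi^{k_s})$. Under this map, $\prod_j\chi_j=\triv$ becomes $\chi^{\sum_j k_j}=\triv$, that is, $\sum_j k_j\equiv 0 \pmod{q-1}$. Similarly, $\prod_j\chi_j^{u_{ij}}=\triv$ becomes $\sum_j u_{ij}k_j\equiv0\pmod{q-1}$. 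These $n+1$ congruences say exactly that $\overline{U}k=0$, because the rows of $U$ are $(1,\dots,1)$ and the rows $(u_{ij})_j$. Hence the isomorphism carries $\calA_U(q)$ onto $K_U(q)=\ker\overline U$. Both sets are subgroups, so this is a group isomorphism.

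For part (2), reduce the Smith normal form modulo $q-1$. Since $P\in\GL_{n+1}(\Z)$ and $Q\in\GL_s(\Z)$, their reductions $\overline P,\overline Q$ are invertible over $\Z/(q-1)\Z$, because their determinants are $\pm1$. From $\overline U=\overline P^{-1}D\,\overline Q^{-1}$ with $D=\diag(d_1,\dots,d_\rho,0,\dots,0)$, we get $\ker\overline U=\overline Q\,\ker\overline D$. In particular $\#K_U(q)=\#\ker\overline D$.

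The kernel of $\overline D$ is a product of coordinate conditions. For $i\le\rho$ the condition is $d_iy_i\equiv0\pmod{q-1}$. For $\rho<i\le s$ there is no condition, since those columns of $D$ are zero; here we need $\rho\le\min(n+1,s)$, which holds because $\rho=\rank U$. The congruence $d_iy\equiv0$ in $\Z/N\Z$ has exactly $\gcd(d_i,N)$ solutions, namely the multiples of $N/\gcd(d_i,N)$. Multiplying these counts gives $(q-1)^{s-\rho}\prod_{i=1}^{\rho}\gcd(d_i,q-1)$.

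There is no real obstacle. The only point needing care is that unimodularity survives reduction modulo $q-1$, which follows from the determinant being $\pm1$. The same parametrization $k=\overline Q y$, with $y$ drawn uniformly from the explicit product set above, is what yields the uniform sampler.
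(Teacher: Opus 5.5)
Your proof is correct and follows the paper's argument: you identify characters with exponent vectors via a generator so that the relations become $\overline{U}k=0$, then reduce the Smith normal form modulo $q-1$ and count the solutions of the diagonal congruences coordinate by coordinate. Your additional details (invertibility of $\overline{P},\overline{Q}$ because $\det=\pm1$, and the explicit identity $\ker\overline{U}=\overline{Q}\,\ker\overline{D}$) only spell out what the paper states in one line.
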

\begin{proof}
(1) Fix a generator $g$ of $\F_q^\times$.
Then each $\chi\in\X$ is determined by $\chi(g)=\zeta_{q-1}^k$ for a unique $k\in\Z/(q-1)\Z$.
Hence an $s$-tuple $(\chi_1,\dots,\chi_s)$ corresponds to a vector $k=(k_1,\dots,k_s)\in(\Z/(q-1)\Z)^s$.
The constraints
\[
\prod_{j=1}^s \chi_j=\triv,\qquad
\prod_{j=1}^s \chi_j^{u_{ij}}=\triv \ (1\le i\le n)
\]
become
\[
\sum_{j=1}^s k_j\equiv 0\pmod{q-1},\qquad
\sum_{j=1}^s u_{ij}k_j\equiv 0\pmod{q-1},
\]
i.e.\ $Uk\equiv 0\pmod{q-1}$ for the augmented exponent matrix $U$.

(2) Let $PUQ=D$ be an SNF with $D=\mathrm{diag}(d_1,\dots,d_{\rho},0,\dots,0)$.
Over $\Z/(q-1)\Z$, the matrices $P$ and $Q$ are invertible, so $\ker(\overline{U})\cong \ker(\overline{D})$.
The system $\overline{D} y=0$ means $d_i y_i\equiv 0\pmod{q-1}$ for $i\le \rho$, while the remaining $s-\rho$ coordinates are free.
The number of solutions to $d_i y_i\equiv 0\pmod{q-1}$ is $\gcd(d_i,q-1)$, yielding the formula.
\end{proof}
Thus \Cref{prop:snf-kernel} gives
\[
\# K_U(q)=(q-1)^{s-\rho}B_U(q).
\]
This factor is used in the precise sampling and running-time analysis in \Cref{sec:approximate-point-counting}.
We now present an algorithm to extract elements from $K_U(q)$, which is needed for Monte Carlo sampling.
By \eqref{eq:smith normal form of U}, any element in $K_U(q)$ can be written as $y = (y_i)_{i=1}^s$ with $\diag (d_1,\cdots,d_\rho,0,\cdots,0) y \equiv 0 \pmod{q-1}$.
Combined with the proof of \Cref{prop:snf-kernel}, we obtain  \Cref{alg:snf-sampler}.

\begin{algorithm}[H]
\caption{\textsc{SNFSampler}$(\mathsf{SNF}_{U,q})$}
\makeatletter\protected@edef\@currentlabelname{\textsc{SNFSampler}}\makeatother
\label{alg:snf-sampler}
\begin{algorithmic}[1]
    \REQUIRE Precomputed data
    $
    \mathsf{SNF}_{U,q}
    =
    \left(
    q,\rho,\bar Q,(g_i)_{i=1}^s,(c_i)_{i=1}^s
    \right),
    $
    where $\bar Q=Q\pmod{q-1}$,
    $
    g_i=\gcd(d_i,q-1)\,(1\le i\le \rho),
    \,
    g_i=q-1\,(\rho<i\le s),
    $
    and
    $
    c_i=(q-1)/g_i\,(1\le i\le s).
    $

    \FOR{$i=1,\ldots,s$}
        \STATE Sample independent random integers
        $
        r_i\sim\operatorname{Unif}\{0,1,\ldots,g_i-1\}.
        $
        \STATE Set
        $
        y_i= c_i r_i \pmod{q-1}.
        $
    \ENDFOR

    \STATE Set
    $
    y= (y_1,\ldots,y_s)^\top\in(\Z/(q-1)\Z)^s.
    $

    \STATE Set
    $
    k= \bar Q y \pmod{q-1}.
    $
    
    \STATE \textbf{Return} $k$.
\end{algorithmic}
\end{algorithm}
The map
$
(r_i)_{i=1}^s
\longmapsto
y=(c_1r_1,\ldots,c_sr_s)\pmod{q-1}
$
is a bijection from
$
\prod_{i=1}^s \{0,\ldots,g_i-1\}
$
onto $\ker(\overline D)$, where
$\overline D=\operatorname{diag}(d_1,\ldots,d_\rho,0,\ldots,0)\pmod{q-1}$.
Since $\bar Q$ is invertible over $\Z/(q-1)\Z$, the map $y\mapsto \bar Qy$
is a bijection from $\ker(\overline D)$ to $K_U(q)$.  Therefore
\Cref{alg:snf-sampler} outputs the uniform distribution on $K_U(q)$.
Next, we estimate its computational cost.

\begin{prop}\label{Prop:enum}
 There is a deterministic classical algorithm that computes $\overline P\in GL_{n+1}(\Z/(q-1)\Z)$ and $\overline Q\in GL_s(\Z/(q-1)\Z)$,
together with a diagonal matrix $\overline{D}=\operatorname{diag}(\bar d_1,\dots,\bar d_\rho,0,\dots,0)$
such that $\overline P\,\overline{U}\,\overline Q=\overline{D}$ over $\Z/(q-1)\Z$, with 
$T_{\rm SNF}(U,q)\defeq\widetilde O\!\left(m^6 L\right)$ bit complexity.
After this preprocessing, \Cref{alg:snf-sampler} samples uniformly from $K_U(q)$ with $\widetilde O(s^2\log q)$
bit operations.
\end{prop}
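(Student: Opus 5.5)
The plan is to compute the Smith normal form of $U$ over $\Z$ with a polynomial-time algorithm that controls coefficient growth, and then reduce everything modulo $q-1$. Over $\Z$ itself the transforms $P,Q$ can have entries far larger than the entries of $U$, so the work is organized to avoid carrying them at full size.

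\textbf{Computing the invariant factors.} The $d_i$ are determinantal-divisor quotients. The $i$-th determinantal divisor is at most $(\rho^{1/2}H)^{\rho}$ by Hadamard's inequality, as in the proof of \Cref{thm:intro}. Hence every $d_i$ has $O(m(\log m+\log H))=O(mL)$ bits. I would first compute $\Delta$, the gcd of the $\rho\times\rho$ minors, or simply a nonzero multiple of it such as one nonzero maximal minor. Then I would run the standard modular SNF algorithm (Kannan--Bachem style elimination, or the modular approach in \cite{vonZurGathenGerhard2013}) over $\Z/N\Z$ with $N=\Delta\cdot(q-1)$ or $N=\mathrm{lcm}$-type modulus. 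This yields the $d_i$, and therefore the $g_i=\gcd(d_i,q-1)$ and $B_U(q)$, with all intermediate numbers of $O(mL)$ bits. The elimination consists of $O(m^3)$ extended-gcd and row/column operations on an $m\times m$ array. Each operation acts on $O(m)$ entries of $O(mL)$ bits, at cost $\widetilde O(mL)$ per entry. Bounding the number of gcd steps by $O(m^2)$ per pivot gives the stated $\widetilde O(m^6L)$ total.

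\textbf{Computing the transforms modulo $q-1$.} For $\overline P,\overline Q$, I would not work over $\Z$ at all. Instead I would perform the elimination directly over the ring $R=\Z/(q-1)\Z$, which is a principal ideal ring, so an SNF exists there. The unimodular operations are: swaps; adding a multiple of one row or column to another; and $2\times 2$ blocks $\begin{pmatrix}s&t\\-b/g&a/g\end{pmatrix}$ from the extended gcd $sa+tb=g$, with the gcd taken in $R$ via $\gcd(\cdot,q-1)$. Each of these is invertible over $R$. Accumulating these operations gives $\overline P\in\GL_{n+1}(R)$ and $\overline Q\in\GL_s(R)$ with $\overline P\,\overline U\,\overline Q=\overline D$. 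Here the diagonal entries $\bar d_i$ are associates of $\gcd(d_i,q-1)$. This is all the sampler needs, because only $\ker\overline D$ and an invertible $\overline Q$ matter; the argument preceding the proposition then shows uniformity. Entries stay below $q$, so each ring operation costs $\widetilde O(\log q)\le\widetilde O(L)$, and the count of operations is again polynomial in $m$, within $m^6$.

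\textbf{The main obstacle.} The hard step is the careful operation count for diagonalization over $R$, which is not a domain. One must guarantee that each pivot step terminates after $O(m)$ gcd rounds, and that the divisibility chain can be enforced by the standard fix-up adding a row. I would handle termination by measuring the pivot through $\gcd(\text{pivot},q-1)$: this quantity strictly decreases in the divisor lattice of $q-1$, so it decreases at most $O(\log q)$ times. This may introduce an extra $\log q$ factor, which is absorbed into $\widetilde O$ and $L$.

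\textbf{Sampling cost.} Given this data, \Cref{alg:snf-sampler} draws $s$ uniform integers below $q$ (expected $\widetilde O(s\log q)$ by rejection sampling), forms the $y_i$, and computes $\bar Q y$. That is $s^2$ multiplications and additions modulo $q-1$, for $\widetilde O(s^2\log q)$ bit operations in total.
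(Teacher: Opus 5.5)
Your proposal is correct and follows essentially the paper's route. The paper carries out Smith reduction directly over the principal ideal ring $\Z/(q-1)\Z$, accumulating the row and column transforms into $\overline P,\overline Q$ at $\widetilde O(L)$ bits per ring operation, and then samples $k=\overline Q y$ for uniform $y\in\ker\overline D$ at $\widetilde O(s^2\log q)$ cost. Your divisor-lattice argument bounding the row/column alternations per pivot is more explicit than the paper's bare ``conservative $O(m^6)$'' count.

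The preliminary computation of the integer $d_i$ modulo $\Delta(q-1)$ is unnecessary, because $g_i=\gcd(\bar d_i,q-1)$ can be read off $\overline D$ directly. That step's claim of $O(m^2)$ gcd steps per pivot is also unjustified, since the alternations there are only bounded by $O(\log(\Delta(q-1)))$. You can simply drop it.
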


\begin{proof}
The modular Smith reduction over the principal ideal ring $\Z/(q-1)\Z$ is standard.  The existence of Smith forms over principal ideal rings follows from the elementary divisor property, and modular Smith operations are discussed in \cite[Theorem~2.1(4)]{Stanley2016SNF} and \cite[Sec.~2]{Storjohann1996}.
The Smith form of an $(n+1)\times s$ matrix can be computed in polynomial
matrix-arithmetic time.  We use a conservative dense bound $O(m^6)$ ring
operations, including the accumulation of the row and column elementary
transformations giving $\overline{P}$ and $\overline{Q}$.
Using fast integer multiplication and half-gcd algorithms, gcd and extended-gcd
operations on $O(L)$-bit integers cost $\widetilde O(L)$ bit operations \cite[Chs.~3 and 11]{vonZurGathenGerhard2013}.  Applying the usual pivot reduction to the $(n+1)\times s$ matrix $\overline U$ and accumulating the row and column multipliers gives the displayed conservative bound for $T_{\rm SNF}(U,q)$.

It remains to describe the sampler.  Since $\overline P$ is invertible over $\Z/(q-1)\Z$, the condition $\overline U k=0$ is equivalent to $\overline D y=0$ with $k=\overline Qy$.  
The congruence $\bar d_i y_i\equiv0\pmod{q-1}$ has exactly $g_i$ solutions,
\[
y_i=\frac{q-1}{g_i}r_i\pmod{q-1},
\qquad r_i\in\{0,\dots,g_i-1\}.
\]
The remaining coordinates $y_{\rho+1},\dots,y_s$ are free in $\Z/(q-1)\Z$.  Choosing all these parameters independently and uniformly therefore gives a uniform $y$ in $\ker(\overline D)$, and $k=\overline Qy$ is uniform in $K_U(q)$.  The dense multiplication by the $s\times s$ matrix $\overline Q$ costs $\widetilde O(s^2\log q)$ bit operations.
\end{proof}

\section{Quantum Fourier transform}
\label[appendix]{app:QFT}
The classical discrete Fourier transform (DFT) and the quantum Fourier transform (QFT) are represented by essentially the same Fourier matrix, but they are used in different computational models.  
The DFT or QFT transforms a state $f \in \C^N$ into a state $\hat f$, where $\hat f \in \C^N$ is the Fourier transform $\hat f$ of $f$.
While the classical fast Fourier transform (FFT) needs $O(N \log N)$ operations, the QFT can be implemented by $O((\log N)^2)$ quantum gates and is thus very efficient.
However, the QFT is not a direct quantum replacement for the classical FFT: due to the readout limitation, one cannot generally extract the entire list of Fourier coefficients from a single quantum state.  Its algorithmic value instead lies in using interference to extract global information, such as periods, hidden subgroups, or phases encoded in character sums.

The first QFT needed in this paper is the cyclic QFT $F_N$ over $\Z/N\Z$.  When $N$ is a power of two, this is the standard QFT circuit and can be implemented exactly in our gate model using $O((\log N)^2)$ gates \cite{cleve1998Quantum}. 
When $N$ is not a power of two, the situation is more delicate.
Efficient approximate QFTs over arbitrary cyclic groups were studied by
Hales and Hallgren~\cite{hales2000Improved}. 
For the present error analysis, we give below a self-contained operator-norm implementation bound, obtained by combining coherent phase estimation with standard reversible arithmetic. 
The resulting quadratic polylogarithmic bound is sufficient for our application.

The second QFT needed in this paper is the additive QFT over the finite field $\F_q$, where $q=p^r$.
Fixing the nontrivial additive character $\Theta(x)=\omega_p^{\operatorname{Tr}_{\F_q/\F_p}(x)}$, this transform is
\[
    F_{\F_q}\ket{x}
    =
    \frac1{\sqrt q}
    \sum_{y\in\F_q}
    \Theta(xy)\ket{y}.
\]
If $\F_q$ is represented using an $\F_p$-basis $\mathcal B=(b_1,\ldots,b_r)$, then $F_p^{\otimes r}$ implements this transform directly only when $\mathcal B$ is self-dual for the trace pairing.  For a general coordinate representation, one must account for the Gram matrix
\[
    M_{\mathcal B}
    =
    \bigl(\operatorname{Tr}_{\F_q/\F_p}(b_i b_j)\bigr)_{i,j}.
\]
If the QFT output were immediately measured, the change of coordinates induced by $M_{\mathcal B}^{-1}$ could be absorbed into a classical relabeling of the measurement outcome.
In our algorithm, however, the QFT output is subsequently fed into coherent finite-field arithmetic.
Therefore this relabeling must be implemented coherently by a reversible linear map $\ket{z}\mapsto \ket{M_{\mathcal B}^{-1}z}$ so that the output returns to the same finite-field encoding used by the arithmetic circuits.

The role of this section is therefore implementational.
In \Cref{alg:hadamard-test} shown in the following appendix, we follow the Gauss-sum framework of van Dam \cite{van2003quantum,dam2004Quantum}.
The multiplicative-character part of the Gauss sum is supplied by chi states, whose preparation uses the cyclic QFT over $\Z/(q-1)\Z$. 
The additive-character part is supplied by the Gauss-sum phase oracle, whose construction uses the additive QFT over $\F_q$. 
Since both transforms are used coherently inside the quantum subroutine, the bounds below are stated in operator norm: first for the cyclic QFT $F_N$, using a coherent phase estimation implementation, and then for the finite-field QFT $F_{\F_q}$, including the coherent linear change of coordinates associated with $M_{\mathcal B}$.

\begin{prop}[QFT $F_N$ over $\Z/N\Z$]\label{prop:QFT_over_ZN}
    Let $N\ge 2$, and define
    \[
        F_N
        :=
        \frac1{\sqrt N}
        \sum_{x,y=0}^{N-1}
        \omega_N^{xy}\ket{y}\bra{x},
        \qquad
        \omega_N:=e^{2\pi i/N}.
    \]
    For every $\varepsilon\in(0,1)$, there is a quantum circuit that implements $F_N$ to operator-norm error at most
    $\eps$ using
    $
        O\!\left((\log N+\log(1/\eps))^2\right)
    $
    gates.
\end{prop}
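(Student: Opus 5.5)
The plan is to implement $F_N$ for arbitrary $N$ via coherent phase estimation (the Kitaev approach): embed $\Z/N\Z$ into an $n$-qubit register with $2^n\ge N$, and realize $F_N$ as a composition of reversible arithmetic with power-of-two QFTs, which are exact in our gate set with $O(n^2)$ gates \cite{cleve1998Quantum}. The key observation is that the Fourier basis states $\ket{\hat y}:=F_N\ket{y}=\frac1{\sqrt N}\sum_x\omega_N^{xy}\ket{x}$ are exactly the eigenvectors of the cyclic shift $X_N:\ket{x}\mapsto\ket{x-1 \bmod N}$ (or $x+1$, depending on sign convention), with eigenvalue $\omega_N^{y}$. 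Powers $X_N^{2^j}$ are reversible modular additions of a classical constant, so they cost $O(\log N)$ gates each, and controlled versions cost the same order.

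The construction proceeds in four steps. First, from $\ket{y}\ket{0}$, prepare $\ket{y}\ket{\hat y}$; this is possible because the uniform superposition over $\Z/N\Z$ twisted by the phase $\omega_N^{xy}$ can be prepared by creating $\frac1{\sqrt N}\sum_x\ket{x}$ and applying a controlled phase $\omega_N^{xy}$ through phase kickback from a product register. Second, erase $\ket{y}$ using phase estimation on $\ket{\hat y}$ with respect to $X_N$: using $t=\lceil\log_2 N\rceil+O(\log(1/\eps))$ ancilla qubits, compute an estimate of $y/N$, round to the nearest multiple of $1/N$ by reversible arithmetic to obtain $\tilde y$, subtract it from the $y$ register, and then uncompute the phase estimation. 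Third, the result is $\ket{0}\ket{\hat y}$ up to an error that I will bound. Fourth, swap registers. The map $\ket{y}\mapsto\ket{\hat y}$ is then implemented by a unitary that approximates it on every basis input; linearity together with the coherent uncomputation gives an operator-norm bound.

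For the uniform superposition over $\{0,\dots,N-1\}$ needed in the first step, I will either use exact amplitude amplification (one round after rescaling the amplitude) or, more simply, obtain it as a byproduct of running the inverse of the second step on $\ket{0}$. For the error analysis, I will use the standard fact that phase estimation with $t$ ancillas returns the nearest grid point with failure amplitude $O(2^{-(t-\log N)/2})$. The resulting operator-norm error for a basis input is $O(\sqrt{\text{failure prob}})$ after uncomputation, and the error stays uniform across superpositions because the erroneous components are orthogonal garbage controlled by $y$. Choosing $t=\log N+O(\log(1/\eps))$ yields error $\eps$. The gate count is one QFT on $t$ qubits, costing $O(t^2)$, plus $t$ controlled modular additions on $O(\log N)$ bits, costing $O(t\log N)$. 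Together with the rounding arithmetic, this totals $O((\log N+\log(1/\eps))^2)$.

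The main obstacle is the operator-norm (rather than per-input) error bound. Phase estimation leaves $y$-dependent residual garbage, so I must argue that the ideal map and the implemented map differ by an operator whose action on each $\ket{y}$ has norm $\le\eps'$. Crucially, the images of the errors for distinct $y$ are not orthogonal in general, so summing over superpositions naively loses a factor $\sqrt N$. To handle this, I will instead bound the difference via the exact phase-estimation structure: for the eigenvector $\ket{\hat y}$, the ancilla state is $\sum_m\alpha_{y,m}\ket{m}$, and the deviation is confined to the ancilla tensor the orthogonal eigenvectors $\ket{\hat y}$. This orthogonality across $y$ gives the uniform bound $\max_y\|\text{err}_y\|$.
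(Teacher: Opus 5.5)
Your proposal is correct and takes essentially the same route as the paper. Both use the Childs--van Dam coherent phase-estimation construction: prepare $\ket{y}\ket{\hat y}$ from a uniform superposition and the controlled phase $\omega_N^{xy}$, then erase $y$ by phase estimation of the cyclic shift with $t=\log N+O(\log(1/\eps))$ ancillas, followed by rounding, subtraction and uncomputation, for $O(t^2)$ gates in total. Your handling of the operator-norm issue is also the right one: every error component carries the orthonormal eigenvector $\ket{\hat y}$ in the output register, so errors for distinct inputs are orthogonal. This is exactly what justifies the identity $\|\widetilde F_N(\ket{\psi}\ket{0}\ket{0})-\ket{0}(F_N\ket{\psi})\ket{0}\|^2=\sum_x|\alpha_x|^2\,\|\ket{\Psi_x}-\ket{0}\ket{\hat x}\ket{0}\|^2$, which the paper uses without comment.
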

\begin{proof}
    Let $b_N = \lceil \log_2 N \rceil$.
    When $N$ is a power of two, it is typically known that the QFT circuit can be implemented exactly using $O(b_N^2)$ gates \cite{cleve1998Quantum}.
    Below, we deal with the case when $N$ is not a power of two.
    
    The construction here follows from \cite{childs2010Quantum} using quantum phase estimation.
    The goal is to implement $\ket{x} \mapsto \ket{\hat x} = F_N \ket{x}$.
    It is split into two steps.
    The first step is $\ket{x}_A\ket{0}_B \mapsto \ket{x}_A\ket{\hat x}_B$ for $b_N$-qubit registers $A$ and $B$.
    The second step is $\ket{x}_A\ket{\hat x}_B \mapsto \ket{0}_A\ket{\hat x}_B$. 
    Noting that a modular adder $U_{\mathrm{add}}:\ket{x}\mapsto\ket{x+1}$ satisfies that $U_{\mathrm{add}}\ket{\hat x}=\omega_N^{-x}\ket{\hat x}$, we can use it for phase estimation to erase the first register $A$.

    We first implement the map $\ket{x}_A\ket{0}_B \mapsto \ket{x}_A\ket{\hat x}_B$.
    Prepare the interval state $\ket{u_N} \defeq (1/\sqrt{N})\sum_{y=0}^{N-1}\ket{y}$ in register $B$.
    This can be done by amplifying the uniform superposition over all $2^{b_N}$ computational basis states into the target subspace $G_y = \{0,...,N-1\}$ via exact amplitude amplification.
    Considering the cost of marking the target subspace, $O(b_N)$ Toffoli gates, and the initial overlap $O(\sqrt{N/2^{b_N}})$, it costs $O(b_N)$ gates.
    Next, apply diagonal controlled phase $V: \ket{x}_A\ket{y}_B \mapsto \omega_N^{xy}\ket{x}_A\ket{y}_B$.
    Writing $x=\sum_{\alpha=0}^{b_N-1}2^\alpha x_\alpha$ and
    $y=\sum_{\beta=0}^{b_N-1}2^\beta y_\beta$, this phase decomposes as
    \[
        \omega_N^{xy}
        =
        \prod_{\alpha,\beta=0}^{b_N-1}
        \exp\!\left(\frac{2\pi i\,2^{\alpha+\beta}}{N}x_\alpha y_\beta\right),
    \]
    so $V$ is implemented by $O(b_N^2)$ controlled phase gates.  Hence
    $
        V\bigl(\ket{x}_A\ket{u_N}_B\bigr)
        =
        \ket{x}_A\ket{\hat x}_B.
    $

    The second step is to erase the first register $A$ coherently.
    This requires two more working registers, $C$ and $D$.
    The register $C$ corresponds to $t$ ancilla qubits for phase estimation, and the other is an $n$-qubit register for rounding the $t$-qubit phase estimation result.
    Then, the register $D$ is used to subtract the decoded value from the first register $A$.
    First, applying phase estimation $U_{PE}$ for a modular adder $U_{\mathrm{add}}$ yields $U_{PE}\ket{\hat x}_B\ket{0^t}_C = \ket{\hat x}_B\otimes(\sum_{j=0}^{2^m-1}\gamma_{x,j}\ket{j}_C)$ with $\gamma_{x,j=x}$ (such that $j$ is the best $m$-bit approximation) being the highest amplitude.
    Note that $(b_N+1)$-bit approximation uniquely determines $x$.
    By the standard result of phase estimation, one can choose $t=b_N+1+\ell_{\rm pe}$ with $\ell_{\rm pe}=\lceil \log(2+(1/(2\delta))) \rceil$ to obtain the best $(b_N+1)$-bit approximation with probability at least $1-\delta$.
    This phase estimation circuit requires $O(tb_N)$ gates for multi-controlled modular adders $U_{\mathrm{add}}^{2^j}$ for $j=0,...,t-1$ and $O(t^2)$ gates for an inverse QFT on the register $C$.
    Let $G_x \subseteq \{0,...,2^t-1\}$ be the set of the nearest $(b_N+1)$-bit approximation $j$, and let
    $
        \sum_{j\in G_x}|\gamma_{x,j}|^2\ge 1-\delta
    $
    be the corresponding success probability.
    Next, we apply the $b_N$-bit rounding operation $U_{\mathrm{round}}: \ket{j}_C\ket{0}_D\mapsto\ket{j}_C\ket{x(j)}_D$ yielding $x(j)=x, \forall j \in G_x$.
    After applying a modular subtraction $U_{\mathrm{sub}}: \ket{x}_A\ket{y}_D\mapsto\ket{x-y}_A\ket{y}_D$ to the $AD$ register state, the resulting state has the form
    \[
        \ket{\Psi_x}
        =
        \sqrt{p_x}\,
        \ket{0}_A\ket{\hat x}_B\ket{0}_{CD}
        +
        \sqrt{1-p_x}\,\ket{E_x}_{ABCD},
        \qquad
        p_x\ge 1-\delta,
    \]
    where 
    $
        \ket{E_x}_{ABCD}
        \perp
        \ket{0}_A\ket{\widehat x}_B\ket{0}_{\rm work}.
    $
    The unitary operators $U_{\mathrm{round}}$ and $U_{\mathrm{sub}}$ require $O(t^2)$ and $O(b_N)$ gates, respectively.
    Applying $U_{\mathrm{round}}^\dagger U_{PE}^\dagger$ reverts the $CD$ registers to $\ket{0}_C\ket{0}_D$.
    Therefore, denoting the above implementation by $\widetilde F_N$, 
    \[
    \begin{aligned}
    &\left\|
        \widetilde F_N\bigl(\ket{\psi}_A\ket{0}_B\ket{0}_{CD}\bigr)
        -
        \ket{0}_A(F_N\ket{\psi})_B\ket{0}_{CD}
     \right\|^2  \\
    &\qquad =
    \sum_x|\alpha_x|^2
    \left\|
        \ket{\Psi_x}
        -
        \ket{0}_A\ket{\hat x}_B\ket{0}_{CD}
    \right\|^2                                      \\
    &\qquad \le
    \sum_x|\alpha_x|^2\,2(1-p_x)
    \le 2\delta,
    \end{aligned}
    \]
    for every normalized $\ket{\psi}=\sum_x\alpha_x\ket{x}$.
    Setting $\delta=\eps^2/2$ guarantees the operator-norm error bounded by $\eps$.
    Consequently, the total gate complexity is $O(b_N+b_N^2+tb_N+t^2)=O(t^2)=O((\log N + \log(1/\eps))^2)$.
\end{proof}

\begin{prop}[Additive QFT over finite fields]\label{prop:finite_field_QFT}
For $\eps\in(0,1)$, there exist a unitary circuit $\tilde F_{\F_q}$ such that
    \begin{equation}
        \|\tilde F_{\F_q} - F_{\F_q}\| \leq \eps,
    \end{equation}
    with gate complexity $O(\log q(\log q + \log(1/\eps))^2)$.
\end{prop}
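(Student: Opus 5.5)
The plan is to factor $F_{\F_q}$ as a tensor power of prime-field QFTs followed by a coherent linear change of coordinates, and to charge errors and gates to each piece separately. Fix the polynomial basis $\mathcal B=(1,T,\ldots,T^{r-1})$ of $\F_q=\F_p[T]/(\varphi)$. Write $x=\sum_i x_ib_i$ and $y=\sum_j y_jb_j$. Then
\[
\Theta(xy)=\omega_p^{\sum_{i,j}x_iy_j\Tr(b_ib_j)}=\omega_p^{x^\top M_{\mathcal B}y}.
\]
Put $z:=M_{\mathcal B}^\top x=M_{\mathcal B}x$, using that $M_{\mathcal B}$ is symmetric and invertible because the trace form is nondegenerate. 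This gives
\[
F_{\F_q}\ket{x}=\frac1{\sqrt q}\sum_{y}\omega_p^{z\cdot y}\ket{y}=F_p^{\otimes r}\ket{M_{\mathcal B}x}.
\]
Hence $F_{\F_q}=F_p^{\otimes r}\circ L_{M_{\mathcal B}}$, where $L_M\ket{x}=\ket{Mx}$ is a permutation unitary. Equivalently, one can place the relabeling by $M_{\mathcal B}^{-1}$ after $F_p^{\otimes r}$ in the dual-basis picture, as the text describes. Either ordering works, and I will use the one above.

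\textbf{Implementing the linear map.} Classically precompute $M_{\mathcal B}$ over $\F_p$. This takes $r^2$ traces, each costing $\poly(r,\log p)$, and the result is then hardwired into the circuit. The map $\ket{x}\ket{0}\mapsto\ket{x}\ket{Mx}$ is realized by $r^2$ reversible multiply-accumulate operations by constants mod $p$. Each costs $\widetilde O(\log p)$ gates up to polylogarithmic reversible-arithmetic overhead, which is the standing assumption. The input register is then uncomputed with the same construction applied to $M^{-1}$, i.e. $\ket{x}\ket{Mx}\mapsto\ket{x-M^{-1}Mx}\ket{Mx}$, followed by a register swap. This step is exact, and its cost is $\widetilde O(r^2\log p)$. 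That is at most $\widetilde O((\log q)^2)$, so it fits inside the target bound.

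\textbf{Prime-field QFTs and error accounting.} Apply \Cref{prop:QFT_over_ZN} with $N=p$ and precision $\eps/r$ to each of the $r$ tensor factors. The error bound $\|A\otimes B-A'\otimes B'\|\le\|A-A'\|+\|B-B'\|$ for unitaries, combined with a telescoping sum, gives total operator-norm error at most $r\cdot\eps/r=\eps$. The gate count is
\[
r\cdot O\bigl((\log p+\log(r/\eps))^2\bigr)\le O\bigl(\log q\,(\log q+\log(1/\eps))^2\bigr),
\]
using $r\le\log q$ and $\log r\le\log q$. Composing with the exact permutation preserves the error, since $\|(\tilde A-A)L\|=\|\tilde A-A\|$.

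\textbf{Main obstacle.} The delicate point is that \Cref{prop:QFT_over_ZN} realizes $F_p$ only up to ancillas returned approximately to $\ket{0}$. The per-factor error must therefore be read as error on the ancilla-extended isometry. I would state $\tilde F_{\F_q}$ in that sense, with ancillas initialized and approximately restored, and check that the telescoping bound still holds for isometries on the combined register. The other point to verify is that the constant-multiplication arithmetic stays within $O(\log q(\log q)^2)$. It does comfortably, since the linear map costs $\widetilde O((\log q)^2)$.
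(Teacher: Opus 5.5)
Your proposal is correct and follows the paper's proof. You factor $F_{\F_q}$ into $F_p^{\otimes r}$ together with a coherent relabeling by the trace Gram matrix $M_{\mathcal B}$, invoke \Cref{prop:QFT_over_ZN} with $N=p$ and precision $\eps/r$, and telescope over the $r$ factors; the only differences are that you apply $L_{M_{\mathcal B}}$ before $F_p^{\otimes r}$ rather than $L_{M_{\mathcal B}^{-1}}$ after it (equivalent because $M_{\mathcal B}$ is symmetric), and that you explicitly cost the linear map and spell out the ancilla/isometry reading of \Cref{prop:QFT_over_ZN}, both of which the paper leaves implicit.
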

\begin{proof}
    QFT over $\F_q$ can be implemented by employing QFT over $\F_p$ in parallel as follows.
    Let $\mathcal B=(b_1,\ldots,b_r)$ be a polynomial basis used to represent $\mathbb F_q/\mathbb F_p$, and write
    $
        x=\sum_{i=1}^r x_i b_i,\,
        y=\sum_{i=1}^r y_i b_i.
    $
    Let $M_{\mathcal B}\in M_r(\mathbb F_p)$ be the Gram matrix of the trace pairing,
    $
        (M_{\mathcal B})_{ij}:=\operatorname{Tr}_{\mathbb F_q/\mathbb F_p}(b_i b_j).
    $
    Since the trace pairing is nondegenerate, $M_{\mathcal B}\in GL_r(\mathbb F_p)$.
    Then
    $
        \operatorname{Tr}_{\mathbb F_q/\mathbb F_p}(xy)
        =
        x^\top M_{\mathcal B} y.
    $
    Applying $F_p^{\otimes r}$ maps
    $
        \ket{x}
        \mapsto
        \frac1{\sqrt q}\sum_{z\in\mathbb F_p^r}
        \omega_p^{x^\top z}\ket{z}.
    $
    The reversible linear map 
    $
        L_{\mathcal B}:\ket{z}\mapsto\ket{M_{\mathcal B}^{-1}z}
    $
    then gives
    $
        \frac1{\sqrt q}\sum_{y\in\mathbb F_p^r}
        \omega_p^{x^\top M_{\mathcal B}y}\ket{y}
        =
        F_{\mathbb F_q}\ket{x}.
    $
    Now invoke the QFT over $\Z/N\Z$ with $N=p$.
    From \Cref{prop:QFT_over_ZN}, for $\eps_0 \in (0,1)$, one can implement a unitary $\widetilde F_p$ such that $\|\widetilde F_p - F_p\| \leq \eps_0$ with a gate complexity of $G_p(\eps_0)=O((\log p + \log(1/\eps_0))^2)$.
    Define $\widetilde F_{\F_q} = L_{\mathcal B} \widetilde F_p^{\otimes r}$.
    Use a telescoping expansion $\widetilde F_p^{\otimes r} - F_p^{\otimes r} = \sum_{j=1}^{r} \widetilde F_p^{\otimes(j-1)} \otimes (\widetilde F_p - F_p) \otimes F_p^{\otimes(r-j)}$ to obtain $\|\widetilde F_{\F_q} - F_{\F_q}\| \leq \sum_{j=1}^{r} \|\widetilde F_p\|^{j-1} \cdot \|\widetilde F_p - F_p\| \cdot \|F_p\|^{r-j} \leq r \eps_0$.
    Choosing $\eps_0=\eps/r$ implies that $\|\widetilde F_{\F_q} - F_{\F_q}\| \leq \eps$.
    Thus, the gate complexity to implement $\widetilde F_{\F_q}$ is $r \cdot G(\eps/r) = O(r(\log p + \log(r/\eps))^2) \subseteq O\!\left(\log q\,(\log q+\log(1/\varepsilon))^2\right)$.
    The cost of $L_{\mathcal B}$ is absorbed into the coarse bound.
\end{proof}

\section{Implementation details}
\label[appendix]{app:proof_details}
This appendix supplies the implementation details used in
\Cref{prop:implemented-primitives}.
We separate the analysis into the classical part and the quantum part.
The classical part consists of the one-time Smith-normal-form preprocessing and the per-sample controller work in the Monte Carlo loop.  
The quantum part consists of one-time generator finding and the Hadamard test circuit used to estimate the real part of the normalized Gauss-sum product in the Monte Carlo loop. 
At the end of the appendix, we collect these estimates and derive \Cref{prop:implemented-primitives}.

\subsection{Classical computation costs}
We first analyze the classical costs.  The preprocessing step computes the
Smith-normal-form data needed to sample from the congruence kernel
$K_U(q)$, together with the normalization factor $B_U(q)$ and the scalar
parameters used by the estimator.  This part is performed only once, before
the Monte Carlo loop starts.
Let $\bar d_i:=d_i\pmod{q-1}$ for $1\le i\le \rho$ and $\bar{Q}:=Q\pmod{q-1}$.
\begin{lem}[SNF preprocessing; Steps 1-3 in \Cref{alg:snf-hadamard-estimator}]\label{lem:classical-preprocessing-cost}
    Steps 1, 2, and 3 in \Cref{alg:snf-hadamard-estimator} compute the sampler data
$\mathsf{SNF}_{U,q}
=
\bigl(q,\rho,\bar Q,(g_i)_{i=1}^s,(c_i)_{i=1}^s\bigr)$ with a bit complexity bounded by
        \begin{equation}\label{eq:preprocessing-poly-general}
     T_{\rm pre}^{\rm cl} = \widetilde{O}(m^6L + \log (1/\eps) + \log (1/\delta_{\rm MC})).
        \end{equation}
\end{lem}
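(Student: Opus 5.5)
The bound splits along the three steps of \Cref{alg:snf-hadamard-estimator}, with one cost for each.

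\textbf{Step 1.} Forming the augmented matrix $U\in M_{n+1,s}(\Z)$ only means writing down the row of ones and copying the exponent vectors $u_j$. That is $O(ms)$ entries of $O(L)$ bits each, so it is $\widetilde O(m^2L)$ and is dominated.

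\textbf{Step 2.} We invoke \Cref{Prop:enum}, which gives $\overline P$, $\overline Q$ and $\overline D=\diag(\bar d_1,\ldots,\bar d_\rho,0,\ldots,0)$ over $\Z/(q-1)\Z$ in $\widetilde O(m^6L)$ bit operations. Three points need care.
\begin{itemize}
\item We first reduce the entries of $U$ modulo $q-1$. This is $O(ms)$ divisions on $O(L)$-bit integers, hence $\widetilde O(m^2L)$.
\item The rank $\rho$ and the quantities $g_i=\gcd(d_i,q-1)$ must come out of the modular computation. For this I would use that the modular Smith form over $\Z/(q-1)\Z$ has diagonal entries that are associates of $\gcd(d_i,q-1)$, so $g_i=\gcd(\bar d_i,q-1)$ with the convention $g_i=q-1$ when $\bar d_i=0$.
\item A subtlety remains: $\rho$ is the rank over $\Z$, not over $\Z/(q-1)\Z$. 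I would compute it separately, either by fraction-free Gaussian elimination over $\Q$ in $\widetilde O(m^4L)$, or by a modular rank computation at a random prime, which is why the cost is stated as expected. This stays within $\widetilde O(m^6L)$.
\end{itemize}
Then $c_i=(q-1)/g_i$ takes $s$ exact divisions, $B_U(q)=\prod_{i\le\rho}g_i$ takes $\rho$ multiplications of numbers of at most $\rho\log q$ bits, and $\bar Q$ is read off. Each of these is $\widetilde O(m^2L)$.

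\textbf{Step 3.} Here we compute $\eta_\Xi$ and $M$, which are real numbers, so the issue is precision. Only $\widetilde O(\log(1/\eps)+\log\log(1/\delta))$ significant bits are needed, enough for $M$ to be correct up to a constant factor. To get them:
\begin{itemize}
\item Compute $(q/(q-1))^{n+1-\rho}$ to relative precision $1/4$ as $\exp\bigl((n+1-\rho)\log(1+1/(q-1))\bigr)$ using floating point with $O(\log m+L)$-bit exponents, costing $\widetilde O(L+\log m)$.
\item Take $\log(2/\delta_{\rm MC})$ to constant relative accuracy from the bit length of $1/\delta$; reading the input $\delta$ costs $O(\log(1/\delta))$.
\item Reading $\eps$ and squaring and inverting $\eta_\Xi$ cost $\widetilde O(\log(1/\eps)+L)$.
\end{itemize}
Rounding up everything we compute keeps the Hoeffding requirement $M\ge 8\eta_\Xi^{-2}\log(2/\delta_{\rm MC})$ valid.

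\textbf{Main obstacle.} The hard part is the rigor of the $\widetilde O(m^6L)$ Smith-form bound. We must keep intermediate entries of size $O(L)$ bits by working modulo $q-1$ throughout, and we must check that computing over the non-domain ring $\Z/(q-1)\Z$ still yields the correct $g_i$ and an invertible $\bar Q$. I would handle this by citing the elementary-divisor property as in \Cref{Prop:enum}, and by noting that every extended-gcd step on residues modulo $q-1$ costs $\widetilde O(L)$. Over the $O(m^3)$ pivot steps, each of which updates $O(m^2)$ entries of $U$ and of the accumulated transformation matrices, this gives the stated bound.
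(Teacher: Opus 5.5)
Your proposal is correct and follows the paper's argument. The dominant $\widetilde O(m^6L)$ term comes from \Cref{Prop:enum}, and the remaining work (the gcds giving $g_i$, the $c_i$, the product $B_U(q)$, and the scalars $\eta_\Xi$ and $M$) consists of lower-order costs absorbed into that bound.

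You also go beyond the paper in two places. First, you compute $\rho$ separately over $\Q$; this is needed because a purely modular Smith form over $\Z/(q-1)\Z$ does not reveal the integer rank, a point the paper passes over. Second, you analyze the precision needed in Step~3. There, note that the safe direction is to round $\eta_\Xi$ down and $M$ up, not to round every intermediate quantity up. Alternatively, you can simply compute $(q/(q-1))^{n+1-\rho}$ exactly as a rational in $\widetilde O(n\log q)$ bit operations.
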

\begin{proof}
    \Cref{alg:snf-hadamard-estimator} needs the following  data to instantiate \Cref{alg:snf-sampler}: the invariant factors
    $d_1,\ldots,d_\rho$ only through the quantities $g_i$
    and it requires the right unimodular transformation $Q$ only through its
    reduction modulo $q-1$, because the sampler outputs $k=Qy \pmod{q-1}$.
     We therefore compute and store $\bar d_i,\bar{Q}$.
    This is done in $T_{\rm SNF}^{\rm ker}(U,q)$ bit complexity, the cost of computing the Smith
    data sufficient for sampling, namely the invariant factors modulo $q-1$ and the
    right transformation matrix reduced modulo $q-1$.
    It is bounded by
    $\widetilde{O}(m^6L)$ by \Cref{Prop:enum}.  
    
    Next, for each $1\le i\le \rho$, compute $g_i=\gcd(\bar d_i,q-1)$.
    Both $\bar d_i$ and $q-1$ have bit length at most $O(\log q)$, and hence at
    most $O(L)$.  Using the fast Euclidean algorithm~\cite[Chs.~3 and~11]{vonZurGathenGerhard2013}, combined with \cite{harvey2021Integer}, each gcd computation costs $\widetilde{O}(L)$ bit operations.  Thus the
    total cost of the $\rho$ gcd computations is $O(\rho L)$.
    
    The product $B_U(q)=\prod_{i=1}^{\rho}g_i$
    has bit length at most $\rho\log (q-1)$ since $g_i\le q-1$ for every $i$.
    It can be computed by a balanced product tree.  Its cost
    is bounded by
    \begin{equation}
    \sum_{h=0}^{\lceil\log_2\rho\rceil-1}
    \left\lceil\frac{\rho}{2^{h+1}}\right\rceil
    \widetilde{O}(2^h\log q) = \widetilde O(\rho\log q).
    \end{equation}
Note that the cost of the computation of  $c_i$'s is most $\widetilde O(s\log q)$ bit operations.
    
    The scalar quantities $\eta_\Xi$
    and $M$ have bit lengths $O(n\log q + \rho \log q + \log (1/\eps) + \log (1/\delta_{\rm MC}))$, so 
    one can compute them in $T_{\rm scal} = \widetilde{O}(n\log q + \rho \log q + \log (1/\eps) + \log (1/\delta_{\rm MC}))$ bit operations.
    
    Finally, the remaining classical preprocessing consists of reading and storing
    the coefficients $a_j$ and the finite-field representation data needed by the
    later arithmetic routines.  Under the assumed finite-field arithmetic model,
    this contributes $\widetilde{O}(s\log q)$
    bit operations.  Summing all contributions gives
    \begin{equation}\label{eq:classical-preprocessing-cost}
    T_{\rm pre}^{\rm cl}
    =
    T_{\rm SNF}^{\rm ker}(U,q)
    +
   \widetilde{O}(\rho L)
    +
    \widetilde O(\rho\log q)
    +
    \widetilde{O}(s\log q)
    +
    T_{\rm scal}.
    \end{equation} 
    The bound \eqref{eq:preprocessing-poly-general} follows immediately.
\end{proof}

After the preprocessing data have been computed, each Monte Carlo iteration
requires only classical sampling from the diagonalized kernel, reconstruction
of the vector $k\in K_U(q)$, and the computation of the $k$-dependent
field constants used in the phase oracle.  The following lemma isolates this
per-sample classical controller cost. 
    \begin{lem}[Classical sampling and controller cost; Steps 7-9 in \Cref{alg:snf-hadamard-estimator}]\label{lem:classical-per-sample}
    Given $\mathsf{SNF}_{U,q}$, the expected classical bit complexity of
    one Monte Carlo sample in \Cref{alg:snf-hadamard-estimator}, including the classical computation of
    the fixed-$k$ coefficient constants $a_j^{-k_j}$ used in the phase oracle,
    is
    \begin{equation}\label{eq:classical-per-sample-cost}
    C_{\rm samp}^{\rm cl}
    =
    \widetilde O\!\left(
    s^2\log q
    +
    s(\log q)^2
    +
    \log M
    \right).
    \end{equation}
\end{lem}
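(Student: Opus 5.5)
We split one Monte Carlo iteration (Steps 7--9 of \Cref{alg:snf-hadamard-estimator}) into three parts and add their costs: sampling $k_i$ via \Cref{alg:snf-sampler}, computing $z(k_i)$, and computing the constants $\lambda_j(k_i)=a_j^{-(k_i)_j}$.

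\textbf{Sampling.} For each $1\le i\le s$ we draw $r_i$ uniformly from $\{0,\ldots,g_i-1\}$ with $g_i\le q-1$. Exact uniform sampling from a range that is not a power of two is done by rejection sampling on $\lceil\log_2 g_i\rceil$ random bits. Each attempt succeeds with probability greater than $1/2$, so the expected number of attempts is at most $2$. The expected cost per coordinate is therefore $O(\log q)$, and $O(s\log q)$ for all coordinates. This is the only source of randomness in the running time, which is why the statement is about expected bit complexity.

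The products $y_i=c_ir_i \bmod (q-1)$ cost $\widetilde O(\log q)$ each. By \Cref{Prop:enum}, the dense product $k=\bar Q y \bmod (q-1)$ costs $s^2$ multiplications and additions modulo $q-1$, i.e.\ $\widetilde O(s^2\log q)$.

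\textbf{Counting zeros.} Computing $z(k_i)$ is a linear scan comparing each $(k_i)_j$ with $0$. This costs $O(s\log q)$, plus an $O(\log s)$-bit counter.

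\textbf{Field constants.} For each $j$ with $(k_i)_j\ne0$ we compute $a_j^{-(k_i)_j}$. We first invert $a_j$ in $\F_q$ at cost $\widetilde O(\log q)$. We then exponentiate by square-and-multiply with exponent $(k_i)_j<q-1$, which uses $O(\log q)$ field multiplications of cost $\widetilde O(\log q)$ each. This gives $\widetilde O((\log q)^2)$ per coordinate and $\widetilde O(s(\log q)^2)$ in total.

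\textbf{Bookkeeping.} The loop index $i\le M$ must be maintained and passed on, together with the per-call failure parameter $\delta_\chi/M$ handed to the Hadamard test. These quantities have bit length $O(\log M)$, which accounts for the additive $\log M$ term.

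Summing the three parts and the bookkeeping gives \eqref{eq:classical-per-sample-cost}. The only point that needs care is the exactness of uniform sampling with expected rather than worst-case cost. Rejection sampling gives exact uniformity, so the subsequent analysis of \Cref{thm:end-to-end} can use $k\sim\mathrm{Unif}(K_U(q))$ without any extra error term.
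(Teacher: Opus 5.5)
Your proposal is correct and follows the paper's proof essentially step for step. You use the same split into SNF sampling at $\widetilde O(s^2\log q)$, the zero count $z(k)$ at $O(s\log q)$, and the constants $a_j^{-k_j}$ by repeated squaring at $\widetilde O(s(\log q)^2)$, plus an $O(\log M)$ bookkeeping term. The differences are minor: you make the source of the \emph{expected} cost explicit via rejection sampling, where the paper just cites \Cref{Prop:enum}, and the paper attributes the $\log M$ term to updating the running sum for $\widehat\Xi_f$ rather than to the loop index and $\delta_\chi/M$ (the latter is fixed across iterations, so it is not really a per-sample cost).
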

\begin{proof}
    We decompose the classical work in one Monte Carlo sample.
    First, we sample $k\sim \operatorname{Unif}(K_U(q))$ by running
    \Cref{alg:snf-sampler}.  This includes drawing the parameters $r_i$,
    forming $y_i=c_ir_i\pmod{q-1}$, and computing $k=\bar Qy\pmod{q-1}$.
    By \Cref{Prop:enum}, after the SNF preprocessing this costs $C_{\rm ker}
        =
        \widetilde O(s^2\log q)$
    expected bit operations.
    
    After $k$ is obtained, the algorithm computes $z(k)$.
    This requires $s$ comparisons modulo $q-1$, each on $O(\log q)$-bit
    integers.  Thus the cost is $C_z
    =
    O(s\log q)$.
    
    Next, for the fixed-$k$ phase oracle in \Cref{lem:quantum-per-sample},
    the classical controller computes the field constants $a_j^{-k_j}\in \mathbb F_q^\times$
    for those $j$ with $k_j\ne 0$, or for all $1\le j\le s$ in the worst
    case.  Each exponent $k_j$ has $O(\log q)$ bits.  By repeated
    squaring in the assumed finite-field representation, each such exponentiation
    costs $\widetilde{O}((\log q)^2)$ bit operations.  Hence the total classical
    cost of computing these coefficient constants is $C_{\rm coeff}
    =
    \widetilde{O}(s(\log q)^2)$.
    They are used in the circuit description of $U_{k_i}$.

    Finally, after the Hadamard test subroutine returns $Y(k)$, the classical controller updates the running sum used to compute $\widehat\Xi_f=M^{-1}\sum_i Y(k_i)$.  This bookkeeping costs $O(\log M)$ bit operations per sample.
    
    Summing the contributions gives
    \begin{align}
    C_{\rm samp}^{\rm cl}
    &=
    C_{\rm ker}
    +
    C_z
    +
    C_{\rm coeff}
    +
    C_{\rm acc} \\
    &=
    \widetilde O\!\left(
    s^2\log q
    +
   s(\log q)^2
    +
    \log M
    \right).
    \end{align}
\end{proof}

\subsection{Quantum primitives}
We now turn to the quantum primitives.  The first primitive is generator
finding in $\F_q^\times$, which is needed to identify multiplicative
characters with exponents modulo $q-1$.  Once a generator is fixed, each
Monte Carlo iteration uses a Hadamard test whose eigenvalue is the normalized
Gauss-sum product $\Phi(k)$.
\begin{lem}[generator finding; Step 4 in \Cref{alg:snf-hadamard-estimator}]\label{lem:generator_finding}
    Let $\gamma_{\rm gen} \in (0,1)$.
    A generator $g$ of the multiplicative group $\F_q^\times$ can be found with probability at least $1-\gamma_{\rm gen}$.
    Its classical bit complexity is $\widetilde O((\log q)^3 \log(1/\gamma_{\rm gen}))$ and quantum gate complexity is $\widetilde O((\log q)^3 \log(1/\gamma_{\rm gen}))$.
\end{lem}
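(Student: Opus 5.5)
The approach is the standard randomized test for primitive roots, with the factorization of $q-1$ supplied by Shor's algorithm; this is where the quantum gate cost comes from. Writing
\[
q-1=\prod_{\ell} \ell^{e_\ell},
\]
an element $g\in\F_q^\times$ is a generator if and only if $g^{(q-1)/\ell}\neq 1$ for every prime $\ell\mid q-1$. The number of distinct primes is at most $\log_2(q-1)=O(\log q)$. Given the prime divisors, one check is $O(\log q)$ exponentiations by repeated squaring. Each exponentiation uses $O(\log q)$ field multiplications of cost $\widetilde O(\log q)$, so testing one candidate costs $\widetilde O((\log q)^3)$ classical bit operations.

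\textbf{Step 1: factor $q-1$.} Each Shor call uses $\widetilde O((\log q)^2)$ gates with fast reversible multiplication (modular exponentiation dominates). It succeeds in splitting a composite factor with constant probability after classical post-processing (gcd, continued fractions, cost $\widetilde O(\log q)$ bits per call).
\begin{itemize}
\item Fully factoring needs $O(\log q)$ successful splits, since each split increases the number of factors.
\item I would interleave deterministic primality testing of the current factors; to stay within the claimed classical budget, I would use a randomized primality test (e.g.\ Miller--Rabin with $O(\log(1/\gamma))$ rounds) rather than AKS.
\item To make all splits succeed with probability $\ge 1-\gamma_{\rm gen}/2$, repeat each Shor attempt $O(\log(\log q/\gamma_{\rm gen}))$ times and take a union bound over the $O(\log q)$ splits.
\end{itemize}
This gives quantum cost $\widetilde O((\log q)^3\log(1/\gamma_{\rm gen}))$, absorbing $\log\log q$ into $\widetilde O$.

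\textbf{Step 2: find a generator.} Sample $g$ uniformly from $\F_q^\times$. It is a generator with probability $\varphi(q-1)/(q-1)=\Omega(1/\log\log q)$. Repeating $O(\log\log q\cdot\log(1/\gamma_{\rm gen}))$ times succeeds with probability $\ge 1-\gamma_{\rm gen}/2$. Each trial costs $\widetilde O((\log q)^3)$ bits, for a total of $\widetilde O((\log q)^3\log(1/\gamma_{\rm gen}))$.

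\textbf{Conclusion.} A union bound over the factoring and sampling stages gives failure probability at most $\gamma_{\rm gen}$.

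The main obstacle is the bookkeeping in Step 1. We must bound the number of Shor invocations needed for complete factorization, handle prime powers (checked classically by integer root extraction, $\widetilde O(\log^2 q)$ each), and make the success amplification yield the clean factor $\log(1/\gamma_{\rm gen})$ rather than a worse dependence. Doing this requires a careful union bound together with the convention that $\widetilde O$ hides the $\log\log q$ factors.
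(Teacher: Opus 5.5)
Your proposal is correct and follows essentially the same route as the paper: factor $q-1$ with Shor's algorithm at failure probability $\gamma_{\rm gen}/2$, then sample $h\in\F_q^\times$ uniformly and accept iff $h^{(q-1)/\ell}\neq 1$ for every prime $\ell\mid q-1$, using $\varphi(q-1)/(q-1)=\Omega(1/\log\log q)$ to bound the number of trials and a union bound over the two stages. Your extra bookkeeping for the factoring stage supplies details the paper simply asserts, and it stays within the claimed $\widetilde O((\log q)^3\log(1/\gamma_{\rm gen}))$ budget: at most $O(\log q)$ splits, prime-power detection, randomized primality testing with enough rounds for a union bound over the factors, and $\widetilde O((\log q)^2)$ gates per Shor call via fast reversible multiplication.
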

\begin{proof}
    It is standard that $\F_q^\times$ is a cyclic group of order $q-1$. Hence
    $g\in\F_q^\times$ is a generator if and only if $\ord(g)=q-1$.
    
    We first compute the prime factorization  $q-1=\prod_{t=1}^T \ell_t^{e_t}$.
    This is done using Shor's quantum factoring algorithm \cite{shor1997PolynomialTime}.
    The factorization can be obtained with failure probability at
    most $\gamma_{\mathrm{gen}}/2$ with
        $\widetilde O\!\left(
            (\log q)^3\log(1/\gamma_{\mathrm{gen}})
        \right)$ gates.
    
    Now suppose that the factorization of $q-1$ has been obtained. We repeatedly
    sample $h$ from $\F_q^\times$ uniformly at random and test whether $h$
    is a generator. For each distinct prime divisor $\ell\mid q-1$, compute
     $h^{(q-1)/\ell}\in \F_q^\times$
    by repeated squaring. We accept $h$ if and only if
        $h^{(q-1)/\ell}\neq 1$
        for every prime  $\ell\mid q-1$.
    This criterion is correct because, in a cyclic group of order $q-1$, an
    element has order $q-1$ if and only if it does not lie in any proper subgroup
    of index $\ell$ for $\ell\mid q-1$.
    
    Let
    \[
        \omega(q-1)
        :=
        \#\{\ell:\ell \text{ prime and } \ell\mid q-1\}
    \]
    be the number of distinct prime divisors of $q-1$. Since
            $\omega(q-1)\le \log_2(q-1)=O(\log q)$,
    each trial requires at most $O(\log q)$ exponentiations in $\F_q$. Each
    exponentiation has exponent of bit length $O(\log q)$, and hence costs
    $O(\log q)$ multiplications in $\F_q$. Under the assumed arithmetic model,
    one multiplication in $\F_q$ costs $\widetilde O(\log q)$ bit operations.
    Therefore one trial costs
    \[
        O(\omega(q-1)\log q)\cdot \widetilde O(\log q)
        =
        \widetilde O((\log q)^3)
    \]
    bit operations.
    
    It remains to bound the number of trials. The proportion of generators in the
    cyclic group $\F_q^\times$ is
    \[
        p
        =
        \frac{\varphi(q-1)}{q-1}
    \]
    where $\varphi(\cdot)$ is Euler’s totient function.
    By the Rosser--Schoenfeld bound, there is an absolute constant $c>0$ such
    that, for all sufficiently large $q$,
    \[
        \frac{\varphi(q-1)}{q-1}
        \ge
        \frac{c}{\log\log(q-1)}.
    \]
    The finitely many remaining values of $q$ are absorbed into the implicit
    constant. Hence
    \[
        p\ge \frac{c}{\log\log q}
    \]
    after adjusting constants.
    
    If we perform $R$ independent trials, the probability that all trials fail is
            $(1-p)^R
        \le
        \exp(-pR)$.
    Thus it suffices to take
        $R
        =
        \left\lceil
            C\,\log\log q\cdot \log(2/\gamma_{\mathrm{gen}})
        \right\rceil$
    for a sufficiently large absolute constant $C>0$. With this choice,
    the sampling-and-testing stage fails with probability at most
    $\gamma_{\mathrm{gen}}/2$, and its bit complexity is
    \[
        R\cdot \widetilde O((\log q)^3)
        =
        \widetilde O\!\left(
            (\log q)^3\log(1/\gamma_{\mathrm{gen}})
        \right),
    \]
    where the factor $\log\log q$ is absorbed into the
    $\widetilde O$-notation.
    
    Combining this with the amplified Shor factorization step, the total failure
    probability is at most
$\gamma_{\mathrm{gen}}/2+\gamma_{\mathrm{gen}}/2
        =
        \gamma_{\mathrm{gen}}$,
    and the total bit and gate complexities are
        $\widetilde O\!\left(
            (\log q)^3\log(1/\gamma_{\mathrm{gen}})
        \right)$.
    This proves the claim.
\end{proof}

For the remaining primitives, fix a generator $g\in\F_q^\times$.  We use
this generator to define multiplicative character states.  For
$a\in \mathbb Z/(q-1)\mathbb Z$, set
\[
    \ket{\chi^a}
    :=
    \frac{1}{\sqrt{q-1}}
    \sum_{j=0}^{q-2}\omega_{q-1}^{aj}\ket{g^j},
    \qquad
    \omega_{q-1}:=e^{2\pi i/(q-1)}.
\]
The following procedure is the single-trial quantum subroutine used inside
\Cref{alg:snf-hadamard-estimator}.
\begin{algorithm}[H]
\caption{\textsc{HadamardTest}%
$(\F_q,g,k,z,\{\lambda_j\}_{k_j\ne0},
\eps_{\rm seed},\eps_{\rm ph},\gamma_\chi)$}
\makeatletter\protected@edef\@currentlabelname{\textsc{HadamardTest}}\makeatother
\label{alg:hadamard-test}
\begin{algorithmic}[1]
    \REQUIRE A finite field $\F_q$, a generator $g\in\F_q^\times$,
    a vector $k=(k_1,\ldots,k_s)\in K_U(q)$,
    $z=z(k)$, coefficient constants
    $\lambda_j=a_j^{-k_j}\in\F_q^\times$ for $k_j\ne0$,
    precisions $\eps_{\rm seed},\eps_{\rm ph}\in(0,1)$,
    and a state-preparation failure parameter $\gamma_\chi\in(0,1)$.

    \STATE Prepare a state $\ket{\widetilde\psi}$ satisfying
    $
    \left\|
    \ket{\widetilde\psi}
    -
    \ket{\chi^0}\ket{\chi^1}
    \right\|
    \le
    \eps_{\rm seed}
    $
    with failure probability at most $\gamma_\chi$ on the system register.

    \STATE Construct an $\eps_{\rm ph}$-approximate phase oracle
    $\widetilde U_k$ for the ideal unitary $U_k$ \eqref{eq:phase_oracle} satisfying
    $
    U_k\ket{\chi^0}\ket{\chi^1}
    =
    \Phi(k)\ket{\chi^0}\ket{\chi^1},
    $
    with input arguments $\F_q, g, k,$ and $\{\lambda_j\}$.
    (In the construction, each additive finite-field QFT is implemented to
    operator-norm error at most $\eps_{\rm ph}/s$.)

    \STATE Initialize one control qubit in $\ket{0}$.

    \STATE Apply a Hadamard gate to the control qubit.

    \STATE Apply controlled-$\widetilde U_k$ to the system register,
    controlled on the control qubit.

    \STATE Apply a Hadamard gate to the control qubit.

    \STATE Measure the control qubit in the computational basis and denote the
    outcome by $b\in\{0,1\}$.

    \STATE Set
    $
    X(k)= (-1)^b\in\{-1,+1\}.
    $

    \STATE \textbf{Return}
    $
    Y(k)= (-1)^z q^{-z/2} X(k).
    $
\end{algorithmic}
\end{algorithm}
The cost of \Cref{alg:hadamard-test} has two components.  The first is the
cost of preparing the input state
$\ket{\widetilde\psi}\approx \ket{\chi^0}\ket{\chi^1}$.  The second is the
cost of implementing the $k$-dependent phase oracle.  We first record the
state-preparation cost.

\begin{lem}[Cost of chi state preparation]\label{lem:chi-state-prep}
    Let $\eps_\mathrm{seed},\gamma_\chi\in(0,1)$.
    Suppose that a generator $g\in\F_q^\times$ is given.
    Assume that the cyclic QFT $F_{q-1}$ over $\Z/(q-1)\Z$ is implemented by a unitary $\widetilde F_{q-1}$ satisfying $\left\|\widetilde F_{q-1} - F_{q-1}\right\| \leq \frac{\eps_\mathrm{seed}}{4}$.
    Then one can prepare a state $\ket{\tilde \chi^1}$ satisfying $\left\| \ket{\tilde \chi^1} - \ket{\chi^1} \right\| \leq \eps_\mathrm{seed}$ with probability at least $1-\gamma_\chi$.
    This procedure has a gate complexity
    \begin{align}\label{eq:Tseed-def}
    T_\chi(q,\eps_{\rm seed},\gamma_\chi) = O\left(\frac{1+\log\log q}{(1-\eps_\mathrm{seed}/4)^2} \log\frac{1}{\gamma_\chi}\right) \cdot \widetilde O\left((\log q)^2 + (\log q + \log(1/\eps_\mathrm{seed}))^2\right).
    \end{align}
    Moreover, preparing a pair $\ket{\tilde\psi}=\ket{\chi^0}\ket{\tilde\chi^1}$ requires $T_\psi(q,\eps_{\rm seed},\gamma_\chi)=T_\chi(q,\eps_{\rm seed},\gamma_\chi)+O(\log q) = T_\chi(q,\eps_{\rm seed},\gamma_\chi)$ gates.
\end{lem}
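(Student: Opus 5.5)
The plan is to use van Dam's ``exponentiate, Fourier transform, measure'' construction of character states, followed by a coherent power map that turns a random character $\chi^a$ into $\chi^1$ \emph{relative to the given generator $g$}. The power map works whenever $a$ is a unit modulo $q-1$; trials where $a$ is not a unit are repeated.

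\textbf{One trial.} Prepare the uniform superposition $\frac{1}{\sqrt{q-1}}\sum_{j=0}^{q-2}\ket{j}$ in a register over $\Z/(q-1)\Z$. This can be done exactly with $O(\log q)$ gates by the interval-state preparation (exact amplitude amplification) used in the proof of \Cref{prop:QFT_over_ZN}. Then compute $\ket{j}\ket{1}\mapsto\ket{j}\ket{g^j}$ by reversible repeated squaring in $\F_q$. This takes $O(\log q)$ field multiplications, i.e.\ $\widetilde O((\log q)^2)$ gates.

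Apply $\widetilde F_{q-1}$ (or its inverse, chosen to fix the sign convention) to the first register and measure it, obtaining an outcome $a$. For the exact transform we have
\[
(F_{q-1}\otimes I)\tfrac{1}{\sqrt{q-1}}\sum_j\ket{j}\ket{g^j}=\tfrac{1}{\sqrt{q-1}}\sum_{a}\ket{a}\ket{\chi^{a}}
\]
(up to relabelling $a\mapsto -a$). Hence $a$ is uniform on $\Z/(q-1)\Z$ and the second register collapses to $\ket{\chi^{a}}$.

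If $\gcd(a,q-1)=1$, apply the permutation $\ket{x}\mapsto\ket{x^{a}}$ of $\F_q^\times$, again by reversible repeated squaring. Substituting $j'=aj$ gives
\[
\textstyle\sum_j\omega_{q-1}^{aj}\ket{g^{aj}}=\sum_{j'}\omega_{q-1}^{j'}\ket{g^{j'}},
\]
which is exactly $\ket{\chi^1}$ for the given $g$. If $a$ is not a unit, discard the state and start a new trial.

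\textbf{Repetition count and cost.} Ideally a trial succeeds with probability $\varphi(q-1)/(q-1)\ge c/(1+\log\log q)$, by the Rosser--Schoenfeld bound already used in \Cref{lem:generator_finding}. The approximate transform perturbs the outcome distribution by at most the operator-norm error $\eps_{\rm seed}/4$ in amplitude. This lowers the per-trial success probability to at least a constant times $(1-\eps_{\rm seed}/4)^2/(1+\log\log q)$, which is where the denominator in \eqref{eq:Tseed-def} comes from. Therefore $R=O\bigl(\frac{1+\log\log q}{(1-\eps_{\rm seed}/4)^2}\log\frac1{\gamma_\chi}\bigr)$ trials give overall failure probability at most $\gamma_\chi$.

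A single trial costs $\widetilde O((\log q)^2)$ gates for the two exponentiations and the interval state, plus $O((\log q+\log(1/\eps_{\rm seed}))^2)$ gates for $\widetilde F_{q-1}$ by \Cref{prop:QFT_over_ZN}. Multiplying by $R$ gives \eqref{eq:Tseed-def}. Finally, $\ket{\chi^0}$ is the uniform superposition over $\F_q^\times$, i.e.\ the state $\frac{1}{\sqrt{q-1}}\sum_j\ket{j}$ pushed through the exponentiation, or equivalently the uniform state over $\F_q$ with the zero element excluded. It costs an additional $O(\log q)$ gates, which is absorbed into $T_\chi$ and yields $T_\psi$.

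\textbf{Main obstacle: the conditional error of the post-measurement state.} Let $\Psi$ be the ideal pre-measurement state and $\widetilde\Psi$ the actual one, so that $\|\widetilde\Psi-\Psi\|\le\eps_{\rm seed}/4$. For an outcome $a$ with projector $P_a$, the normalized post-measurement states satisfy
\[
\Bigl\|\tfrac{P_a\widetilde\Psi}{\|P_a\widetilde\Psi\|}-\tfrac{P_a\Psi}{\|P_a\Psi\|}\Bigr\|\le \tfrac{2\|P_a(\widetilde\Psi-\Psi)\|}{\|P_a\Psi\|},
\]
where $\|P_a\Psi\|=(q-1)^{-1/2}$. Summing the squares over the unit outcomes gives an average squared error of at most $4(\eps_{\rm seed}/4)^2=\eps_{\rm seed}^2/4$ over accepted trials, up to the $(1-\eps_{\rm seed}/4)^{-2}$ renormalization. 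The power map is unitary, so it does not increase this error.

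To convert this average bound into the worst-case guarantee $\|\ket{\tilde\chi^1}-\ket{\chi^1}\|\le\eps_{\rm seed}$ claimed in the lemma, I would try two routes. The first is to bound $\|P_a(\widetilde\Psi-\Psi)\|$ uniformly in $a$, using the specific phase-estimation structure of $\widetilde F_{q-1}$. If that fails, the second is a Markov-type argument: count outcomes with large error as additional failures, and fold them into $\gamma_\chi$ through the same repetition budget.
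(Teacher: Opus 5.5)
There is a genuine gap, and it is the step you flag yourself: the proposal never proves the worst-case guarantee $\|\ket{\tilde\chi^1}-\ket{\chi^1}\|\le\eps_{\rm seed}$. The same missing estimate also undercuts your acceptance-probability claim. The global bound $\|\widetilde\Psi-\Psi\|\le\eps_{\rm seed}/4$ only perturbs the outcome distribution \emph{additively}, by order $\eps_{\rm seed}$ in total variation. For constant $\eps_{\rm seed}$ and large $q$ this can exceed the ideal acceptance probability $\varphi(q-1)/(q-1)\approx c/\log\log q$, so the multiplicative factor $(1-\eps_{\rm seed}/4)^2$ does not follow. Neither fallback closes the gap. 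Your averaging step is off: $\eps_{\rm seed}^2/4$ is an average over all $q-1$ outcomes, so restricted to the units it picks up a factor $(q-1)/\varphi(q-1)$. More seriously, the Markov route cannot work. An outcome $a$ whose collapsed state is far from $\ket{\chi^a}$ is not recognizable, since deciding this in general means computing a row of the exponentially large matrix $\widetilde F_{q-1}-F_{q-1}$. Such events are therefore undetected failures, which repetition cannot suppress and which cannot be charged to $\gamma_\chi$.

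The missing ingredient is a uniform-in-$a$ bound. It needs nothing about the phase-estimation structure of $\widetilde F_{q-1}$, so your route 1 aims at the right statement but looks in the wrong place. Since $j\mapsto g^j$ is injective, the input $\frac{1}{\sqrt{q-1}}\sum_j\ket{j}\ket{g^j}$ is maximally entangled. With $E:=\widetilde F_{q-1}-F_{q-1}$, for every $a$,
\[
P_a(\widetilde\Psi-\Psi)=\ket{a}\otimes\frac{1}{\sqrt{q-1}}\sum_{j}\bra{a}E\ket{j}\,\ket{g^j},
\qquad
\|P_a(\widetilde\Psi-\Psi)\|^2=\frac{\|E^\dagger\ket{a}\|^2}{q-1}\le\frac{\eps_{\rm qft}^2}{q-1}.
\]
This is exactly the estimate in the paper's proof, and it gives two things:
\begin{itemize}
\item For every outcome, $\|P_a(\widetilde\Psi-\Psi)\|/\|P_a\Psi\|\le\eps_{\rm qft}$. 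Hence the post-measurement error is at most $2\eps_{\rm qft}/(1-\eps_{\rm qft})\le\eps_{\rm seed}$ when $\eps_{\rm qft}=\eps_{\rm seed}/4$.
\item Simultaneously, $(1-\eps_{\rm qft})^2/(q-1)\le\|P_a\widetilde\Psi\|^2\le(1+\eps_{\rm qft})^2/(q-1)$. This gives acceptance probability at least $\frac{\varphi(q-1)}{q-1}(1-\eps_{\rm qft})^2$, and thus your repetition count $R$.
\end{itemize}
With this inserted, the rest of your construction (exponentiate, Fourier transform, measure, apply $x\mapsto x^a$ for unit $a$) is the paper's.

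One small slip concerns $\ket{\chi^0}$. It is not obtained by pushing $\frac{1}{\sqrt{q-1}}\sum_j\ket{j}$ through $\ket{j}\ket{1}\mapsto\ket{j}\ket{g^j}$, because that leaves the $j$ register entangled, and erasing it would require a discrete logarithm. Your alternative, exact amplitude amplification onto the nonzero elements of $\F_q$, is the correct one and is what the paper does.
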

\begin{proof}
    The procedure consists of independent attempts, repeated a fixed number $R$ of times.
    We follow the same procedure as \cite{van2003quantum,dam2004Quantum} for preparing $\ket{\chi^0}$ and $\ket{\chi^1}$, but here we analyze the costs in more detail, especially regarding the implementation error of QFT.
    
    The preparation of $\ket{\chi^0}$ proceeds by amplifying a standard uniform superposition into the target subspace representing $\F_q^\times$.
    Specifically, initialize the register $\ket{0}^{\otimes b_q}$ with $b_q = \lceil \log_2 q \rceil$ qubits and apply the Hadamard gate $H^{\otimes b_q}$ to generate the uniform superposition over all $2^{b_q}$ computational basis states.
    Then, $\ket{\chi^0}$ is deterministically obtained by exact amplitude amplification with the initial amplitude $a = \sqrt{(q-1)/2^{b_q}}$.
    Since the marking oracle $\ket{t}\ket{y}\ket{0^a} \mapsto \ket{t}\ket{y \oplus \mathbf{1}[\mathrm{enc}(t) \in G]}\ket{0^a}$, where $\mathrm{enc}(\cdot)$ is a bijective function mapping from a finite field element to a label in $\{0,...,q-1\}$ (assuming a zero element corresponds to $0$ label) and $G = \{1,...,q-1\}$, costs $O(b_q)$ Toffoli gates, the total gate complexity for preparing $\ket{\chi^0}$ is $O(\log q)$, incorporating a constant multiplicative overhead of $O(1/a) = O(1)$ from amplitude amplification. 
    
    We follow \cite[Algorithm 1]{van2003quantum} for preparing $\ket{\chi^1}$.
    First, we prepare $(1/\sqrt{q-1})\sum_{j=0}^{q-2}\ket{j}$.
    This can be done in the same way as $\ket{\chi^0}$ with a gate complexity of $O(\log q)$.
    Then we do modular exponentiation $\ket{j}\ket{0}\mapsto\ket{j}\ket{g^j}$ to obtain $\ket{\psi_1}=(1/\sqrt{q-1})\sum_{j=0}^{q-2}\ket{j}\ket{g^j}$ at a cost of $\widetilde O((\log q)^2)$ gates.
    The application of QFT over $\Z/(q-1)\Z$ (\Cref{prop:QFT_over_ZN}) to the first register yields the state $\ket{\psi_2}=(1/\sqrt{q-1})\sum_{k=0}^{q-2}\ket{k}\ket{\chi^k}$ using $O((\log q + \log(1/\eps))^2)$ gates.
    After measuring the first register, if the measurement outcome $k$ satisfies $\gcd(k,q-1)=1$, we proceed to the next step; otherwise we start over the first step.
    Finally, applying the map $P_k:\ket{x}\mapsto\ket{x^k}$, which costs $\widetilde O((\log q)^2)$ gates, we obtain the transformation $P_k\ket{\chi^k}=\ket{\chi^1}$, yielding the desired state.
    
    Let $\eps_{\rm qft}$ denote the operator-norm error of the cyclic QFT
implementation.
    Now let us assume that QFT over $\Z/(q-1)\Z$ can be implemented only approximately such that $\|\widetilde F_{q-1} - F_{q-1}\| \leq \eps_{\rm qft}$.
    We introduce
    \begin{equation}
        \ket{\tilde \psi_2} = (\widetilde F_{q-1} \otimes I) \ket{\psi_1} = \ket{\psi_2} + \ket{\Delta}, \quad \ket{\Delta} = \frac{1}{\sqrt{q-1}} \sum_{k=0}^{q-2} \sum_{j=0}^{q-2} E_{kj} \ket{k} \ket{g^j} = \sum_{k=0}^{q-2} \ket{k} \ket{\Delta_k}.
    \end{equation}
    Then, we have
    \begin{equation}
        \|\ket{\Delta_k}\|^2 = \frac{1}{q-1} \sum_{j=0}^{q-2} |E_{kj}|^2 = \frac{1}{q-1} \|(\widetilde F_{q-1} - F_{q-1})^\dagger \ket{k}\|^2 \leq \frac{1}{q-1} \|(\widetilde F_{q-1} - F_{q-1})^\dagger \|^2 \leq \frac{\eps_{\mathrm{qft}}^2}{q-1},
    \end{equation}
    For the unnormalized post-measurement state $v_k = \ket{k}((1/\sqrt{q-1})\ket{\chi^k}+\ket{\Delta_k})$ conditioned on observing $k$ in the first register, we have
    \begin{equation}
        \left\| \frac{v_k}{\|v_k\|} - \ket{k}\ket{\chi^k} \right\| \leq \frac{2\eps_{\rm qft}}{1-\eps_{\rm qft}}.
    \end{equation}
    The probability of measuring $k$, given by $\|v_k\|^2$, satisfies $(1-\eps_{\rm qft})^2/(q-1) \leq \|v_k\|^2 \leq (1+\eps_{\rm qft})^2/(q-1)$.
    Thus, the probability of measuring $k$ such that $\gcd(k,q-1)=1$ is at least \[(\varphi(q-1)/(q-1)) \cdot (1-\eps_{\rm qft})^2 \geq \frac{c (1-\eps_{\rm qft})^2}{1+\log\log(q+1)}\] for a constant $c>0$.
    Running \[R\defeq\left\lceil\frac{1+\log\log(q+1)}{c (1-\eps_{\rm qft})^2}\log\frac{1}{\gamma_\chi}\right\rceil\] independent attempts guarantees that the failure probability is bounded by $\gamma_\chi$.
    For one attempt, inserting $\eps_{\rm qft}=\eps_\mathrm{seed}/4$ to satisfy the desired precision, the gate complexity of obtaining $\ket{\chi^1}$ with state error in $l_2$-norm at most $\eps_\mathrm{seed}$ is
    \begin{equation}
        \widetilde O\left((\log q)^2 + (\log q + \log(1/\eps_\mathrm{seed}))^2\right).
    \end{equation}
    The final complexity is obtained by multiplying this by $R$.
\end{proof}
We next combine the prepared chi states with the $k$-dependent phase oracle.
The ideal phase oracle has $\ket{\chi^0}\ket{\chi^1}$ as an eigenvector, and
the corresponding eigenvalue is precisely $\Phi(k)$.  The following theorem
bounds both the bias caused by approximate state preparation and approximate
finite-field QFTs, and the gate complexity of one Hadamard-test trial.
\begin{lem}[Quantum per-sample cost; Step 10 in \Cref{alg:snf-hadamard-estimator}]\label{lem:quantum-per-sample}
    Let $\eps_{\rm seed},\eps_{\rm ph},\gamma_\chi\in(0,1)$.
    Assume that a generator $g\in\F_q^\times$ is given, and that each
    additive finite-field QFT $F_{\F_q}$ used in the phase oracle is
    implemented by a unitary $\widetilde F_{\F_q}$ with
    $
        \|\widetilde F_{\F_q}-F_{\F_q}\|\le \eps_{\rm ph}/s .
    $
    Then \Cref{alg:hadamard-test} returns $Y(k)\in[-1,1]$ such that,
    conditioned on successful state preparation $\mathcal S_\chi$,
    \[
        \left|
        \mathbb E[Y(k)\mid k,\mathcal S_\chi]
        -
        \operatorname{Re}T(k)
        \right|
        \le
        q^{-z(k)/2}(\eps_{\rm ph}+2\eps_{\rm seed})
        \le
        \eps_{\rm ph}+2\eps_{\rm seed}.
    \]
    In particular, if
    $
        \eps_{\rm ph}+2\eps_{\rm seed}\le \eta_\Xi/2,
    $
    then
    \[
        \left|
        \mathbb E[Y(k)\mid k,\mathcal S_\chi]
        -
        \operatorname{Re}T(k)
        \right|
        \le \eta_\Xi/2.
    \]
    The gate complexity is
    \[
        C_{\rm samp}^{\rm q}
        =
        T_\psi(q,\eps_{\rm seed},\gamma_\chi)
        +
        \widetilde O\!\left(
            s\left(
                (\log q)^2
                +
                \log q\,(\log q+\log(s/\eps_{\rm ph}))^2
            \right)
        \right),
    \]
    where the first term is the cost for preparing $\ket{\tilde\psi}$ from \Cref{lem:chi-state-prep}.
\end{lem}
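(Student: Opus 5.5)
We construct the ideal phase oracle $U_k$ explicitly, show that $\ket{\chi^0}\ket{\chi^1}$ is an eigenvector with eigenvalue $\Phi(k)$, and then run the standard Hadamard-test bias analysis with perturbation bounds.

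\textbf{Construction of $U_k$.} For each $j$ with $k_j\ne 0$ we build a unitary $V_j$ on the second register. It first applies the multiplicative shift $\ket{x}\mapsto\ket{\lambda_j x}$. It then implements the Gauss-sum operator in van Dam's style: for a character state $\ket{\chi^1}$ and an integer $k_j$, the map $\ket{x}\mapsto\ket{x^{k_j}}$-type manipulation, or equivalently conjugation of the additive QFT $F_{\F_q}$ by the multiplication map, has $\ket{\chi^{1}}$-type states as eigenvectors. The eigenvalues are $\Gauss(\chi^{-k_j})/\sqrt q$ times a character value. Concretely, we would use the identity that $F_{\F_q}$ maps $\ket{\chi^a}$ (extended by $0$ at $0$) to $\frac{\Gauss(\chi^{-a})}{\sqrt q}\ket{\chi^{a}}$-conjugate component, up to the known relation $\sum_x\chi^a(x)\Theta(xy)=\overline{\chi^a}(y)\Gauss(\chi^a)$. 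Combining this with $\ket{x}\mapsto\ket{x^{-1}}$ returns the eigenvector to itself. The register $\ket{\chi^0}$ is used to absorb the amplitude on $0$ or as a scratch/ancilla register, as in van Dam. Multiplying the $V_j$ together gives $U_k=\prod_j V_j$, defined as in \eqref{eq:phase_oracle}, with eigenvalue $\prod_{k_j\ne0}(\Gauss(\chi^{-k_j})/\sqrt q)\chi^{k_j}(a_j)=\Phi(k)$.

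\textbf{Bias analysis.} In the ideal Hadamard test with eigenvector input, $\Pr[b=0]=(1+\operatorname{Re}\Phi(k))/2$, so $\mathbb E[X]=\operatorname{Re}\Phi(k)$. Therefore $\mathbb E[Y]=(-1)^zq^{-z/2}\operatorname{Re}\Phi(k)=\operatorname{Re}T(k)$. In the approximate version, we replace $U_k$ by $\widetilde U_k$, where each of the at most $s$ additive QFTs carries error $\eps_{\rm ph}/s$. Telescoping, as in the proof of \Cref{prop:finite_field_QFT}, then gives $\|\widetilde U_k-U_k\|\le\eps_{\rm ph}$; the arithmetic maps are exact. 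We replace the input state by $\ket{\widetilde\psi}$ with $\|\ket{\widetilde\psi}-\ket{\psi}\|\le\eps_{\rm seed}$. We then use $\mathbb E[X]=\operatorname{Re}\bra{\phi}\widetilde U_k\ket{\phi}$ for the input $\ket{\phi}$. Two triangle inequalities give
\[
|\bra{\widetilde\psi}\widetilde U_k\ket{\widetilde\psi}-\bra{\psi}U_k\ket{\psi}|\le\eps_{\rm ph}+2\eps_{\rm seed}.
\]
Multiplying by $q^{-z/2}$ yields the stated bound. The second claim follows immediately under the hypothesis $\eps_{\rm ph}+2\eps_{\rm seed}\le\eta_\Xi/2$.

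\textbf{Gate count.} The state preparation costs $T_\psi$ by \Cref{lem:chi-state-prep}. Each $V_j$ uses $O(1)$ finite-field multiplications and inversions or exponentiations, costing $\widetilde O((\log q)^2)$ gates with reversible arithmetic. It also uses $O(1)$ approximate $F_{\F_q}$, costing $O(\log q(\log q+\log(s/\eps_{\rm ph}))^2)$ by \Cref{prop:finite_field_QFT}. Controlling $U_k$ adds only constant overhead per gate. Summing over at most $s$ indices gives the stated complexity.

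\textbf{Main obstacle.} The delicate step is the first one: verifying precisely that the composite $V_j$ built from the additive QFT and multiplicative maps has $\ket{\chi^1}$ (paired with $\ket{\chi^0}$) as an exact eigenvector with eigenvalue $\Gauss(\chi^{-k_j})\chi^{k_j}(a_j)/\sqrt q$. This must hold despite the QFT leaking amplitude onto $\ket{0}$ and the sign $\chi(-1)$ conventions. I would first try to check this against van Dam's construction of the Gauss-sum eigen-operator, and if needed fix the conjugation order of $\lambda_j$ and the inversion map so that the eigenvalue matches the definition of $\Phi(k)$ exactly.
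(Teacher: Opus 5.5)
Your Hadamard-test bias analysis and your gate count match the paper. That includes the telescoping bound $\|\widetilde U_k-U_k\|\le (s-z(k))\eps_{\rm ph}/s\le\eps_{\rm ph}$, the two triangle inequalities giving $\eps_{\rm ph}+2\eps_{\rm seed}$, and the rescaling by $(-1)^{z}q^{-z/2}$. The gap is the step you flag as the main obstacle, and it is the real content of the lemma. You never construct a unitary with $\ket{\chi^0}\ket{\chi^1}$ as an eigenvector and eigenvalue $\Gauss(\chi^{-k_j})\chi^{k_j}(a_j)/\sqrt q$, and the mechanisms you sketch do not give one. On the $\ket{\chi^1}$ register alone:
\begin{itemize}
\item $F_{\F_q}\ket{\chi^a}=\frac{\Gauss(\chi^a)}{\sqrt q}\ket{\chi^{-a}}$ for $a\not\equiv 0$.
\item Inversion sends $\ket{\chi^a}$ to $\ket{\chi^{-a}}$.
\item Multiplication by $c$ only contributes the phase $\chi(c)^{-1}$.
\end{itemize}
So any composition of these operators applied to $\ket{\chi^1}$ only ever produces $\Gauss(\chi^{\pm1})$, independently of $k_j$. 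Conjugating $F_{\F_q}$ by a multiplication map $M_c$ gives $M_{c^2}F_{\F_q}$, which sends $\ket{\chi^1}$ to a multiple of $\ket{\chi^{-1}}$, so it is not an eigen-operator either. Reaching $\ket{\chi^{-k_j}}$ from $\ket{\chi^1}$ by a power map $\ket{x}\mapsto\ket{x^e}$ requires $e^{-1}\equiv -k_j \pmod{q-1}$. This is impossible whenever $\gcd(k_j,q-1)>1$ (e.g.\ $k_j$ even, $q$ odd), and such $k_j$ occur routinely in $K_U(q)$; for them the power map is not even a permutation of $\F_q^\times$. Finally, $\ket{\chi^0}$ is not a scratch register for absorbing amplitude at $0$.

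The missing idea is phase kickback onto the $\ket{\chi^0}$ register, which is the register the Fourier transform acts on. Use the controlled operator $D^{-k_j}:\ket{x}\ket{y}\mapsto\ket{x}\ket{y\,x^{k_j}}$, acting trivially when $x=0$. For each $x\ne 0$ it is a permutation of the second register, so it is unitary for every $k_j$. The construction then runs as follows:
\begin{itemize}
\item Since multiplying $\ket{\chi^1}$ by $c$ yields the phase $\chi(c)^{-1}$, one gets $D^{-k_j}\ket{\chi^0}\ket{\chi^1}=\ket{\chi^{-k_j}}\ket{\chi^1}$.
\item Applying $F_{\F_q}\otimes I$ gives $\frac{\Gauss(\chi^{-k_j})}{\sqrt q}\ket{\chi^{k_j}}\ket{\chi^1}$. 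There is no amplitude on $\ket{0}$ precisely because $k_j\not\equiv 0$; this is why only indices with $k_j\ne 0$ enter $U_k$, the others being absorbed in the classical factor $(-1)^{z}q^{-z/2}$.
\item A second $D^{-k_j}$ returns $\ket{\chi^0}\ket{\chi^1}$.
\item $U^{(j)}_{\mathrm{mul}}:\ket{y}\mapsto\ket{\lambda_j y}$ on the second register contributes $\chi(\lambda_j^{-1})=\chi^{k_j}(a_j)$.
\end{itemize}
Together these give $U_k\ket{\chi^0}\ket{\chi^1}=\Phi(k)\ket{\chi^0}\ket{\chi^1}$ exactly, with no $\chi(-1)$ factors. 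With this construction in place, the rest of your argument goes through unchanged. The cost per index is two controlled exponentiate-and-multiply maps, one multiplication, and one approximate $F_{\F_q}$, which matches your count.
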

\begin{proof}
    Although we follow a similar procedure as \cite[Algorithm 1]{dam2004Quantum},
    we use one Hadamard test shot only to produce a random variable $X(k)$ such that 
    \[
        \mathbb E[X(k)\mid k]
        =
        \operatorname{Re}
        \bra{\chi^0}\bra{\chi^1}
        U_k
        \ket{\chi^0}\ket{\chi^1}
        =
        \operatorname{Re}\Phi(k),
    \]
    in the ideal case.
    For the input state $\ket{\psi}$, each trial prepares $\ket{\chi^0}$ and $\ket{\tilde\chi^1}$.
    By \Cref{lem:chi-state-prep}, one can produce $\ket{\tilde\chi^1}$ satisfying $\|\ket{\tilde\chi^1}-\ket{\chi^1}\|\leq\eps_\mathrm{seed}$.
    Thus the input state $\ket{\tilde\psi}=\ket{\chi^0}\ket{\tilde\chi^1}$ satisfies $\|\ket{\tilde\psi}-\ket{\psi}\| \leq \eps_\mathrm{seed}$.
    Now, write $\lambda_j=\lambda_j(k):=a_j^{-k_j}\in\F_q^\times$ for $k_j\ne0$.
    For the phase oracle 
    \begin{equation}\label{eq:phase_oracle}
        U_k = \left(\prod_{\substack{1\le j\le s\\ k_j\ne 0}} (I \otimes U_{mul}^{(j)}) \right) \left(\prod_{\substack{1\le j\le s\\ k_j\ne 0}} D^{-k_j} (F_{\F_{q}} \otimes I) D^{-k_j} \right),
    \end{equation}
    with the modular multiplication $U_{mul}^{(j)}:\ket{y}\mapsto\ket{y \cdot \lambda_j}$ using $\widetilde O((\log q)^2)$ gates, a division operator $D^\alpha:\ket{x}\ket{y}\mapsto\ket{x}\ket{y/x^\alpha}$ using $\widetilde O((\log q)^2)$ gates, and the additive finite-field QFT $F_{\F_q}$ in \Cref{prop:finite_field_QFT} using $O(r(\log p + \log(r/\eps))^2)$, bounded by $O(\log q(\log q + \log(1/\eps))^2)$ gates.
    Noticing that 
    \begin{equation}
        D^{-k_j} (F_{\F_{q}} \otimes I) D^{-k_j} \ket{\chi^0} \ket{\chi^1} = D^{-k_j} (F_{\F_{q}} \otimes I) \ket{\chi^{-k_j}} \ket{\chi^1}
        = D^{-k_j} \frac{G(\chi^{-k_j})}{\sqrt{q}} \ket{\chi^{k_j}} \ket{\chi^1}
        = \frac{G(\chi^{-k_j})}{\sqrt{q}} \ket{\chi^0} \ket{\chi^1},
    \end{equation}
    and
    \begin{equation}
        U_{\mathrm{mul}}^{(j)} \ket{\chi^1} = \chi(\lambda_j^{-1})\ket{\chi^1}
    =
    \chi(a_j^{k_j})\ket{\chi^1}
    =
    \chi^{k_j}(a_j)\ket{\chi^1},
    \end{equation}
    we have $U_k\ket{\chi^0}\ket{\chi^1}=\Phi(k)\ket{\chi^0}\ket{\chi^1}$.
    Substituting $F_{\F_q}$ in the phase oracle $U_k$ with $\tilde F_{\F_q}$, we can implement the approximate phase oracle $\tilde U_k$ satisfying $\|\tilde U_k - U_k\| \leq (s - z(k)) (\eps_\mathrm{ph}/s) \leq \eps_\mathrm{ph}$.
    Then we have 
    \begin{align}
        \abs{\mathbb E[\widetilde X(k)\mid k,\mathcal S_\chi]-\operatorname{Re}\Phi(k)} &= \left| \operatorname{Re} \braket{\tilde\psi|\tilde U_k|\tilde\psi} - \operatorname{Re} \braket{\psi|U_k|\psi} \right| \\
        &\leq \left|\braket{\tilde\psi|\tilde U_k|\tilde\psi}-\braket{\psi|U_k|\psi}\right| \\
        &\leq \left|\braket{\tilde\psi|(\tilde U_k-U_k)|\tilde\psi}\right| + \left|\braket{\tilde\psi|U_k|\tilde\psi}-\braket{\psi|U_k|\psi}\right| \\
        &\leq \eps_\mathrm{ph} + 2 \|\ket{\tilde\psi}-\ket{\psi}\| \\
        &\leq \eps_\mathrm{ph} + 2 \eps_\mathrm{seed}.
    \end{align}
    Since
    $
        Y(k)=(-1)^{z(k)}q^{-z(k)/2}\widetilde X(k)
    $ and $
        T(k)=(-1)^{z(k)}q^{-z(k)/2}\Phi(k),
    $
    we obtain
    \[
        \left|
        \mathbb E[Y(k)\mid k,\mathcal S_\chi]
        -
        \operatorname{Re}T(k)
        \right|
        \le
        q^{-z(k)/2}(\eps_{\rm ph}+2\eps_{\rm seed})
        \le
        \eps_{\rm ph}+2\eps_{\rm seed},
    \]
    completing the stated precision.
    Finally, the gate complexity is the sum of the chi state preparation cost and the cost of implementing the phase oracle.
    The arithmetic part contributes $\widetilde O(s\cdot(\log q)^2)$, while the finite-field QFTs contribute $\widetilde O\left( s \log q(\log q + \log(s/\eps_\mathrm{ph}))^2\right)$.
    The Hadamard test uses a controlled version of $\widetilde U_k$. 
    This changes the gate count only by a constant-factor overhead, maintaining the same implementation error $\eps_{\rm ph}$.
    Together with \Cref{lem:chi-state-prep}, this gives the stated per-sample complexity.
\end{proof}
We are now ready to assemble the preceding estimates and prove the primitive
cost statement used in the main text.

\noindent
\textbf{Proof of \Cref{prop:implemented-primitives}.}
We match the four items in \Cref{prop:implemented-primitives} with the
estimates proved above.

For item~(1), \Cref{lem:classical-preprocessing-cost} gives
\[
T_{\rm pre}^{\rm cl}
=
\widetilde O\!\left(
m^6L+s\log q+n\log q+\log(1/\eps)+\log(1/\delta_{\rm MC})
\right).
\]
Since $m=\max\{n+1,s\}$ and $L=\lceil\log_2(1+q+H)\rceil$, the lower-order
terms are absorbed into the displayed $\widetilde O(m^6L)$ preprocessing
bound in the statement of \Cref{prop:implemented-primitives}.

For item~(2), \Cref{lem:classical-per-sample} gives the expected classical
per-sample controller cost
\[
C_{\rm samp}^{\rm cl}
=
\widetilde O\!\left(
s^2\log q+s(\log q)^2+\log M
\right).
\]
The term $\log M$ accounts only for the final accumulator update in the
Monte Carlo loop.  The cost of sampling $k\in K_U(q)$ and computing the
$k$-dependent coefficient constants is therefore
\[
\widetilde O\!\left(
s^2\log q+s(\log q)^2
\right),
\]
as claimed.

For item~(3), \Cref{lem:generator_finding} gives a generator of $\F_q^\times$ with failure
probability at most $\gamma_{\rm gen}$ and quantum gate complexity
\[
\widetilde O\!\left((\log q)^3\log(1/\gamma_{\rm gen})\right).
\]
The additional classical verification and field-arithmetic work in the proof
is of the same polylogarithmic order and is included in the hybrid
bookkeeping.

It remains to prove item~(4).  Apply \Cref{alg:hadamard-test} with
\[
\eps_{\rm ph}=\eps_{\rm seed}=\eta_\Xi/6.
\]
Then
\[
\eps_{\rm ph}+2\eps_{\rm seed}
=
\eta_\Xi/2.
\]
By \Cref{lem:quantum-per-sample}, conditioned on successful preparation of
$\ket{\widetilde\psi}$, the Hadamard-test outcome $X(k)$ satisfies
\[
\left|
\mathbb E[X(k)\mid k]
-
\operatorname{Re}\Phi(k)
\right|
\le
\eps_{\rm ph}+2\eps_{\rm seed}
=
\eta_\Xi/2.
\]
Since \Cref{alg:hadamard-test} returns
\[
Y(k)=(-1)^{z(k)}q^{-z(k)/2}X(k),
\]
whereas
\[
T(k)=(-1)^{z(k)}q^{-z(k)/2}\Phi(k),
\]
and $q^{-z(k)/2}\le 1$, we obtain
\[
\left|
\mathbb E[Y(k)\mid k,\mathcal S_\chi]
-
\operatorname{Re}T(k)
\right|
\le
\eta_\Xi/2.
\]
This is the bias bound asserted in item~(4).

Finally, substituting
$\eps_{\rm ph}=\eps_{\rm seed}=\eta_\Xi/6$ and
$\gamma_\chi=\gamma$ into \Cref{lem:quantum-per-sample} gives
\[
C_{\rm samp}^{\rm q}
=
\widetilde O\!\left(
(\log q+\log(1/\eta_\Xi))^2
(\log(1/\eta_\Xi)+\log(1/\gamma))
+
s\log q(\log q+\log s+\log(1/\eta_\Xi))^2
\right).
\]
Here the first term comes from chi-state preparation, and the second from the
$k$-dependent phase oracle and its controlled use in the Hadamard test.
This proves item~(4), and hence completes the proof of
\Cref{prop:implemented-primitives}.
\qed

\section{Hardness for approximate counting for ordinary polynomials}
\label[appendix]{app:Similar results for ordinary polynomials}

We consider an analog of \Cref{prob:torus-approx} for ordinary polynomials.
For $f\in\F_q[x_1,\ldots,x_n]$, let us define $U$ and $\rho$ in a similar way.
We write 
\[
  N(f):=\#\{z\in\F_q^n:f(z)=0\}.
\]
\begin{prob}
\label{prob:rank_sensitive_approx_en}
Given a finite field $\F_q$, integers $n,s\ge 1$, a list $\bigl((u_1,a_1),\ldots,(u_s,a_s)\bigr)$ where
$u_j\in\Z_{\geq 0}^n$ and $a_j\in\F_q^\times$, which defines an ordinary polynomial $f$, and  parameters $0<\eps,\delta<1$,
 output a number $\widehat N(f)\in\Q$ such that
\[
\Pr\!\left[
  \left|\widehat N(f)-N(f)\right|
  \le \eps q^{n+s/2-\rho}
\right]
\ge 1-\delta.
\]
\end{prob}

For a finite poset $P$, an \emph{antichain} is a subset whose distinct elements are pairwise incomparable.
\begin{thm}[Antichain counting]
\label{thm:antichain_hard_en}
Counting antichains of a finite poset is \SharpP-complete.
\end{thm}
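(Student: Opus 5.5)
This is the Provan--Ball theorem. I would prove it by a chain of parsimonious or Turing reductions from a known \SharpP-complete problem, namely counting independent sets in bipartite graphs (\#BIS). That problem is itself \SharpP-complete by Provan and Ball, via \#monotone 2-SAT or bipartite vertex covers.

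Membership in \SharpP is immediate. Given a subset $A\subseteq P$, checking that $A$ is an antichain takes $O(|A|^2)$ comparisons. The number of antichains is therefore the number of accepting witnesses of a polynomial-time verifier.

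For hardness, take a bipartite graph $G=(L\sqcup R,E)$ and build the poset $P_G$ on the ground set $L\sqcup R$. Declare $\ell<r$ exactly when $\ell\in L$, $r\in R$, and $\ell r\in E$. All other distinct pairs are incomparable. Since every chain has length at most two, this relation is automatically transitive, so $P_G$ is a valid poset of height $\le 2$, built in polynomial time. A subset $A\subseteq L\sqcup R$ is an antichain of $P_G$ iff it contains no pair $\ell<r$, that is, iff it contains no edge of $G$. This is exactly the condition that $A$ is an independent set of $G$. Hence $\#\{\text{antichains of }P_G\}=\#\{\text{independent sets of }G\}$, a parsimonious reduction, so an antichain oracle computes \#BIS with a single call.

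The main obstacle is the hardness of \#BIS itself, if one does not wish to quote it. For a self-contained proof I would reduce from \#monotone 2-SAT, which Valiant showed \SharpP-complete. A monotone 2-CNF is the same as counting vertex covers of a general graph, equivalently its independent sets. To reach bipartite graphs I would follow Provan--Ball's interpolation. Replace each edge of the graph by a gadget whose weights depend on a parameter $t$, for example a $t$-fold subdivision or a bipartite ``thickening''. The bipartite independent-set count then becomes a polynomial in the edge-type statistics of the original graph. Querying this count for $\poly$ many values of $t$ gives a Vandermonde system, and solving it recovers the count for the original graph. This interpolation step, and in particular checking that the resulting system is nonsingular, is where the real work lies. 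In the paper I would simply cite Provan--Ball for \#BIS (or directly for antichains) and present the parsimonious poset construction above as the proof.
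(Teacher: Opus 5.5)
Your proposal is correct and takes the same route as the paper. The paper just cites Provan--Ball. Your height-$2$ poset $P_G$ (with $\ell<r$ iff $\ell r\in E$, so its antichains are exactly the independent sets of $G$), together with the membership check, simply makes that ``standard consequence'' explicit. Your self-contained interpolation sketch would not stand as a proof: with several edge-type statistics the system is multivariate rather than a plain Vandermonde, so nonsingularity needs real work. Since you fall back on citing Provan--Ball, nothing is missing.
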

This is a standard consequence of the counting-complexity results of Provan--Ball \cite{ProvanBall1983}.  Cattani--Dickenstein also use poset antichain counting as the basic hardness source in their \SharpP-completeness result for counting solutions of binomial complete intersections \cite[Section~4 and Theorem~4.3]{CattaniDickenstein2007}.

Below, we show that the approximation problem for ordinary polynomials is also $\#$P-hard under some reductions.

\begin{thm}
\label{thm:affine_hardness_en}
For every fixed $0<\eps<1/4$, \Cref{prob:rank_sensitive_approx_en}
with $\delta=1/3$ is $\SharpP$-hard under randomized polynomial-time
Turing reductions. 
\end{thm}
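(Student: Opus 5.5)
We reduce from signed partition counting (\Cref{thm:signed_partition_hard_en}), reusing the polynomial from the proof of \Cref{thm:toric_hardness_en} but now counting its zeros on all of $\F_q^n$; we do not need antichain counting (\Cref{thm:antichain_hard_en}). Given positive integers $a_1,\ldots,a_n$ with $A_0=\sum_i a_i$, take an odd prime $q$ and set $h=(q-1)/2$. Consider the ordinary polynomial
\[
f_a=\sum_{i=1}^n \bar a_i x_i^h\in\F_q[x_1,\ldots,x_n].
\]
Its augmented support matrix is the same $U$ as in \Cref{thm:toric_hardness_en}, so $s=\rho=n$. The allowed error is therefore $\eps q^{n+s/2-\rho}=\eps q^{n/2}$, and since $h\ge\sqrt q$ and $\eps<1/4$ this is less than $h^n/4$.

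\textbf{Counting the affine zeros.} For $z\in\F_q$, the value $z^h$ lies in $\{0,1,-1\}$. The value $0$ occurs only for $z=0$, while each of $\pm1$ has exactly $h$ preimages. Grouping points $z\in\F_q^n$ by the vector $\tau=(z_i^h)_i\in\{0,\pm1\}^n$ gives
\[
N(f_a)=\sum_{\tau\in\{0,\pm1\}^n,\ \sum_i\bar a_i\tau_i=0} h^{\#\{i:\tau_i\neq0\}}.
\]
If $q>2A_0$, the condition $\sum_i\bar a_i\tau_i=0$ in $\F_q$ is equivalent to $\sum_i a_i\tau_i=0$ in $\Z$. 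The terms with all $\tau_i\neq0$ contribute exactly $h^n\SP(a_1,\ldots,a_n)$. Every remaining term has weight at most $h^{n-1}$, and there are fewer than $3^n$ of them. Hence
\[
0\le N(f_a)-h^n\SP(a_1,\ldots,a_n)\le 3^n h^{n-1}.
\]

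\textbf{Choosing $q$ and rounding.} This step is the crux, since the boundary points $z_i=0$ produce a genuine bias rather than a negligible error, unlike the torus case. We force $h>4\cdot 3^n$ by choosing $q$ a random prime in $(C,2C)$ with $C=\max\{2A_0,\,8\cdot3^n+10\}$. Then $\log q=O(n+\log A_0)$, which is polynomial in the input size. Such a prime is found in randomized polynomial time by Bertrand's postulate, the prime number theorem, and AKS primality testing, exactly as in \Cref{thm:toric_hardness_en}.

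Suppose the approximation oracle for \Cref{prob:rank_sensitive_approx_en} succeeds, which happens with probability at least $2/3$. Then
\[
\left|\frac{\widehat N(f_a)}{h^n}-\SP(a_1,\ldots,a_n)\right|<\frac14+\frac{3^n}{h}<\frac12 .
\]
So rounding $\widehat N(f_a)/h^n$ to the nearest integer recovers $\SP(a_1,\ldots,a_n)$ exactly.

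\textbf{Bit-size check and conclusion.} The instance has exponent entries $h$ of bit length $O(\log q)$, so it is written in polynomial time. Combining the reduction above with \Cref{thm:signed_partition_hard_en} gives the claimed $\SharpP$-hardness under randomized polynomial-time Turing reductions.
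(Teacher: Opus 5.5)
Your argument is correct, but it takes a different route from the paper's.

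\textbf{The paper's route.} The paper does not reuse $f_a$. It reduces from antichain (equivalently, order-ideal) counting via
\[
H_P(x,y)=\sum_a y_a\bigl(x_a-x_a^2\textstyle\prod_{b\prec a}x_b\bigr)
\]
in $2R$ variables. For this polynomial $n=s=\rho=2R$, and the affine count is \emph{exactly} $q^{2R-1}+A(P)(q-1)q^{R-1}$. Taking the fixed field $q=3$, the allowed error $\eps\,3^R$ is less than half of the step $(q-1)q^{R-1}=2\cdot 3^{R-1}$, so rounding recovers $A(P)$ with no bias term to control.

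\textbf{Your route.} You keep the signed-partition instance and the polynomial $\sum_i \bar a_i x_i^h$ from the torus proof. You accept the bias $0\le N(f_a)-h^n\SP(a_1,\ldots,a_n)\le 3^nh^{n-1}$ coming from the coordinate hyperplanes, and you remove it by forcing $h>4\cdot 3^n$. Your estimates all check out:
\begin{itemize}
\item the fibre sizes of $z\mapsto z^h$;
\item the $3^n-2^n$ boundary sign patterns;
\item $h\ge\sqrt q$ for $q>10$;
\item $\log q=O(n+\log A_0)$, so the instance has polynomial bit length.
\end{itemize}

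\textbf{What each buys.} Your route buys economy: one $\#$P-hard source and one construction serve both the torus and the affine theorems, with no appeal to Provan--Ball.

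The paper's route buys a stronger statement in three ways. First, hardness already holds over the fixed field $\F_3$, which matters because relative-error approximation is known to be tractable over fixed small fields. Second, the reduction needs no random prime search; the only randomness is the oracle's. Third, the exponent matrix has $\|U\|_\infty\le 2$, so the hardness comes from unbounded rank alone. Your instance, by contrast, needs $q$ exponential in $n$ and exponents of size $\Theta(q)$.
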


\begin{proof}
A subset $I\subseteq P$ of a finite poset is an order ideal if $a\in I$ and $b\le a$ imply $b\in I$.  For finite posets, order ideals are in bijection with antichains: an order ideal maps to its set of maximal elements, and an antichain $A$ maps to the order ideal generated by $A$.  Thus order-ideal counting and antichain counting are equivalent.

We reduce from antichain counting.  Let $P=\{1,\ldots,R\}$ be a finite poset, numbered so that $b\prec a$ implies $b<a$.  For each $a\in P$, introduce a variable $x_a$ and define
\[
  p_a(x):=x_a-x_a^2\prod_{b\prec a}x_b,
\]
with the empty product interpreted as $1$.

The common zeros of the system $p_a(x)=0$ are in bijection with the order ideals of $P$.  Indeed, by induction in the chosen order, every solution is $0$-$1$ valued.  Moreover, if $x_a=1$, then all predecessors $b\prec a$ must satisfy $x_b=1$, so the set $\{a:x_a=1\}$ is an order ideal.  Conversely, the indicator vector of any order ideal satisfies all equations.

Introduce auxiliary variables $y_1,\ldots,y_R$ and set
\[
  H_P(x,y):=
  \sum_{a=1}^R y_a p_a(x)
  =
  \sum_{a=1}^R
  \left(y_ax_a-y_ax_a^2\prod_{b\prec a}x_b\right).
\]
This is a polynomial in $2R$ variables.  Each $a$ contributes two monomials, and different values of $a$ have different $y_a$-factors, so there are no collisions; hence $s=2R$.

For a fixed $x\in\F_q^R$, if $p_1(x)=\cdots=p_R(x)=0$, then $H_P(x,y)=0$ for all $y\in\F_q^R$.  Otherwise $H_P(x,y)=0$ is a nonzero linear equation in the $y$ variables and has $q^{R-1}$ solutions.  If $A(P)$ denotes the number of order ideals of $P$, then
\[
  N(H_P)=A(P)q^R+(q^R-A(P))q^{R-1}
  =q^{2R-1}+A(P)(q-1)q^{R-1}.
\]

We next compute the support rank.  For each $a$, let $u_a^{(0)}$ and $u_a^{(1)}$ be the two exponent columns corresponding to the two monomials in $y_ap_a(x)$.  Their difference $d_a:=u_a^{(1)}-u_a^{(0)}$
has no $y$-coordinates and has $x$-coordinates $e_a+\sum_{b\prec a}e_b$.
The chosen ordering makes these vectors lower triangular with diagonal entries $1$, hence they are linearly independent.  The columns $u_1^{(0)},\ldots,u_R^{(0)}$ have distinct standard-basis $y$-coordinates and are independent from the $d_a$.  Therefore the support columns have rank $2R$.  The augmented top row is the sum of the $y$-rows, because every monomial contains exactly one $y_a$.  Thus $n=s=\rho=2R$.

Now set $q=3$.  The allowed additive error is
$\eps q^{n+s/2-\rho}=\eps q^R$.
Increasing $A(P)$ by $1$ changes $N(H_P)$ by
$\Delta=(q-1)q^{R-1}$.
Since $0<\eps<1/4$ and $q=3$, we have $\eps q^R<\frac12(q-1)q^{R-1}$.
Thus an approximation $\widetilde N$ determines
  \[\frac{\widetilde N-q^{2R-1}}{(q-1)q^{R-1}}\]
within distance $<1/2$ of the integer $A(P)$, so rounding recovers
$A(P)$ exactly whenever the approximation oracle succeeds.  Since the
oracle is invoked with $\delta=1/3$, the randomized reduction succeeds
with probability at least $2/3$.  By \Cref{thm:antichain_hard_en}, this
proves $\SharpP$-hardness.
\end{proof}

\bibliographystyle{plainnat}
\bibliography{main}

@article{SperberVoight2013,
  title={Computing zeta functions of nondegenerate hypersurfaces with few monomials},
  author={Sperber, Steven and Voight, John},
  journal={LMS Journal of Computation and Mathematics},
  volume={16},
  pages={9--44},
  year={2013},
  publisher={London Mathematical Society}
}

@article{AdolphsonSperber1989,
  author  = {Adolphson, Alan and Sperber, Steven},
  title   = {Exponential sums and {Newton} polyhedra: cohomology and estimates},
  journal = {Annals of Mathematics},
  year    = {1989},
  volume  = {130},
  number  = {2},
  pages   = {367--406}
}

@article{Schoof1985,
  title={Elliptic curves over finite fields and the computation of square roots mod $p$},
  author={Schoof, Ren{\'e}},
  journal={Mathematics of computation},
  volume={44},
  number={170},
  pages={483--494},
  year={1985}
}

@article{Kedlaya2001,
  author  = {Kedlaya, Kiran S.},
  title   = {Counting points on hyperelliptic curves using {Monsky--Washnitzer} cohomology},
  journal = {Journal of the Ramanujan Mathematical Society},
  year    = {2001},
  volume  = {16},
  number  = {4},
  pages   = {323--338}
}

@incollection{LauderWan2008,
  author    = {Lauder, Alan G. B. and Wan, Daqing},
  title     = {Counting points on varieties over finite fields of small characteristic},
  booktitle = {Algorithmic Number Theory: Lattices, Number Fields, Curves and Cryptography},
  editor    = {Buhler, J. P. and Stevenhagen, P.},
  series    = {MSRI Publications},
  volume    = {44},
  pages     = {579--612},
  publisher = {Cambridge University Press},
  year      = {2008}
}

@article{Newman1971SNF,
  author  = {Newman, Morris},
  title   = {On the {S}mith normal form},
  journal = {Journal of Research of the National Bureau of Standards, Section B},
  year    = {1971},
  volume  = {75B},
  number  = {1--2},
  pages   = {81--84}
}

@inproceedings{Storjohann1996,
  title={Near optimal algorithms for computing Smith normal forms of integer matrices},
  author={Storjohann, Arne},
  booktitle={Proceedings of the 1996 international symposium on Symbolic and algebraic computation},
  pages={267--274},
  year={1996}
}

@book{LidlNiederreiter1997,
  author    = {Rudolf Lidl and Harald Niederreiter},
  title     = {Finite Fields},
  edition   = {2nd},
  publisher = {Cambridge University Press},
  year      = {1997}
}

@book{BerndtEvansWilliams1998,
 author = {Berndt, Bruce C. and Evans, Ronald J. and Williams, Kenneth S.},
 title = {Gauss and {Jacobi} sums},
 fseries = {Canadian Mathematical Society Series of Monographs and Advanced Texts},
 series = {Can. Math. Soc. Ser. Monogr. Adv. Texts},
 year = {1998},
 publisher = {New York, NY: John Wiley \& Sons},
}

@inproceedings{Williams2018Counting,
  author    = {Williams, R. Ryan},
  title     = {Counting Solutions to Polynomial Systems via Reductions},
  booktitle = {1st Symposium on Simplicity in Algorithms ({SOSA} 2018)},
  series    = {OASIcs},
  volume    = {61},
  pages     = {6:1--6:15},
  year      = {2018},
  publisher = {Schloss Dagstuhl--Leibniz-Zentrum f{\"u}r Informatik}
}

@article{LangWeil1954,
  author  = {Lang, Serge and Weil, Andr{\'e}},
  title   = {Number of points of varieties in finite fields},
  journal = {American Journal of Mathematics},
  year    = {1954},
  volume  = {76},
  number  = {4},
  pages   = {819--827}
}

@article{CafureMatera2006,
  author  = {Cafure, Antonio and Matera, Guillermo},
  title   = {Improved explicit estimates on the number of solutions of equations over a finite field},
  journal = {Finite Fields and Their Applications},
  year    = {2006},
  volume  = {12},
  number  = {2},
  pages   = {155--185}
}

@incollection{GhorpadeLachaud2002,
  title={Number of solutions of equations over finite fields and a conjecture of {L}ang and {W}eil},
  author={Ghorpade, Sudhir R and Lachaud, Gilles},
  booktitle={Number theory and discrete mathematics},
  pages={269--291},
  year={2002},
  publisher={Springer}
}

@inproceedings{vonZurGathenKarpinskiShparlinski1997,
  title={Counting curves and their projections},
  author={Von Zur Gathen, Joachim and Karpinski, Marek and Shparlinski, Igor},
  booktitle={Proceedings of the twenty-fifth annual ACM symposium on Theory of Computing (STOC)},
  pages={805--812},
  year={1993}
}

@inproceedings{lokshtanov2017beating,
  title={Beating brute force for systems of polynomial equations over finite fields},
  author={Lokshtanov, Daniel and Paturi, Ramamohan and Tamaki, Suguru and Williams, Ryan and Yu, Huacheng},
  booktitle={Proceedings of the Twenty-Eighth {A}nnual {ACM-SIAM} {S}ymposium on {D}iscrete {A}lgorithms (SODA)},
  pages={2190--2202},
  year={2017},
  organization={SIAM}
}

@article{schoof1995counting,
  title={Counting points on elliptic curves over finite fields},
  author={Schoof, Ren{\'e}},
  journal={Journal de th{\'e}orie des nombres de Bordeaux},
  volume={7},
  number={1},
  pages={219--254},
  year={1995}
}

@article{wan2008algorithmic,
  title={Algorithmic theory of zeta functions over finite fields},
  author={Wan, Daqing},
  journal={Algorithmic number theory: lattices, number fields, curves and cryptography},
  volume={44},
  pages={551--578},
  year={2008},
  publisher={Cambridge University Press Cambridge}
}

@phdthesis{KrawitzThesis,
  author = {Krawitz, Michael},
  title  = {FJRW Rings and Landau--Ginzburg Mirror Symmetry},
  school = {University of Michigan},
  year   = {2010},
  url    = {https://deepblue.lib.umich.edu/handle/2027.42/77910}
}

@article{AdolphsonSperber1990,
  author  = {Adolphson, Alan and Sperber, Steven},
  title   = {Exponential Sums on {$(\mathbb{G}_m)^n$}},
  journal = {Inventiones Mathematicae},
  volume  = {101},
  pages   = {63--79},
  year    = {1990}
}

@article{JerrumSinclairVigoda2004,
  title={A polynomial-time approximation algorithm for the permanent of a matrix with nonnegative entries},
  author={Jerrum, Mark and Sinclair, Alistair and Vigoda, Eric},
  journal={Journal of the {ACM} ({JACM})},
  volume={51},
  number={4},
  pages={671--697},
  year={2004},
  publisher={ACM New York, NY, USA}
}

@article{karpinski1993approximating,
  title={Approximating the number of zeroes of a {GF}[2] polynomial},
  author={Karpinski, Marek and Luby, Michael},
  journal={Journal of Algorithms},
  volume={14},
  number={2},
  pages={280--287},
  year={1993},
  publisher={Elsevier}
}

@inproceedings{grigoriev1991approximation,
  title     = {An Approximation Algorithm for the Number of Zeros of Arbitrary Polynomials over {GF}[$q$]},
  author    = {Grigoriev, Dima and Karpinski, Marek},
  booktitle = {Proceedings of the 32nd Annual Symposium on Foundations of Computer Science ({FOCS})},
  pages     = {662--669},
  year      = {1991},
  publisher = {IEEE}
}

@inproceedings{dell2025solving,
  title={Solving polynomial equations over finite fields},
  author={Dell, Holger and Haak, Anselm and Kallmayer, Melvin and Wennmann, Leo},
  booktitle={Proceedings of the 2025 Annual ACM-SIAM Symposium on Discrete Algorithms (SODA)},
  pages={2779--2803},
  year={2025},
  organization={SIAM}
}

@inproceedings{bjorklund2019solving,
  title={Solving systems of polynomial equations over $\mathrm{GF}(2)$ by a parity-counting self-reduction},
  author={Bj{\"o}rklund, Andreas and Kaski, Petteri and Williams, Ryan},
  booktitle={International Colloquium on Automata, Languages, and Programming},
  pages={1--13},
  year={2019},
  organization={Schloss Dagstuhl-Leibniz-Zentrum f{\"u}r Informatik}
}

@inproceedings{dinur2021improved,
  title={Improved algorithms for solving polynomial systems over $\mathrm{GF}(2)$ by multiple parity-counting},
  author={Dinur, Itai},
  booktitle={Proceedings of the 2021 {ACM-SIAM} {S}ymposium on {D}iscrete {A}lgorithms ({SODA})},
  pages={2550--2564},
  year={2021},
  organization={SIAM}
}

@inproceedings{huang1996solving,
  title={Solving systems of polynomial congruences modulo a large prime},
  author={Huang, Ming-Deh and Wong, Yiu-Chung},
  booktitle={Proceedings of 37th {C}onference on {F}oundations of {C}omputer {S}cience (FOCS)},
  pages={115--124},
  year={1996},
  organization={IEEE}
}

@book{vonZurGathenGerhard2013,
  author    = {Joachim von zur Gathen and J{\"u}rgen Gerhard},
  title     = {Modern Computer Algebra},
  edition   = {3},
  publisher = {Cambridge University Press},
  year      = {2013}
}

@article{Stanley2016SNF,
  author  = {Stanley, Richard P.},
  title   = {Smith Normal Form in Combinatorics},
  journal = {Journal of Combinatorial Theory, Series A},
  volume  = {144},
  pages   = {476--495},
  year    = {2016}
}

@techreport{karpinski1991lhotzky,
  author      = {Karpinski, Marek and Lhotzky, Barbara},
  title       = {An {$(\epsilon,\delta)$}-Approximation Algorithm for the Number of Zeros for a Multilinear Polynomial over {GF}[$q$]},
  institution = {International Computer Science Institute},
  number      = {TR-91-022},
  address     = {Berkeley, CA},
  year        = {1991}
}

@article{Valiant1979Permanent,
  title={The complexity of computing the permanent},
  author={Valiant, Leslie G},
  journal={Theoretical computer science},
  volume={8},
  number={2},
  pages={189--201},
  year={1979},
  publisher={Elsevier}
}

@article{ProvanBall1983,
  author  = {Provan, J. Scott and Ball, Michael O.},
  title   = {The Complexity of Counting Cuts and of Computing the Probability that a Graph is Connected},
  journal = {SIAM Journal on Computing},
  volume  = {12},
  number  = {4},
  pages   = {777--788},
  year    = {1983}
}

@article{CattaniDickenstein2007,
  author  = {Cattani, Eduardo and Dickenstein, Alicia},
  title   = {Counting Solutions to Binomial Complete Intersections},
  journal = {Journal of Complexity},
  volume  = {23},
  number  = {1},
  pages   = {82--107},
  year    = {2007},
  eprint  = {math/0510520},
  archivePrefix = {arXiv}
}

@article{ChengHillWan2013,
  title={Counting value sets: algorithm and complexity},
  author={Cheng, Qi and Hill, Joshua and Wan, Daqing},
  journal={The Open Book Series},
  volume={1},
  number={1},
  pages={235--248},
  year={2013},
  publisher={Mathematical Sciences Publishers}
}

@article{Milovanov2019,
  title={\#{P}-completeness of counting roots of a sparse polynomial},
  author={Milovanov, Alexey},
  journal={Information Processing Letters},
  volume={142},
  pages={77--79},
  year={2019},
  publisher={Elsevier}
}

@article{Slavov2023,
  title={Nearly sharp {L}ang--{W}eil bounds for a hypersurface},
  author={Slavov, Kaloyan},
  journal={Canadian Mathematical Bulletin},
  volume={66},
  number={2},
  pages={654--664},
  year={2023},
  publisher={Canadian Mathematical Society}
}

@article{Kedlaya2006,
  title={Quantum computation of zeta functions of curves},
  author={Kedlaya, Kiran S},
  journal={computational complexity},
  volume={15},
  number={1},
  pages={1--19},
  year={2006},
  publisher={Springer}
}

@article{Harvey2015,
  title={Computing zeta functions of arithmetic schemes},
  author={Harvey, David},
  journal={Proceedings of the London Mathematical Society},
  volume={111},
  number={6},
  pages={1379--1401},
  year={2015},
  publisher={Oxford University Press}
}

@inproceedings{HuangWong1998,
  title={An algorithm for approximate counting of points on algebraic sets over finite fields},
  author={Huang, Ming-Deh and Wong, Yiu-Chung},
  booktitle={International Algorithmic Number Theory Symposium},
  pages={514--527},
  year={1998},
  organization={Springer}
}

@article{CostaHarveyKedlaya2019,
  title={Zeta functions of nondegenerate hypersurfaces in toric varieties via controlled reduction in p-adic cohomology},
  author={Costa, Edgar and Harvey, David and Kedlaya, Kiran},
  journal={The Open Book Series},
  volume={2},
  number={1},
  pages={221--238},
  year={2019},
  publisher={Mathematical Sciences Publishers}
}

@article{Pila1990,
  author  = {Pila, Jonathan},
  title   = {Frobenius maps of abelian varieties and finding roots of unity in finite fields},
  journal = {Mathematics of Computation},
  volume  = {55},
  number  = {192},
  pages   = {745--763},
  year    = {1990}
}

@article{shor1997PolynomialTime,
  title={Polynomial-time algorithms for prime factorization and discrete logarithms on a quantum computer},
  author={Shor, Peter W},
  journal={SIAM review},
  volume={41},
  number={2},
  pages={303--332},
  year={1999},
  publisher={SIAM}
}

@misc{van2003quantum,
  title={Quantum algorithms for estimating {G}auss sums and calculating discrete logarithms},
  author={van Dam, Wim and Seroussi, Gadiel},
  year={2003},
  url = {https://sites.cs.ucsb.edu/~vandam/gausssumdlog.pdf}
}

@article{dam2004Quantum,
  title={Quantum computing and zeroes of zeta functions},
  author={van Dam, Wim},
  journal={arXiv preprint quant-ph/0405081},
  year={2004}
}

@inproceedings{hales2000Improved,
  title = {An Improved Quantum {{Fourier}} Transform Algorithm and Applications},
  booktitle = {Proceedings 41st {{Annual Symposium}} on {{Foundations}} of {{Computer Science}}},
  author = {Hales, L. and Hallgren, S.},
  year = 2000,
  pages = {515--525},
  issn = {0272-5428}
}

@article{harvey2021Integer,
  title={Integer multiplication in time O($n$log $n$)},
  author={Harvey, David and Van Der Hoeven, Joris},
  journal={Annals of Mathematics},
  volume={193},
  number={2},
  pages={563--617},
  year={2021},
  publisher={Department of Mathematics, Princeton University Princeton, New Jersey, USA}
}

@article{KarpLubyMadras1989,
  title={Monte-Carlo approximation algorithms for enumeration problems},
  author={Karp, Richard M and Luby, Michael and Madras, Neal},
  journal={Journal of algorithms},
  volume={10},
  number={3},
  pages={429--448},
  year={1989},
  publisher={Elsevier}
}

@article{DyerGoldbergGreenhillJerrum2004,
  title={The relative complexity of approximate counting problems},
  author={Dyer, Martin and Goldberg, Leslie Ann and Greenhill, Catherine and Jerrum, Mark},
  journal={Algorithmica},
  volume={38},
  number={3},
  pages={471--500},
  year={2004},
  publisher={Springer}
}

@article{DyerGoldbergJerrum2010,
  title={An approximation trichotomy for Boolean \#{CSP}},
  author={Dyer, Martin and Goldberg, Leslie Ann and Jerrum, Mark},
  journal={Journal of Computer and System Sciences},
  volume={76},
  number={3-4},
  pages={267--277},
  year={2010},
  publisher={Elsevier}
}

@article{GoldbergGuo2017ComplexIsingTutte,
  title={The complexity of approximating complex-valued {I}sing and {T}utte partition functions},
  author={Goldberg, Leslie Ann and Guo, Heng},
  journal={computational complexity},
  volume={26},
  number={4},
  pages={765--833},
  year={2017},
  publisher={Springer}
}

@article{GalanisGoldbergHerreraPoyatos2022,
  title={The complexity of computing the sign of the {T}utte polynomial},
  author={Goldberg, Leslie Ann and Jerrum, Mark},
  journal={SIAM Journal on Computing},
  volume={43},
  number={6},
  pages={1921--1952},
  year={2014},
  publisher={SIAM}
}

@article{Kuperberg2015JonesApprox,
  title={How Hard Is It to Approximate the Jones Polynomial?},
  author={Kuperberg, Greg},
  journal={Theory OF Computing},
  volume={11},
  number={6},
  pages={183--219},
  year={2015}
}

@article{childs2010Quantum,
  title = {Quantum Algorithms for Algebraic Problems},
  author = {Childs, Andrew M. and {van Dam}, Wim},
  year = 2010,
  journal = {Reviews of Modern Physics},
  volume = {82},
  number = {1},
  pages = {1--52},
  publisher = {American Physical Society}
}

@article{cleve1998Quantum,
  author  = {Cleve, Richard and Ekert, Artur and Macchiavello, Chiara and Mosca, Michele},
  title   = {Quantum Algorithms Revisited},
  journal = {Proceedings of the Royal Society of London. Series A: Mathematical, Physical and Engineering Sciences},
  volume  = {454},
  number  = {1969},
  pages   = {339--354},
  year    = {1998}
}

@article{Kasprzyk2022laurent,
 author = {Kasprzyk, Alexander and Przyjalkowski, Victor},
 title = {Laurent polynomials in mirror symmetry: why and how?},
 fjournal = {Proyecciones},
 journal = {Proyecciones},
 issn = {0716-0917},
 volume = {41},
 number = {2},
 pages = {481--515},
 year = {2022},
 language = {English},
 zbMATH = {7535440},
 Zbl = {1496.14044}
}

@inproceedings{aaronson2019quantum,
  title     = {On the Quantum Complexity of Closest Pair and Related Problems},
  author    = {Aaronson, Scott and Chia, Nai-Hui and Lin, Han-Hsuan and Wang, Chunhao and Zhang, Ruizhe},
  booktitle = {35th Computational Complexity Conference ({CCC} 2020)},
  series    = {LIPIcs},
  volume    = {169},
  pages     = {16:1--16:37},
  year      = {2020},
  publisher = {Schloss Dagstuhl--Leibniz-Zentrum f{\"u}r Informatik}
}

@article{agrawal2004primes,
  title={{PRIMES} is in {P}},
  author={Agrawal, Manindra and Kayal, Neeraj and Saxena, Nitin},
  journal={Annals of mathematics},
  pages={781--793},
  year={2004},
  publisher={JSTOR}
}

@incollection{karp2009reducibility,
  title={Reducibility among combinatorial problems},
  author={Karp, Richard M},
  booktitle={50 Years of Integer Programming 1958-2008: from the Early Years to the State-of-the-Art},
  pages={219--241},
  year={2009},
  publisher={Springer}
}

\end{document}